\DeclareFontFamily{U}{mathx}{\hyphenchar\font45}
\DeclareFontShape{U}{mathx}{m}{n}{<5> <6> <7> <8> <9> <10> <10.95> <12> <14.4> <17.28> <20.74> <24.88> mathx10}{}
\DeclareSymbolFont{mathx}{U}{mathx}{m}{n}
\DeclareMathSymbol{\bigtimes}{1}{mathx}{"91}
\DeclareMathAlphabet{\mts}{U}{rsfs}{m}{n}
\theoremstyle{definition}
\newtheorem{Def}{Definition}
\theoremstyle{plain}
\newtheorem{Thm}{Theorem}
\newtheorem{Cor}{Corollary}
\newtheorem{Lem}{Lemma}
\newtheorem{Asm}{Assumption}
\theoremstyle{remark}
\newtheorem{Rem}{Remark}
\newtheorem{Exa}{Example}
\newcommand{\qedsymb}{\hfill\ensuremath{\blacksquare}}                 
\def\BibTeX{{\rm B\kern-.05em{\sc i\kern-.025em b}\kern-.08em T\kern-.1667em\lower.7ex\hbox{E}\kern-.125emX}}
\begin{document}

\title{Deception in Social Learning}

\author{Konstantinos Ntemos, Virginia Bordignon, \IEEEmembership{Student Member, IEEE}, \\ 
Stefan Vlaski, \IEEEmembership{Member, IEEE},
and Ali H. Sayed, \IEEEmembership{Fellow, IEEE}\thanks{
                This work was supported in part by the Swiss National Science Foundation grant 205121-184999. All authors are with the School of Engineering, Ecole Polytechnique F\'ed\'erale de Lausanne (EPFL). E-mails: konstantinos.ntemos@epfl.ch, virginia.bordignon@epfl.ch, stefan.vlaski@epfl.ch,  ali.sayed@epfl.ch. A limited short version of this work appears in \cite{Ntemos_2021}.}
}

\maketitle

\begin{abstract}
A common assumption in the social learning literature is that agents exchange information in an unselfish manner. In this work, we consider the scenario where a subset of agents aims at \textit{deceiving} the network, meaning they aim at driving the network beliefs to the wrong hypothesis. The adversaries are unaware of the true hypothesis. However, they will ``blend in" by behaving similarly to the other agents and will manipulate the likelihood functions used in the belief update process to launch {\em inferential attacks}. We will characterize the conditions under which the network is misled. Then, we will explain that it is possible for such attacks to succeed by showing that strategies exist that can be adopted by the malicious agents for this purpose. We examine both situations in which the agents have access to information about the network model as well as the case in which they do not. For the first case, we show that there always exists a way to construct fake likelihood functions such that the network is deceived regardless of the true hypothesis. For the latter case, we formulate an optimization problem and investigate the performance of the derived attack strategy by establishing conditions under which the network is deceived. We illustrate the learning performance of the network in the aforementioned adversarial setting via simulations. In a nutshell, we clarify when and how a network is deceived in the context of non-Bayesian social learning.
\end{abstract}

\begin{IEEEkeywords}
social learning, malicious agents, information diffusion, deception, diffusion strategy.
\end{IEEEkeywords}

\section{INTRODUCTION}

\IEEEPARstart{T}{he} {\em social learning} paradigm refers to the setting where agents aim at learning an underlying state, by means of their local observations as well as information provided by their neighbors. 
The communication protocol among the agents is driven by an underlying graph topology where each agent is only allowed to communicate directly with its neighbors. A special feature of social learning is that the agents do not have access to the raw data of their neighbors, but only to processed variables provided by these neighbors. Thus, social learning entails an implicit {\em inference} problem in that agents need to 
have to reason about how the received information was generated by their peers. This situation emerges in many areas of interest, ranging from technical applications to social sciences.

The notion of {\em rationality} implies that agents perform {\em Bayesian inference} to reason about the information they receive \cite{Savage_1954}. 
However, performing Bayesian inference over social networks is challenging because knowledge about each agent's observation model and the entire network topology is required. 
A series of works that relax this requirement and deviate from the fully rational model have been pursued in the literature \cite{De_Groot_1974, Jadbabaie_2012, Zhao_2012, Molavi_2018,Lalitha_2014,Lalitha_2018,Nedic_2017,Salami_2017,Virginia_2020} to provide tractable alternatives to fusing information among agents in scenarios of {\em incomplete information}.

One of the first such models is De Groot's model \cite{De_Groot_1974}, followed by other more sophisticated models \cite{Jadbabaie_2012,Zhao_2012,Molavi_2018,Lalitha_2014,Salami_2017,Lalitha_2018,Nedic_2017,Virginia_2020}. A common characteristic of the proposals in \cite{De_Groot_1974,Jadbabaie_2012,Zhao_2012,Salami_2017} is the fact that each agent fuses their own belief with the beliefs provided by their neighbors using a {\em linear combination rule} (weighted arithmetic mean). The authors in \cite{Molavi_2018} follow an axiomatic approach to formalize the theoretical grounds of non-Bayesian learning and highlight the deviations from rational learning. They impose some behavioral assumptions that lead to a social learning rule that has a {\em log-linear} fusion form. The log-linear rule \cite{Lalitha_2014} leads to faster {\em learning rate} of the unknown hypothesis \cite{Lalitha_2018}, \cite{Nedic_2017}.

The standard assumption in most of these works on social learning is that agents 
exchange their beliefs in a cooperative fashion. However, in many cases, agents may exhibit intentional {\em misbehavior} or operate in a {\em faulty} way. For this reason, social learning in {\em untrusted} environments, where malicious entities may be present, has been receiving increasing attention  \cite{Su_2018,Vyavahare_2019,Bhotto_2018,Hare_2019}. Our work is a contribution to this line of research by considering the scenario where a collection of {\em malicious} agents or {\em adversaries} aim at forcing the network to accept a wrong hypothesis by causing the network beliefs to converge to the wrong state. What is particular about our approach is that the adversaries are not assumed to have knowledge of the true hypothesis. Moreover, they follow similar protocols to the other agents in the network in an effort to ``blend in'' and evade detection.

In order to clarify the limits of performance we examine two scenarios. First, we assume the adversaries have knowledge of the network structure and other agents' observation models. In this case, we show that it is always possible for them to mislead the other agents to the wrong belief by ``blending in''. Second, we assume the adversaries do not have information about the network structure and derive an attack strategy by formulating and solving an optimization problem meant to increase the likelihood of false beliefs by the other agents. We carry out theoretical investigations of the behavior of the network under both scenarios and illustrate the results by means of computer simulations.
\subsection{Prior works}
Some of the earlier studies of robustness of distributed processing systems when some components suffer malfunction or exhibit adversarial behavior,  are the works by  \cite{Lamport_1982}, \cite{Strong_1982} on 
{\em Byzantine attacks}. Under these types of attacks, the misbehaving components are allowed to deviate from the protocol followed by all other agents in some {\em arbitrary} way and no constraints are imposed on their (mis)behavior.

There have also been works on  {\em robust detection} \cite{Huber_1965}, where adversaries aim at driving a fusion center to the wrong decision with extensions appearing in \cite{Marano_2008,Kailkhura_2013}. Reference \cite{Marano_2008} studies distributed detection with malicious agents. In this context, a fusion center receives observations from the dispersed agents to perform a binary hypothesis test. A subset of the agents are malicious and select their likelihood functions to maximize the probability of erroneous detection. The work investigates attack strategies by the adversaries. This work was extended in \cite{Kailkhura_2013}, which considered the scenario where there exists a mechanism to discriminate between normal and malicious agents. A useful discussion on robust detection and estimation can be found in \cite{Vempaty_2013}. The aforementioned works focus on the so-called {\em parallel setup} where the agents send their information to a fusion center and try to impede its detection performance. In contrast to the parallel setup, in this paper we focus on the problem of learning in the presence of malicious/faulty agents within the framework of non-Bayesian social learning over networks. This problem poses new challenges, as each agent tries to learn the state in a decentralized fashion (each agent communicates only with its neighbors and information diffuses over a graph). 

Social learning in untrusted environments, where there is a subset of agents acting in a malicious/faulty manner to mislead normal agents, is 
studied in \cite{Su_2018,Vyavahare_2019,Bhotto_2018,Hare_2019}. The work \cite{Su_2018} designs algorithms that enable the normal agents to learn the true state, despite the misinformation provided by Byzantine agents. The authors study the learning consistency of their algorithms and show that under certain assumptions on the structure of the communication graph, the normal agents successfully identify the underlying state. This work was extended to prove consistent learning under more relaxed assumptions on agents' connectivity in \cite{Vyavahare_2019}. However, the required conditions on the network topology continue to be strict. In particular, each agent is required to trim away a subset of receiving beliefs in proportion to an upper bound on the number of malicious agents in the network. This implies that the size of the local neighborhoods has to be larger than the number of malicious agents to avoid isolating any agents. Moreover, Byzantine adversaries are usually assumed to have knowledge of the system attributes, including the network model and the true hypothesis. In this work, we investigate scenarios where adversaries do not have access to such knowledge.

Apart from Byzantine attacks, which have great freedom in behaving in arbitrary manner,  other works in the literature have focused on other types of attacks. 
One such scenario, which is the subject of interest in this paper, is to simply focus on what is ``minimally'' needed to drive a network to a wrong belief or to a particular wrong belief. This cannot be guaranteed by letting adversaries send arbitrary information, since it can result in undesired non-convergent behavior. One strategy for adversarial behavior to mislead the network in a more controlled manner is to use {\em corrupted} likelihood functions. The main challenge is to design these functions. This type of adversarial behavior was considered in \cite{Bhotto_2018,Hare_2019}. For example, the work \cite{Hare_2019} discusses a variety of attack scenarios and discriminates between {\em weak} and {\em strong} malicious agents. Weak malicious agents manipulate only the likelihood functions in the belief update rule, while strong malicious agents can additionally filter out information sent from honest agents.

In this paper we consider the weak malicious agents case. We assume they follow the same protocol as the other agents so that they appear to ``blend in". This scenario also captures the case where the likelihood functions are provided to the agents exogenously and a malicious actuator intentionally modifies some agents' likelihood functions to launch a {\em stealth attack} (in the sense that a network agent may operate according to the proposed protocol but its observation model might have been corrupted/manipulated). Although this kind of attacks was discussed in \cite{Bhotto_2018,Hare_2019}, where adversaries use corrupted likelihood functions to update their beliefs, the design of these adversarial likelihood functions or their impact on the learning performance were not investigated. The focus of these works is on detection of  adversaries. Instead, our work is focused on devising adversarial strategies that force the network to be deceived and on investigating their impact on the learning performance of the network.
\subsection{Our Contribution}
This paper addresses the case where the adversaries are {\em agnostic}, meaning they do not have knowledge about the true state and aim at forcing the network beliefs to the wrong hypothesis. We assume that adversaries participate in the information diffusion process as dictated by the social learning protocol, but disseminate falsified beliefs, which are produced by the use of corrupted likelihood functions. We refer to this type of attacks as ``inferential attacks", due to the fact that adversaries have no knowledge of the true state and try to drive the network beliefs to the wrong state by manipulating their inference model (i.e, likelihood functions).

More specifically, in this work, we answer the following questions. First, under what conditions can an unguarded network (no detection mechanism is employed) be misled under inferential attacks? Second, if adversaries do not know the true state, is there a way to construct fake likelihood functions that drive the normal agents' beliefs to the wrong state, given some knowledge of the network properties? Finally, in scenarios of incomplete information, when adversaries do not have any knowledge about the network, how should they manipulate the observation models to mislead the network?

We characterize the conditions under which the network is misled. We prove that it depends on the agents' observation models, malicious and benign agents' centrality, and attack strategies. In this way, we reveal an interplay between the network topology, which captures the diffusion of information, and injection of mis-information in the social learning paradigm. Then, we prove that if an adversary knows certain network characteristics, then there is always a way to construct an attack strategy that misleads the network for every possible true hypothesis. Finally, we study the scenario when adversaries have no knowledge about the network properties.  We propose an attack strategy and investigate its impact on the learning performance. Summarizing our contribution on a high level, we answer the questions of when and how a network is deceived in the context of non-Bayesian social learning.
\subsection{Notation}
We use boldface letters to denote random variables and normal letters to denote their realizations. For a random variable $\boldsymbol{x}$, we denote the KL divergence from distribution $L_1(\boldsymbol{x})$ to distribution $L_2(\boldsymbol{x})$ by $D_{KL}(L_1||L_2)$. $\mathds{1}$ denotes the column vector whose every element is equal to $1$ and $[A]_{\ell k}$ corresponds to the element at row $\ell$ and column $k$ of the matrix $A$.

\section{System Model}
We assume a set $\mathcal{N}=\mathcal{N}^n\bigcup\mathcal{N}^m$ of agents, where $\mathcal{N}^n$ and $\mathcal{N}^m$ denote the sets of normal and malicious agents, respectively. The types of the agents (i.e., normal or malicious) are unknown.  Without loss of generality, we index first the malicious agents, followed by the normal ones. The network is represented by an undirected graph $\mathcal{G}=\langle \mathcal{N}, \mathcal{E} \rangle$, where $\mathcal{E}$ includes bidirectional links between agents. The set of neighbors of an agent $k$ is denoted by  $\mathcal{N}_k$. 

We consider an adversarial setting where the normal agents aim at learning the true state $\theta^{\star}\in\Theta=\{\theta_1,\theta_2\}$, while malicious agents try to impede the normal agents by forcing their beliefs towards the wrong state. All agents are unaware of the true state $\theta^{\star}$.

We assume that each agent $k$ has access to observations $\boldsymbol{\zeta}_{k,i}\in\mathcal{Z}_k$ at every time $i\geq1$. Agent $k$ also has access to the likelihood functions $L_k(\zeta_{k,i}|\theta)$, $\theta\in\Theta$. The signals $\boldsymbol{\zeta}_{k,i}$ are independent and identically distributed (i.i.d.) over time. The sets $\mathcal{Z}_k$ are assumed to be finite with $|\mathcal{Z}_k|\geq2$ for all $k\in\mathcal{N}$. We will use the notation $L_k(\theta)$ instead of $L_k(\boldsymbol{\zeta}_{k,i}\vert\theta)$ whenever it is clear from the context. 
\begin{Asm}{\bf (Finiteness of KL divergences)}. 
\label{obs_independence}
For any agent $k\in\mathcal{N}$ and for any $\theta\neq\theta^{\star}$,  $D_{KL}\Big(L_k(
\theta^{\star})||L_k(
\theta)\Big)$ 
is finite.
\qedsymb\end{Asm}
At each time $i$, agent $k$ keeps a {\em belief vector} $\boldsymbol{\mu}_{k,i}$, which is a probability distribution over the possible states. The belief component $\boldsymbol{\mu}_{k,i}(\theta)$ quantifies the confidence of agent $k$ that $\theta$ is the true state. Since all agents, both normal and malicious, are unaware of the true state, we impose the following assumption on initial beliefs.
\begin{Asm} {\bf (Positive initial beliefs)}. 
\label{positive_beliefs}
    $\mu_{k,0}(\theta)>0, \forall \theta\in\Theta,k\in\mathcal{N}$.
\qedsymb
\end{Asm}
\section{Social Learning with Adversaries}
Each agent $k$ uses the acquired observations $\boldsymbol{\zeta}_{k,i}$, along with the likelihood function $L_k(\zeta_{k, i}|\theta)$, to update their belief vector using Bayes' rule. Agents communicate with each other and exchange information. We consider the log-linear social learning rule \cite{Lalitha_2014,Virginia_2020,Matta_2020} where the normal agents update their beliefs in the following manner:
\begin{align}
\label{adapt}
    &\boldsymbol{\psi}_{k,i}(\theta)=
    \frac{L_k(\boldsymbol{\zeta}_{k,i}|\theta)\boldsymbol{\mu}_{k,i-1}(\theta)}{\sum_{\theta'}L_k(\boldsymbol{\zeta}_{k,i}|\theta')\boldsymbol{\mu}_{k,i-1}(\theta')},\quad k\in\mathcal{N}^n\\
 \label{combine}
    &\boldsymbol{\mu}_{k,i}(\theta)=\frac{\prod_{\ell\in\mathcal{N}_k}\boldsymbol{\psi}^{a_{\ell k}}_{\ell,i}(\theta)}{\sum_{\theta'}\prod_{\ell\in\mathcal{N}_k}\boldsymbol{\psi}^{a_{\ell k}}_{\ell,i}(\theta')}, \quad k\in\mathcal{N}^n
\end{align}
where $a_{\ell k}$ denotes the {\em combination weight} assigned by agent $k$ to neighboring agent $\ell$, satisfying $0<a_{\ell k}\leq1$, for all $\ell\in\mathcal{N}_k$, $a_{\ell k}=0$ for all $\ell\notin\mathcal{N}_k$ and $\sum_{\ell\in\mathcal{N}_k}a_{\ell k}=1$. Let $A$ denote the {\em combination matrix} which consists of all agents' combination weights with $[A]_{\ell k}=a_{\ell k}$. Clearly, $A$ is left-stochastic. Regarding the network topology, we impose the following assumption.
\begin{Asm}{\bf{(Strongly-connected network)}.}
\label{strongly_connected}
The communication graph is {\em strongly connected} (i.e., there always exists a path with positive weights linking any two agents and at least one agent has a self-loop, meaning that there is at least one agent $k\in\mathcal{N}$ with $a_{kk}>0$).\qedsymb
\end{Asm}
For a strongly connected network, the limiting behavior of $A^{\mathsf{T}}$ is given by $\lim_{i\rightarrow\infty}(A^{\mathsf{T}})^i=\mathds{1}u^{\mathsf{T}}$, where $u$ is the Perron eigenvector \cite{Sayed_2014} associated with the eigenvalue at $1$ and all its entries are positive and are normalized to add up to one. Moreover, its $k-$th entry $u_k$ expresses a measure of influence of agent $k$ on the network and is also called the  {\em centrality} of agent $k$.

We consider the scenario where adversaries aim at misleading the network to accept the wrong hypothesis by modifying the way they use their observations. More specifically, we assume that malicious agents deviate in step \eqref{adapt} by using a fake likelihood function, denoted by $\widehat{L}_k(\cdot)$ instead of $L_k(\cdot)$ to update their beliefs, while they follow \eqref{combine} 
without deviation. Inferential attacks are therefore modeled by assuming that adversaries follow the following update rule:
\begin{align}
         \label{adapt_malicious}
    &\boldsymbol{\psi}_{k,i}(\theta)=
    \frac{\widehat{L}_k(\boldsymbol{\zeta}_{k,i}|\theta)\boldsymbol{\mu}_{k,i-1}(\theta)}{\sum_{\theta'}\widehat{L}_k(\boldsymbol{\zeta}_{k,i}|\theta')\boldsymbol{\mu}_{k,i-1}(\theta')},\quad k\in\mathcal{N}^m.
\end{align}
The network model we consider and the interactions between a normal agent $k$ and an adversary $\ell$ which launches an inferential attack, are illustrated in Fig. \ref{net_img}. Agents $k,\ell$ exchange their intermediate beliefs $\boldsymbol{\psi}_{k,i},\boldsymbol{\psi}_{\ell,i}$ where we write $\boldsymbol{\psi}_{k,i}(L_k(\cdot)),\boldsymbol{\psi}_{\ell,i}(\widehat{L}_{\ell,i}(\cdot))$ to explicitly state that the adversary's shared beliefs $\boldsymbol{\psi}_{\ell,i}$ depend on the fake likelihood functions $\widehat{L}_{\ell,i}(\cdot)$ instead of the true ones $L_{\ell,i}(\cdot)$.
\begin{figure}[!h]
\centering
\includegraphics[width=0.35\textwidth]{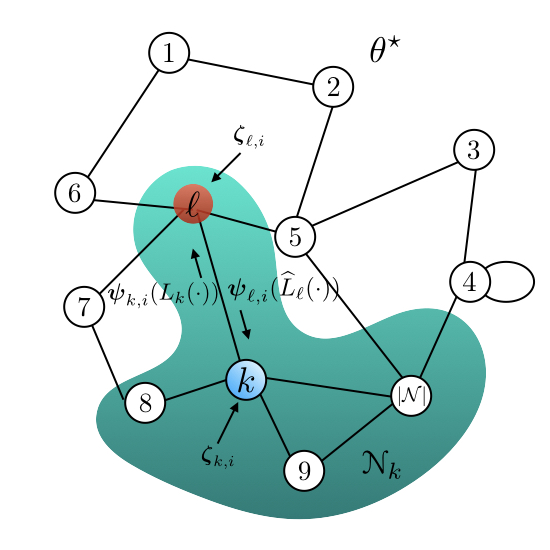}
\caption{Illustration of the network model and the interactions between a normal agent ($k$) and an adversary ($\ell$).}
\label{net_img}
\end{figure}

We impose the following technical assumption on the distorted likelihood functions.
\begin{Asm}
\label{finiteness}
{\bf(Distorted likelihood functions with full support)}. For every agent $k\in\mathcal{N}^m$, the distorted likelihood function satisfies $\epsilon\leq\widehat{L}_k(\zeta_{k,i}|\theta)$ for all $\zeta_{k,i}\in\mathcal{Z}_k$, $\theta\in\Theta$, where $0<\epsilon\ll1$ is a small positive real constant that satisfies
\begin{align}
\label{condition_e}
    \epsilon<\min_k\frac{1}{|\mathcal{Z}_k|}.
\end{align}\qedsymb
\end{Asm}
We say that an agent $k$'s belief converges \textit{almost surely} (a.s.) to the true state if $\boldsymbol{\mu}_{k,i}(\theta^{\star})\to1$ as $i\to\infty$ with probability $1$. Conversely, agent $k$'s belief converges a.s. to the wrong state if $\boldsymbol{\mu}_{k,i}(\theta^{\star})\to0$ as $i\to\infty$ with probability $1$. The following result characterizes the asymptotic learning behavior of the network.
\begin{Thm}{\bf (Belief convergence with adversaries)}.
\label{inconsistent_learning}
Under Assumptions \ref{obs_independence}, \ref{positive_beliefs}, \ref{strongly_connected}, \ref{finiteness}, two situations can arise
:
\begin{enumerate}
    \item 
    All agents' beliefs converge a.s. to the wrong state if 
    \begin{align}
    {\begin{small}{
    \label{lem_ratiosm}
    \hspace{-8mm}\sum_{k\in\mathcal{N}^n}\hspace{-1.5mm}u_k\mathbb{E}\Bigg\{\log\frac{{L_k(\boldsymbol{\zeta}_k|\theta^{\star})}}{{L_k(\boldsymbol{\zeta}_k|\theta)}}\Bigg\}\hspace{-1mm}<\hspace{-1mm}
    \sum_{k\in\mathcal{N}^m}\hspace{-1.5mm}u_k\mathbb{E}\Bigg\{\log\frac{\widehat{L}_k(\boldsymbol{\zeta}_k|\theta)}{\widehat{L}_k(\boldsymbol{\zeta}_k|\theta^{\star})}\Bigg\}.
}\end{small}}
    \end{align}
   \item All agents' beliefs converge a.s. to the true state if
    \begin{align}
    {\begin{small}{
    \label{lem_ratiosm2}
    \hspace{-8mm}\sum_{k\in\mathcal{N}^n}\hspace{-1.5mm}u_k\mathbb{E}\Bigg\{\log\frac{{L_k(\boldsymbol{\zeta}_k|\theta^{\star})}}{{L_k(\boldsymbol{\zeta}_k|\theta)}}\Bigg\}\hspace{-1mm}>\hspace{-1mm}
    \sum_{k\in\mathcal{N}^m}\hspace{-1.5mm}u_k\mathbb{E}\Bigg\{\log\frac{\widehat{L}_k(\boldsymbol{\zeta}_k|\theta)}{\widehat{L}_k(\boldsymbol{\zeta}_k|\theta^{\star})}\Bigg\}
    }\end{small}}
    \end{align}
\end{enumerate}
where $\theta^{\star},\theta\in\Theta$, $\theta^{\star}\neq\theta$.
\end{Thm}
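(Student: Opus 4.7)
The plan is to linearize the dynamics by working with log-belief ratios. For each $\theta\neq\theta^{\star}$, define $\boldsymbol{\lambda}_{k,i}(\theta) := \log(\boldsymbol{\mu}_{k,i}(\theta^{\star})/\boldsymbol{\mu}_{k,i}(\theta))$. This quantity is well-defined almost surely: Assumption \ref{positive_beliefs} forces positive initialization, Assumption \ref{obs_independence} (via absolute continuity implied by finite KL) prevents $L_k(\zeta\vert\theta)$ from vanishing on the $\theta^{\star}$-support, and Assumption \ref{finiteness} bounds the adversarial likelihoods away from zero. Substituting \eqref{adapt} or \eqref{adapt_malicious} into \eqref{combine} and taking log-ratios, the normalization constants cancel and one arrives at the same affine recursion for honest and malicious agents alike,
\[
\boldsymbol{\lambda}_i = A^{\mathsf T}\boldsymbol{\lambda}_{i-1} + A^{\mathsf T}\boldsymbol{x}_i,
\]
where $\boldsymbol{x}_{k,i}(\theta) = \log(L_k(\boldsymbol{\zeta}_{k,i}\vert\theta^{\star})/L_k(\boldsymbol{\zeta}_{k,i}\vert\theta))$ for $k\in\mathcal{N}^n$ and the same expression with $\widehat{L}_k$ in place of $L_k$ for $k\in\mathcal{N}^m$. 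Only the distribution of the innovation $\boldsymbol{x}_{k,i}$ distinguishes the two agent types, which is the whole point of the ``blend-in'' attack.

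Unrolling the recursion gives $\boldsymbol{\lambda}_i = \sum_{j=1}^{i}(A^{\mathsf T})^{i-j+1}\boldsymbol{x}_j + (A^{\mathsf T})^i\boldsymbol{\lambda}_0$, and the goal becomes to compute $\lim_i \boldsymbol{\lambda}_i/i$ almost surely. Strong connectivity with a self-loop (Assumption \ref{strongly_connected}) makes $A^{\mathsf T}$ primitive, so Perron--Frobenius yields $(A^{\mathsf T})^n = \mathds{1}u^{\mathsf T} + E_n$ with $\|E_n\|\le C\rho^n$ for some $\rho<1$. The deterministic term $(A^{\mathsf T})^i\boldsymbol{\lambda}_0/i$ vanishes because $\boldsymbol{\lambda}_0$ is finite. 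Splitting $(A^{\mathsf T})^{i-j+1}$ into its rank-one and residual parts, the rank-one piece contributes $\mathds{1}\cdot\tfrac{1}{i}\sum_{j=1}^i u^{\mathsf T}\boldsymbol{x}_j$, which by the strong law of large numbers converges almost surely to $\mathds{1}\cdot u^{\mathsf T}\mathbb{E}[\boldsymbol{x}]$. The residual $\tfrac{1}{i}\sum_{j}E_{i-j+1}\boldsymbol{x}_j$ is controlled by the almost sure uniform bound $\|\boldsymbol{x}_{k,i}\|\le M$ (a consequence of finite $\mathcal{Z}_k$ combined with Assumptions \ref{obs_independence} and \ref{finiteness}), giving a deterministic estimate of order $M\sum_{n\ge1}C\rho^n/i$, which tends to zero.

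The scalar limit $u^{\mathsf T}\mathbb{E}[\boldsymbol{x}]$ expands, by splitting honest and malicious contributions and using $\mathbb{E}[\log\widehat{L}_k(\theta^{\star})/\widehat{L}_k(\theta)] = -\mathbb{E}[\log\widehat{L}_k(\theta)/\widehat{L}_k(\theta^{\star})]$, into
\[
\sum_{k\in\mathcal{N}^n} u_k\,\mathbb{E}\bigg\{\log\frac{L_k(\boldsymbol{\zeta}_k\vert\theta^{\star})}{L_k(\boldsymbol{\zeta}_k\vert\theta)}\bigg\} - \sum_{k\in\mathcal{N}^m} u_k\,\mathbb{E}\bigg\{\log\frac{\widehat{L}_k(\boldsymbol{\zeta}_k\vert\theta)}{\widehat{L}_k(\boldsymbol{\zeta}_k\vert\theta^{\star})}\bigg\},
\]
so condition \eqref{lem_ratiosm} is exactly $u^{\mathsf T}\mathbb{E}[\boldsymbol{x}]<0$ and \eqref{lem_ratiosm2} is its reverse. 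In the first case, $\boldsymbol{\lambda}_{k,i}(\theta)\to-\infty$ almost surely for every $k\in\mathcal{N}$; since $|\Theta|=2$ and the beliefs sum to one, this is equivalent to $\boldsymbol{\mu}_{k,i}(\theta^{\star})\to 0$ almost surely. The case \eqref{lem_ratiosm2} is symmetric. The main obstacle I anticipate is the rigorous handling of the residual convolution $\tfrac{1}{i}\sum_j E_{i-j+1}\boldsymbol{x}_j$: one must couple the primitivity-based geometric decay of $\|E_n\|$ with an a.s.\ integrability property of the innovation. The route sketched above relies on the bound $\|\boldsymbol{x}_{k,i}\|\le M$, but an alternative using Kronecker's lemma pathwise under only an $L^1$ assumption would also work, at the cost of more delicate truncation arguments.
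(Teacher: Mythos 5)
Your proposal is correct and follows the same overall route as the paper's proof: pass to log-belief ratios, observe that the normalizations cancel to give the affine recursion $\boldsymbol{\lambda}_i=A^{\mathsf T}\boldsymbol{\lambda}_{i-1}+A^{\mathsf T}\boldsymbol{x}_i$, unroll, and apply the SLLN to the Perron-weighted average $\tfrac{1}{i}\sum_j u^{\mathsf T}\boldsymbol{x}_j$ (your sign convention for $\boldsymbol{\lambda}$ is the reciprocal of the paper's, which is immaterial). The one place where you diverge is the step you yourself flag as the obstacle, namely killing the residual convolution $\tfrac{1}{i}\sum_j E_{i-j+1}\boldsymbol{x}_j$. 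Your route is sound under the stated assumptions: since each $\mathcal{Z}_k$ is finite, Assumption \ref{obs_independence} forces $L_k(\zeta\vert\theta)>0$ on the support of $L_k(\cdot\vert\theta^{\star})$, and Assumption \ref{finiteness} floors the fake likelihoods at $\epsilon$, so the innovations are indeed a.s.\ uniformly bounded and the geometric decay of $\|(A^{\mathsf T})^n-\mathds{1}u^{\mathsf T}\|$ (from primitivity, which the self-loop in Assumption \ref{strongly_connected} guarantees) finishes the job with a purely deterministic estimate. The paper instead avoids the uniform bound and uses only integrability: it fixes $\varepsilon>0$, picks $t_0$ with $|[A^t]_{\ell k}-u_\ell|<\varepsilon$ for $t>t_0$, controls the tail by the SLLN applied to $|\boldsymbol{\mathcal{L}}_{\ell,t}(\theta)|$, and shows the finitely many head terms vanish as a difference of two SLLN averages. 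Your version is shorter and entirely adequate here; the paper's version is the one that would survive a generalization to unbounded observation alphabets, which is exactly the trade-off you identify in your closing remark.
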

\begin{proof}
\textit{See} Appendix \ref{a1}.
\end{proof}
The Theorem characterizes under what condition the agents in the graph can be misled, namely, when condition \eqref{lem_ratiosm} holds. Thus, malicious agents would strive to construct their distorted likelihood functions to satisfy \eqref{lem_ratiosm}. The expectation in \eqref{lem_ratiosm} and \eqref{lem_ratiosm2} is taken with respect to (w.r.t.) the true likelihood distributions, $L_k(\boldsymbol{\zeta}_k|\theta^{\star})$. Since $\boldsymbol{\zeta}_{k,i}$ are i.i.d. over time, we omit the time index $i$. The threshold rule \eqref{lem_ratiosm}-\eqref{lem_ratiosm2} fully characterizes the convergence of network beliefs. Note that whether or not the agents' beliefs will converge to the true state depends on the agents' observation models (informativeness of the signals), on the distorted likelihood functions, and on the network topology (agents' centrality).

Relation \eqref{lem_ratiosm} can be expressed in terms of relative entropy measures as follows:
{\begin{small}{
\begin{align}
    \label{kl_inequality}
        &\sum_{k\in\mathcal{N}^n}u_kD_{KL}\Bigl(L_k(
        \theta^{\star}))||L_k(
        \theta)\Bigr)<\nonumber\\
        &\sum_{k\in\mathcal{N}^m}u_kD_{KL}\Bigl(L_k(
        \theta^{\star})||\widehat{L}_k(
        \theta^{\star})\Bigr)
        -
    \sum_{k\in\mathcal{N}^m}u_kD_{KL}\Bigl(L_k(
    \theta^{\star})||\widehat{L}_k(
    \theta)\Bigr).
\end{align}
}\end{small}}%
Condition \eqref{kl_inequality} suggests that from the malicious agents' perspective, for a given $\theta^{\star}\in\Theta$, the distorted likelihood function given the true state $\widehat{L}_k(\boldsymbol{\zeta}_k|\theta^{\star})$ should be quite {\em different} from the true likelihood function $L_k(\boldsymbol{\zeta}_k|\theta^{\star})$, while the distorted likelihood function for the false state $\widehat{L}_k(\boldsymbol{\zeta}_k|\theta)$ should be similar to the true likelihood function corresponding to the true state $L_k(\boldsymbol{\zeta}_k\vert\theta^{\star})$ (since KL divergence is nonnegative). 

Clearly, the family of distorted likelihood functions $\widehat{L}_k(\cdot\vert\theta_1),\widehat{L}_k(\cdot\vert\theta_2)$ that satisfy \eqref{lem_ratiosm}, or equivalently \eqref{kl_inequality}, for both cases when $\theta^{\star}=\theta_1$,  $\theta^{\star}=\theta_2$ will successfully deceive the network no matter what the true state $\theta^{\star}$ is. In the next Section, we investigate the construction of such PMFs that enable agnostic adversaries (i.e., they do not know what the true state $\theta^{\star}$ is) to successfully deceive the network.
\subsection{Attack strategies with known network divergences}
We now examine the question of whether and how adversaries can construct $\widehat{L}_k(\cdot\vert\theta_1),\widehat{L}_k(\cdot\vert\theta_2)$ in such a way that the network will always be driven to the wrong hypothesis no matter what the true state $\theta^{\star}$ is. Note that the true state is unknown to the adversaries as well. Thus, adversaries should select $\widehat{L}_k(\cdot\vert\theta_1),\widehat{L}_k(\cdot\vert\theta_2)$, $k\in\mathcal{N}^m$ such that \eqref{lem_ratiosm}, or equivalently \eqref{kl_inequality}, is satisfied for every possibility for $\theta^{\star}\in\Theta$ to ensure that the network will converge to the wrong hypothesis always (i.e., adversaries force the network beliefs to $\theta_1$ if $\theta^{\star}=\theta_2$ and to $\theta_2$ if $\theta^{\star}=\theta_1$).

Let us define the following quantities:
\begin{align}
    &S_j\triangleq\sum_{k\in\mathcal{N}^n}\hspace{-1.5mm}u_k\mathbb{E}\Bigg\{\log\frac{{L_k(\boldsymbol{\zeta}_k|\theta_j)}}{{L_k(\boldsymbol{\zeta}_k|\theta_{j'})}}\Bigg\}\\
    &R_{k,j}\triangleq u_k\sum_{\zeta_k\in\mathcal{Z}_k}L_k(\zeta_k|\theta_j)\log\frac{\widehat{L}_k(\zeta_k|\theta_{j'})}{\widehat{L}_k(\zeta_k|\theta_j)},\quad k\in\mathcal{N}^m
\end{align}
where $\theta_j=\theta^{\star}$, $j,j'\in\{1,2\},j\neq j'$. $S_j$ denotes the term on the left-hand side (LHS) of  \eqref{lem_ratiosm} for $\theta^{\star}=\theta_j, j=1,2$. 
We call $S_j$  \textit{normal sub-network divergence}, or simply \textit{divergence} of the normal sub-network for $\theta^{\star}=\theta_j$. $R_{k,j}$ corresponds to adversary $k$'s contribution to the right-hand side (RHS) of \eqref{lem_ratiosm} for the case $\theta^{\star}=\theta_j$. Then, we can rewrite \eqref{lem_ratiosm} as
\begin{align}
\label{inequality_re}
    S_j
    <
    \sum_{k\in\mathcal{N}^m}R_{k,j},\quad j=1,2.
\end{align}
Let us first examine the following system of inequalities for an adversary $k\in\mathcal{N}^m$ (which is sufficient condition for \eqref{inequality_re} to hold if it holds for every adversary $k\in\mathcal{N}^m$, since $S_j\geq0$): 
    \begin{align}
    \label{lem_ratiosm30}
    S_j
    <
    R_{k,j}
    ,\quad k\in\mathcal{N}^m,\,j=1,2.
\end{align}
\begin{Rem}
\label{suf_condition}
Note that $S_j$, $j=1,2$ is a positive weighted sum of KL divergences (due to Assumption \ref{strongly_connected}, $u$ has positive entries) and as a result $S_j\geq0$. Thus, if \eqref{lem_ratiosm30} holds for all $k\in\mathcal{N}^m$ for $j$ such that $\theta_j=\theta^{\star}$, then \eqref{lem_ratiosm} holds as well.
\end{Rem}
We note that a Probability Mass Function (PMF) is \emph{uninformative} if  the likelihood functions are identical for both states, meaning $L_k(\zeta_k\vert\theta_1)=L_k(\zeta_k\vert\theta_2)$ for all $\zeta_k\in\mathcal{Z}_k$, otherwise the PMF is {\em informative}. Evidently, an agent with uninformative PMFs cannot discriminate between the two states, meaning that it cannot learn the underlying true state $\theta^{\star}$ by using only its own observations. Next, we establish that adversaries with uninformative PMFs cannot mislead the network for both possibilities $\theta^{\star}=\theta_1$ and $\theta^{\star}=\theta_2$, thus highlighting the limitations on an adversary's deceptive capabilities by its actual observation model (true likelihood functions).
\begin{table*}[!h]
\renewcommand{\arraystretch}{1.3}
\caption{An example of the construction of fake likelihood functions for an adversary $k$ with $|\mathcal{Z}_k|=4$.}
\label{construction_table}
\centering
\begin{tabular}{@{}|c||c|c|c|c|@{}}
\hline
&$\zeta^1_k$&$\zeta^2_k$&$\zeta^3_k$&$\zeta^4_k$\\
\hline
$\theta^{\star}=\theta_1$ & $\widehat{L}_k(\zeta^1_k\vert\theta_1)=1-2\epsilon-p_{k,2}$ & $\widehat{L}_k(\zeta^2_k\vert\theta_1)=p_{k,2}$ & $\widehat{L}_k(\zeta^3_k\vert\theta_1)=\epsilon$ & $\widehat{L}_k(\zeta^4_k\vert\theta_1)=\epsilon$\\
\hline
$\theta^{\star}=\theta_2$ & $\widehat{L}_k(\zeta^1_k\vert\theta_2)=p_{k,1}$ & $\widehat{L}_k(\zeta^2_k\vert\theta_2)=1-2\epsilon-p_{k,1}$ & $\widehat{L}_k(\zeta^3_k\vert\theta_2)=\epsilon$ & $\widehat{L}_k(\zeta^4_k\vert\theta_2)=\epsilon$\\
\hline
\end{tabular}
\end{table*}
\begin{Lem}
\label{Lem_uninformative}
{\bf (Adversaries with uninformative PMFs)}. If every adversary $k\in\mathcal{N}^m$ has uninformative PMFs, then there are no choices of $\widehat{L}_k(\cdot\vert\theta_1),\widehat{L}_k(\cdot\vert\theta_2)$ for which the network is deceived for both $\theta^{\star}=\theta_1$ and $\theta^{\star}=\theta_2$.
\end{Lem}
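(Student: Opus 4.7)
The plan is to derive a contradiction by assuming the network is deceived under both possibilities for $\theta^{\star}$ and exploiting a symmetry that uninformative PMFs induce on the $R_{k,j}$ quantities.

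First, I would unpack what the uninformative hypothesis buys us. If $L_k(\zeta_k|\theta_1)=L_k(\zeta_k|\theta_2)\triangleq L_k(\zeta_k)$ for every $k\in\mathcal{N}^m$, then the expectation appearing in $R_{k,j}$ is taken with respect to the same PMF regardless of which hypothesis is true. A direct computation then gives
\begin{align*}
R_{k,1}&=u_k\sum_{\zeta_k\in\mathcal{Z}_k}L_k(\zeta_k)\log\frac{\widehat{L}_k(\zeta_k|\theta_2)}{\widehat{L}_k(\zeta_k|\theta_1)}\\
&=-u_k\sum_{\zeta_k\in\mathcal{Z}_k}L_k(\zeta_k)\log\frac{\widehat{L}_k(\zeta_k|\theta_1)}{\widehat{L}_k(\zeta_k|\theta_2)}=-R_{k,2}.
\end{align*}
Summing over all malicious agents yields the key identity $\sum_{k\in\mathcal{N}^m}R_{k,1}+\sum_{k\in\mathcal{N}^m}R_{k,2}=0$, which holds regardless of how the adversaries pick their distorted likelihoods.

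Next, I would invoke Theorem \ref{inconsistent_learning} in the form \eqref{inequality_re}: if the network is to be deceived for both true states, then necessarily $S_1<\sum_{k\in\mathcal{N}^m}R_{k,1}$ and $S_2<\sum_{k\in\mathcal{N}^m}R_{k,2}$ must hold simultaneously. Adding the two inequalities and using the identity derived above gives $S_1+S_2<0$. However, by Remark \ref{suf_condition} (or equivalently the expression \eqref{kl_inequality} of $S_j$ as a positive-weight combination of KL divergences), both $S_1$ and $S_2$ are nonnegative, so $S_1+S_2\geq 0$. This contradicts $S_1+S_2<0$, ruling out the existence of any pair $\widehat{L}_k(\cdot|\theta_1),\widehat{L}_k(\cdot|\theta_2)$ that satisfies \eqref{lem_ratiosm} for both $\theta^{\star}=\theta_1$ and $\theta^{\star}=\theta_2$.

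There is not really a hard step here; the main thing to be careful about is the direction of the expectation in $R_{k,j}$ (it is taken w.r.t.\ the \emph{true} likelihood $L_k(\cdot|\theta^{\star})$, which is precisely the object that collapses to a single PMF under the uninformative assumption and generates the anti-symmetry $R_{k,1}=-R_{k,2}$). The subtle point worth flagging is that the argument uses only $\sum_k R_{k,1}+\sum_k R_{k,2}=0$ and not the individual identities, so it applies without any coordination among adversaries and without appealing to the stronger per-agent sufficient condition \eqref{lem_ratiosm30}.
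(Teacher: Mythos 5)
Your proof is correct and follows essentially the same route as the paper's: the antisymmetry $R_{k,1}=-R_{k,2}$ induced by uninformative PMFs, combined with the nonnegativity of $S_1,S_2$ as weighted sums of KL divergences, yields the contradiction. The only cosmetic difference is that you sum the two deception inequalities directly in the multi-adversary form \eqref{inequality_re} to get $S_1+S_2<0$, whereas the paper argues the single-adversary case (showing $R_{k,1}>0$ and $R_{k,2}>0$ cannot both hold) and then notes the extension to multiple adversaries is straightforward.
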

\begin{proof}
{\em See} Appendix \ref{a2}.
\end{proof}
We proceed by focusing first on the case when there is only one adversary in the network (i.e., $\mathcal{N}^m=\{k\}$) and then extend our results to multiple adversaries. Identifying a set of PMFs $\widehat{L}_k(\cdot\vert\theta_1)$, $\widehat{L}_k(\cdot\vert\theta_2)$ that mislead the network for both $\theta^{\star}=\theta_1$ and $\theta^{\star}=\theta_2$ requires solving the system of inequalities \eqref{lem_ratiosm30} (which is equivalent to \eqref{lem_ratiosm} if $|\mathcal{N}^m|=1$) w.r.t. $\widehat{L}_k(\zeta_k\vert\theta_1),\widehat{L}_k(\zeta_k\vert\theta_2)$, $\zeta_k\in\mathcal{Z}_k$. Since we are in the binary hypothesis setup, the system in \eqref{lem_ratiosm30} is comprised of two inequalities. As we will see it is sufficient to explore a construction of fake likelihood functions that is parametrized by two free variables. The construction is the following. Adversary $k$ selects two distinct realizations of $\boldsymbol{\zeta}_k$, which we denote by $\zeta^1_k,\zeta^2_k\in\mathcal{Z}_k$, without loss of generality. First, it assigns minimum probability mass $\epsilon$ to the remaining realizations of $\boldsymbol{\zeta}_k$, i.e.,
\begin{align}
\label{construction_rest}
    \widehat{L}_k(\zeta_k\vert\theta_1)=\widehat{L}_k(\zeta_k\vert\theta_2)=\epsilon
\end{align}
for all $\zeta_k\neq\zeta^1_k,\zeta^2_k$.

Second, it assigns mass $p_{k,1}$ and $p_{k,2}$ to  $\widehat{L}_k(\zeta^1_k\vert\theta_2)$ and  $\widehat{L}_k(\zeta^2_k\vert\theta_1)$, respectively, i.e.,
\begin{align}
    &\widehat{L}_k(\zeta^1_k\vert\theta_2)=p_{k,1}\\
    &\widehat{L}_k(\zeta^2_k\vert\theta_1)=p_{k,2}.
\end{align}

Since 
$\widehat{L}_k(\zeta_k\vert\theta)$ should sum up to one over $\zeta_k$ for all $\theta\in\Theta$, then from the choices above we have that
\begin{align}
    &\widehat{L}_k(\zeta^1_k\vert\theta_1)=\alpha_k-p_{k,2}\\
    &\widehat{L}_k(\zeta^2_k\vert\theta_2)=\alpha_k-p_{k,1}
\end{align}
where $\alpha_k=1-(|\mathcal{Z}_k|-2)\epsilon$. For the sake of clarity, we present an example of such a construction in Table \ref{construction_table}.

Following the description above, the fake PMFs will have the following form:
\begin{align}
\label{construction10}
    \widehat{L}_k(\zeta_{k}\vert\theta_j)=\begin{cases}p_{k,j'},&\text{ if } \zeta_{k}=\zeta^{j'}_k\\
    \alpha_{k}-p_{k,j'},&\text{ if } \zeta_{k}=\zeta^j_k\\
    \epsilon,&\text{ otherwise }
    \end{cases}
\end{align}
where $j,j'\in\{1,2\}$, $j\neq j'$. 

Next, we investigate when PMFs of the form \eqref{construction10} successfully deceive the network for both $\theta^{\star}=\theta_1$ and $\theta^{\star}=\theta_2$. Replacing \eqref{construction10} into \eqref{lem_ratiosm30} yields
\begin{align}
\label{ineq1}
    &\log\frac{\alpha_{k}-p_{k,1}}{p_{k,2}}>\frac{S_1}{u
    _kL_{k}(\zeta^2_{k}\vert\theta_1)}-\frac{L_{k}(\zeta^1_{k}\vert\theta_1)}{L_{k}(\zeta^2_{k}\vert\theta_1)}\log\frac{p_{k,1}}{\alpha_{k}-p_{k,2}}\\
    \label{ineq2}
    &\log\frac{\alpha_{k}-p_{k,1}}{p_{k,2}}<-\frac{S_2}{u_{k}L_{k}(\zeta^2_{k}\vert\theta_2)}-\frac{L_{k}(\zeta^1_{k}\vert\theta_2)}{L_{k}(\zeta^2_{k}\vert\theta_2)}\log\frac{p_{k,1}}{\alpha_{k}-p_{k,2}}.
\end{align}
Note that \eqref{ineq1} and \eqref{ineq2} correspond to the cases $\theta^{\star}=\theta_1$ and $\theta^{\star}=\theta_2$, respectively. Also note that the other terms appearing in RHS of \eqref{lem_ratiosm30} $L_k(\zeta_k|\theta^{\star})\log\frac{\widehat{L}_k(\zeta_k\vert\theta)}{\widehat{L}_k(\zeta_k\vert\theta^{\star})}$, $\zeta_k\neq\zeta^1_k,\zeta^2_k$ 
vanish due to choice $\widehat{L}_k(\zeta_k\vert\theta_1)=\widehat{L}_k(\zeta_k\vert\theta_2)=\epsilon$. The set of values of $p_{k,1}, p_{k,2}$ that satisfy \eqref{ineq1} and \eqref{ineq2}  define fake PMFs of the form \eqref{construction10} that mislead the network for both $\theta^{\star}=\theta_1$ and $\theta^{\star}=\theta_2$. Before presenting the main result of this section, let us introduce the following quantities:
\begin{align}
&x^-\triangleq\log\frac{\epsilon}{\alpha_{k}-\epsilon},\quad x^+\triangleq\log\frac{\alpha_{k}-\epsilon}{\epsilon}\\
&n_j\triangleq L_{k}(\zeta^j_{k}\vert\theta_2)S_1+L_{k}(\zeta^j_{k}\vert\theta_1)S_2,\quad j=1,2\\
&d_k\triangleq L_{k}(\zeta^2_{k}\vert\theta_2)L_{k}(\zeta^1_{k}\vert\theta_1)-L_{k}(\zeta^2_{k}\vert\theta_1)L_{k}(\zeta^1_{k}\vert\theta_2)\\
\label{int1}
&x_{k,1}'\triangleq\frac{n_2}{u_{k}d_{k}},\quad
x_{k,2}'\triangleq\frac{n_1}{u_{k}d_{k}}.
\end{align}%
Now, we can answer the question of how to appropriately select $p_{k,1},p_{k,2}$ so that construction \eqref{construction10} deceives the network for both  $\theta^{\star}=\theta_1$ and $\theta^{\star}=\theta_2$. 
The following result provides conditions for the existence of such fake likelihood functions as well as a construction of  $\widehat{L}_{k}(\cdot\vert\theta_1),\widehat{L}_{k}(\cdot\vert\theta_2)$, via appropriate selection of $p_{k,1},p_{k,2}$.
\begin{Thm}
\label{region}
    {\bf (Distorted PMFs with known divergences - Single adversary case)}. 
    Let there be only one adversary $k\in\mathcal{N}^m$ (i.e., $|\mathcal{N}^m|=1$) with informative PMFs. Furthermore, let the selected realizations $\zeta^1_{k},\zeta^2_{k}\in\mathcal{Z}_k$ be such that
\begin{align}
\label{det_positive}
L_{k}(\zeta^1_{k}\vert\theta_1)L_{k}(\zeta^2_{k}\vert\theta_2)\neq L_{k}(\zeta^1_{k}\vert\theta_2)L_{k}(\zeta^2_{k}\vert\theta_1).
\end{align}
Let $\epsilon$ satisfy:          
\begin{align}
    \label{e_condition}
    \epsilon<\min\{(e^{|x_{k,1}'|+|\mathcal{Z}_{k}|-1})^{-1},(e^{|x_{k,2}'|+|\mathcal{Z}_{k}|-1})^{-1}\}.
        \end{align}
Then, fake PMFs of the form \eqref{construction10} 
        mislead the network for both $\theta^{\star}=\theta_1$ and $\theta^{\star}=\theta_2$ for the following parameter values.
 \begin{align}
\label{e1e2relation1}
        &p_{k,1}=\frac{e^{x_1}\alpha_k(e^{x_2}-1)}{e^{x_2}-e^{x_1}}\\
    \label{e1e2relation2}
    &p_{k,2}=\frac{\alpha_k(1-e^{x_1})}{e^{x_2}-e^{x_1}}
\end{align}
where $x_1$ is such that $x^+>x_1>x_{k,1}'$ if $d_{k}>0$ and $x^-<x_1<x_{k,1}'$ if $d_{k}<0$ and
\begin{align}
    x_2=\beta_k (x_1-x_{k,1}')+x_{k,2}'
\end{align}
with $|x_2|<x^+$ and $\beta_k$ such that
\begin{align}
    \min_{j\in\{1,2\}}\left\{-\frac{L_{k}(\zeta^1_k\vert\theta_j)}{L_k(\zeta_k^2\vert\theta_j)}\right\}<\beta_{k}<\max_{j\in\{1,2\}}\left\{-\frac{L_{k}(\zeta^1_k\vert\theta_j)}{L_{k}(\zeta^2_k\vert\theta_j)}\right\}.
\end{align}
\end{Thm}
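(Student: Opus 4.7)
The plan is to linearize \eqref{ineq1}--\eqref{ineq2} by the change of variables $x_1 = \log\frac{p_{k,1}}{\alpha_k - p_{k,2}}$ and $x_2 = \log\frac{\alpha_k - p_{k,1}}{p_{k,2}}$, which turns both inequalities into linear half-plane constraints on $(x_1,x_2)$ with bounding lines $\ell_j$ of slope $-L_k(\zeta_k^1|\theta_j)/L_k(\zeta_k^2|\theta_j)$ for $j=1,2$. The non-degeneracy hypothesis \eqref{det_positive} is precisely $d_k \neq 0$, which guarantees the two slopes differ and hence the lines are not parallel; their intersection carves out an open wedge of feasible $(x_1,x_2)$.

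First I would compute the vertex of this wedge: equating the right-hand sides of \eqref{ineq1} and \eqref{ineq2} reduces after clearing denominators to $n_2/u_k = d_k\, x_1$, so $x_1 = n_2/(u_k d_k) = x_{k,1}'$, and back-substitution yields $x_2 = x_{k,2}'$, matching \eqref{int1}. Because $S_1, S_2 \geq 0$ by Remark \ref{suf_condition}, the $x_2$-intercepts of $\ell_1$ and $\ell_2$ are nonnegative and nonpositive respectively, while $\mathrm{slope}(\ell_1)-\mathrm{slope}(\ell_2) = -d_k/[L_k(\zeta_k^2|\theta_1)L_k(\zeta_k^2|\theta_2)]$. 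A short sign check then shows that the wedge (above $\ell_1$ and below $\ell_2$) opens in the direction $x_1 > x_{k,1}'$ when $d_k > 0$ and in the direction $x_1 < x_{k,1}'$ when $d_k < 0$, which is exactly the sign dichotomy in the statement. Every interior point of the wedge can therefore be parameterized as $x_2 = \beta_k(x_1 - x_{k,1}') + x_{k,2}'$ with $\beta_k$ strictly between the two slopes $-L_k(\zeta_k^1|\theta_j)/L_k(\zeta_k^2|\theta_j)$, which is precisely the admissible range of $\beta_k$ in the theorem.

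Inverting the log-transform is a $2\times 2$ algebraic solve: writing $p_{k,1} = e^{x_1}(\alpha_k - p_{k,2})$ together with $\alpha_k - p_{k,1} = e^{x_2} p_{k,2}$ and eliminating yields \eqref{e1e2relation2}, and back-substitution gives \eqref{e1e2relation1}. The main technical obstacle is admissibility: the values $p_{k,1},\, p_{k,2},\, \alpha_k - p_{k,1},\, \alpha_k - p_{k,2}$, together with the remaining entries equal to $\epsilon$, must lie in $[\epsilon, \alpha_k - \epsilon]$ so that \eqref{construction10} is a genuine PMF satisfying Assumption \ref{finiteness}. In the log-coordinates this is equivalent to $(x_1,x_2) \in (x^-, x^+)^2$. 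Since $(x_{k,1}', x_{k,2}')$ is essentially independent of $\epsilon$ for small $\epsilon$ while $x^+ = \log\frac{\alpha_k - \epsilon}{\epsilon}$ diverges as $\epsilon \to 0$, the quantitative bound \eqref{e_condition} is calibrated --- with the $|\mathcal{Z}_k|-1$ term absorbing the dependence of $\alpha_k$ on $\epsilon$ --- so that $x^+$ strictly exceeds both $|x_{k,1}'|$ and $|x_{k,2}'|$ with a margin. Under this slack I would pick $x_1$ on the correct side of $x_{k,1}'$ but sufficiently close to it that $x_2 = \beta_k(x_1 - x_{k,1}') + x_{k,2}'$ remains inside $(x^-, x^+)$ by continuity, using the fact that $|\beta_k|$ is bounded by the two fixed likelihood-ratio slopes. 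This produces the claimed fake PMFs and completes the proof.
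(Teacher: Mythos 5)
Your proposal follows essentially the same route as the paper's proof: the same log-domain change of variables $x_1,x_2$, the same identification of the feasible wedge with vertex $(x_{k,1}',x_{k,2}')$ and the $d_k$-sign dichotomy for which half it opens into, the same $\beta_k$-parametrization between the two likelihood-ratio slopes, the same inversion back to $p_{k,1},p_{k,2}$, and the same calibration of \eqref{e_condition} so that the vertex lies strictly inside $[x^-,x^+]^2$. The argument is correct; the only cosmetic remark is that the actual ordinate of the intersection point works out to $-n_1/(u_k d_k)$, a sign the paper itself handles inconsistently between \eqref{int1} and its appendix, so your "back-substitution yields $x_2=x_{k,2}'$" inherits that ambiguity rather than introducing a new error.
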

\begin{proof}
{\em See} Appendix \ref{a23}.
\end{proof}
The above result states that even one adversary with informative likelihood functions can construct fake PMFs that mislead the network for $\epsilon$ satisfying \eqref{e_condition}.

The intuition behind the construction presented in this section and the result in Theorem \ref{region} is the following. The system of inequalities \eqref{ineq1}, \eqref{ineq2} is non-linear w.r.t. $p_{k,1},p_{k,2}$. As a result, it is challenging to characterize the region $\widehat{R}_k$ of values of $p_{k,1},p_{k,2}$ that satisfy \eqref{ineq1}, \eqref{ineq2} ({\em see} left sub-figure of Fig. \ref{schematic}). However, we observe that by replacing $\log\frac{\alpha_k-p_{k,1}}{p_{k,2}}$ and $\log\frac{p_{k,1}}{\alpha_k-p_{k,2}}$ with $x_2$ and $x_1$, respectively and by letting $x_1,x_2$ take arbitrary values in $\mathbb{R}$, we obtain a system of inequalities that is linear w.r.t. $x_1,x_2$. Then, it is easy to find the region $\mathcal{R}_k$ of values of $x_1,x_2\in\mathbb{R}$ that satisfy the new system of inequalities ({\em see} right sub-figure of Fig. \ref{schematic}). After solving the new system w.r.t. $x_1,x_2\in\mathbb{R}$, we are able to obtain appropriate $p_{k,1},p_{k,2}$, as described in Theorem \ref{region}, that deceive the network for both $\theta^{\star}=\theta_1$ and $\theta^{\star}=\theta_2$. A schematic representation is given in Fig. \ref{schematic} to enhance intuition behind our approach.
\begin{figure}[!h]
\centering
\includegraphics[width=0.5\textwidth]{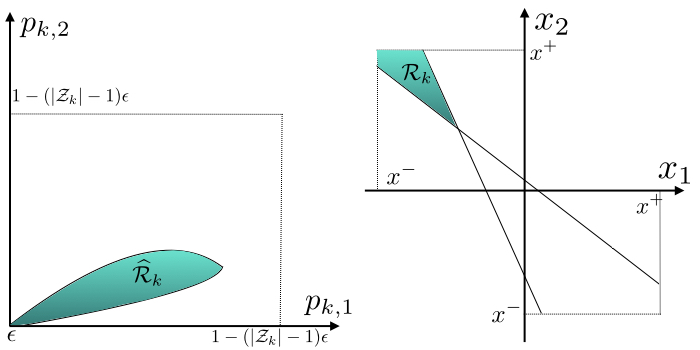}
\caption{Schematic illustration for the construction presented in Theorem \ref{region}}.
\label{schematic}
\end{figure}
\begin{Rem}
\label{r1}
We note that in order to find appropriate $p_{k,1},p_{k,2}$ to construct the fake likelihood functions, knowledge of the normal sub-network divergences $S_1,S_2$ in order to compute $x_{k,1}',x_{k,2}'$ via \eqref{int1} 
is required. This is the minimal information required by an agnostic adversary that is not aware of the true hypothesis to deceive the network under the type of attacks we consider in this paper.\qedsymb
\end{Rem}
The results from Lemma \ref{Lem_uninformative} and Theorem \ref{region} suggest a way to create distorted likelihood functions that provably mislead the network for both $\theta^{\star}=\theta_1$ and $\theta^{\star}=\theta_2$ for the case where there are more than one adversary in the network (i.e., $|\mathcal{N}^m|>1$). The next result presents such a construction.
\begin{Cor}
\label{multiagent_cor}
    {\bf (Distorted PMFs with known divergences - Multiple adversaries case)}. 
    Let there be at least one adversary with informative PMFs. Then, the following is true. If every adversary $k\in\mathcal{N}^m$ with informative PMFs uses the construction presented in Theorem \ref{region} and every adversary $k\in\mathcal{N}^m$ with uninformative PMFs sets   $\widehat{L}_{k}(\cdot\vert\theta_1)=\widehat{L}_{k}(\cdot\vert\theta_2)=L_{k}(\theta_1)=L_{k}(\theta_2)$ (i.e., does not modify its likelihood functions), then the network is deceived for both $\theta^{\star}=\theta_1$ and $\theta^{\star}=\theta_2$ for $\epsilon$ satisfying
        \begin{align}
        {\begin{small}{
        \label{e_multiagent}
    \epsilon<\min_{k\in\mathcal{N}^{m,+}}\Big\{\min\{(e^{|x_{k,1}'|+|\mathcal{Z}_{k}|-1})^{-1},(e^{|x_{k,2}'|+|\mathcal{Z}_{k}|-1})^{-1}\}\Big\}
    }\end{small}}%
        \end{align}
where $\mathcal{N}^{m,+}$ is the set of adversaries with informative PMFs.
\end{Cor}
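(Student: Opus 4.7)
The plan is to reduce the multi-adversary case to a per-agent application of Theorem \ref{region}, exploiting the fact that Remark \ref{suf_condition} only needs the sufficient condition \eqref{lem_ratiosm30} to hold for each adversary individually, together with $S_j\geq 0$, to recover the aggregate condition \eqref{lem_ratiosm}. I would first partition $\mathcal{N}^m=\mathcal{N}^{m,+}\cup(\mathcal{N}^m\setminus\mathcal{N}^{m,+})$ into informative and uninformative adversaries and analyze the two contributions to the right-hand side of \eqref{inequality_re} separately.

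For an uninformative adversary $k\in\mathcal{N}^m\setminus\mathcal{N}^{m,+}$ that, by assumption, keeps $\widehat{L}_k(\cdot|\theta_1)=\widehat{L}_k(\cdot|\theta_2)$, the log-ratio inside $R_{k,j}$ equals $\log 1=0$ for every realization $\zeta_k$, so $R_{k,j}=0$ for $j=1,2$. Thus the uninformative adversaries neither help nor hinder the attack and can be discarded from the summation. For each informative adversary $k\in\mathcal{N}^{m,+}$, the bound \eqref{e_multiagent} is, by construction, tighter than the single-agent threshold \eqref{e_condition} required by Theorem \ref{region}, so Theorem \ref{region} applies individually to each such $k$ and yields the strict per-agent inequality $S_j<R_{k,j}$ for both $j=1,2$.

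Combining these two observations, and using that $S_j\geq 0$ by Assumption \ref{strongly_connected} (so that any single informative adversary already certifies the aggregate bound, as formalized in Remark \ref{suf_condition}), I obtain
\begin{align*}
\sum_{k\in\mathcal{N}^m}R_{k,j}=\sum_{k\in\mathcal{N}^{m,+}}R_{k,j}>|\mathcal{N}^{m,+}|\,S_j\geq S_j,\quad j=1,2,
\end{align*}
which is exactly \eqref{inequality_re}, equivalently \eqref{lem_ratiosm}, for both possible values of $\theta^\star$. Applying Theorem \ref{inconsistent_learning} then delivers almost-sure convergence of every agent's belief to the wrong hypothesis regardless of whether $\theta^\star=\theta_1$ or $\theta^\star=\theta_2$.

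The only nontrivial bookkeeping is checking that the single value of $\epsilon$ prescribed in \eqref{e_multiagent} is simultaneously admissible for \emph{every} informative adversary invoked in the Theorem \ref{region} step, and also consistent with the standing Assumption \ref{finiteness}; the former is immediate from the outer minimum over $k\in\mathcal{N}^{m,+}$ in \eqref{e_multiagent}, and the latter follows from $e^{|\mathcal{Z}_k|-1}\geq|\mathcal{Z}_k|$ for $|\mathcal{Z}_k|\geq 2$. I do not expect any genuine obstacle beyond these routine checks, since all of the analytical work is already carried out in Theorem \ref{region} and in the additivity structure of the left-stochastic centrality-weighted sum.
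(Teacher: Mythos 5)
Your proposal is correct and follows essentially the same route as the paper's own proof: uninformative adversaries contribute $R_{k,j}=0$ under the prescribed choice, each informative adversary satisfies the per-agent inequality \eqref{lem_ratiosm30} by Theorem \ref{region} under the uniform $\epsilon$ of \eqref{e_multiagent}, and summing with $S_j\geq0$ yields \eqref{inequality_re} for $j=1,2$. The additional admissibility checks you mention are routine and consistent with the paper's argument.
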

\begin{proof}
\textit{See} Appendix \ref{a24}.
\end{proof}
Before concluding this section, we establish a result that will be useful in the sequel.
\begin{Lem} {\bf (Parametrization of fake PMFs with one variable)}. 
\label{one_variable}
PMFs of the form \eqref{construction10} such that $p_{k,1}=p_{k,2}=p$ (parameterization of the fake PMFs with one free variable) are not sufficient to deceive 
the network 
for both $\theta^{\star}=\theta_1$ and $\theta^{\star}=\theta_2$, in general. 
\end{Lem}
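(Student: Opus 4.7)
The plan is to show that the restriction $p_{k,1}=p_{k,2}=p$ reduces the pair of inequalities \eqref{ineq1}--\eqref{ineq2} to a system whose solvability imposes a sign constraint on the adversary's own likelihoods, and then to exhibit a concrete single-adversary instance in which this constraint fails for every admissible labeling of $\zeta^1_k,\zeta^2_k$.

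First, I will substitute $p_{k,1}=p_{k,2}=p$ into \eqref{ineq1} and \eqref{ineq2}. The two logarithmic terms then become negatives of one another: writing $y\triangleq\log\frac{\alpha_k-p}{p}$, one has $\log\frac{p_{k,1}}{\alpha_k-p_{k,2}}=-y$. After clearing denominators by the positive quantities $L_k(\zeta^2_k|\theta_j)$, \eqref{ineq1}--\eqref{ineq2} reduce to two linear inequalities in the single variable $y$:
\begin{align*}
y\,A_1 &> S_1/u_k,\\
y\,A_2 &< -S_2/u_k,
\end{align*}
where $A_j\triangleq L_k(\zeta^2_k|\theta_j)-L_k(\zeta^1_k|\theta_j)$ for $j=1,2$. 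Under Assumption \ref{strongly_connected} the entries of $u$ are strictly positive, and if the normal sub-network contains at least one agent with informative likelihoods, then $S_1,S_2>0$. The first inequality then forces $yA_1>0$, the second forces $yA_2<0$, so a common $y$ exists only if $A_1$ and $A_2$ carry opposite signs.

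Next, I will construct a counter-example witnessing that this sign condition can genuinely fail. Take $\mathcal{N}^m=\{k\}$ with $|\mathcal{Z}_k|=2$, which removes all freedom in choosing $\zeta^1_k,\zeta^2_k$ beyond a relabeling. Set, for instance, $L_k(\zeta^1_k|\theta_1)=0.3$, $L_k(\zeta^2_k|\theta_1)=0.7$, $L_k(\zeta^1_k|\theta_2)=0.4$, $L_k(\zeta^2_k|\theta_2)=0.6$. These PMFs are informative, so Lemma \ref{Lem_uninformative} does not apply; yet $A_1=0.4$ and $A_2=0.2$ share the same sign, and swapping the two labels merely negates both simultaneously. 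Coupling this adversary with any normal sub-network for which $S_1,S_2>0$, the reduced system has no solution, and hence no one-parameter fake PMF of the form \eqref{construction10} deceives the network for both $\theta^\star=\theta_1$ and $\theta^\star=\theta_2$.

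The main subtlety to keep in mind is that \eqref{ineq1}--\eqref{ineq2} were obtained by inserting \eqref{construction10} into the per-adversary \emph{sufficient} condition \eqref{lem_ratiosm30}, which is in general strictly stronger than the original deception condition \eqref{lem_ratiosm}. This is why I restrict the counter-example to $|\mathcal{N}^m|=1$: in that case the right-hand side of \eqref{lem_ratiosm} collapses to a single summand, \eqref{lem_ratiosm30} coincides with \eqref{lem_ratiosm}, and unsolvability of \eqref{ineq1}--\eqref{ineq2} is equivalent to failure of deception under construction \eqref{construction10}. This removes any ambiguity and establishes the lemma.
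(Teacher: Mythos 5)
Your proof is correct, and it follows the same underlying idea as the paper's argument --- namely that setting $p_{k,1}=p_{k,2}=p$ collapses the two log-ratios to a single variable $y$ and $-y$, confining the candidate solutions to the anti-diagonal of the transformed $(x_1,x_2)$-plane --- but you execute it differently and, in fact, more completely. The paper argues geometrically with the lines \eqref{line1}--\eqref{line2}: it compares their slopes to $-1$ and, depending on the sign of $d_k$, exhibits \emph{sufficient} conditions (essentially $L_k(\zeta^1_k\vert\theta_j)>L_k(\zeta^2_k\vert\theta_j)$ for the relevant $j$) under which the line $x_2=-x_1$ misses the feasible region; it gives no explicit numerical instance. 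You instead reduce \eqref{ineq1}--\eqref{ineq2} to the scalar system $yA_1>S_1/u_k$, $yA_2<-S_2/u_k$ and extract the exact obstruction: a solution exists only if $A_1$ and $A_2$ have opposite signs. This is sharper (your counterexample with $A_1=0.4$, $A_2=0.2$ and $d_k=-0.1<0$ is not even covered by the paper's stated sufficient conditions, yet your reduction shows it fails, as one can verify directly from \eqref{ineq1}--\eqref{ineq2}), it is witnessed by concrete numbers with the relabeling freedom explicitly exhausted, and your closing remark correctly pins down why restricting to $\vert\mathcal{N}^m\vert=1$ makes \eqref{lem_ratiosm30} equivalent to \eqref{lem_ratiosm} so that failure of the reduced system really is failure of deception --- a point the paper glosses over. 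Two cosmetic notes: the strict positivity of $S_1,S_2$ is not actually needed (since $S_j\geq 0$ already forces $yA_1>0$ and $yA_2<0$), and the reduction implicitly uses $L_k(\zeta^2_k\vert\theta_j)>0$, which your example satisfies.
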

\begin{proof}
{\em See} Appendix \ref{a25}.
\end{proof}
\subsection{Attack strategies with unknown network divergences}
In general, it is not realistic to assume that knowledge of network divergences $S_1,S_2$ is always available to the adversaries, as it requires access to network topology and normal agents' observation models. Thus, in this section, we investigate what the adversaries can do when they do not know the characteristics of the normal sub-network. Rearranging \eqref{lem_ratiosm}, we define the following cost function.
\begin{align}
    \label{cost_function}
    &\mathcal{C}(\theta^{\star}
    )\triangleq\sum_{k\in\mathcal{N}^n}u_{k}D_{KL}\Bigl(L_k(\theta^{\star}))||L_{k}(\theta)\Bigr)\nonumber\\
    &+\sum_{k\in\mathcal{N}^m}u_{k}\sum_{\zeta_{k}}L_{k}(\zeta_{k}|\theta^{\star})\log\frac{\widehat{L}_{k}(\zeta_{k}|\theta^{\star})}{\widehat{L}_{k}(\zeta_{k}|\theta)}
\end{align}
where $\theta^{\star},\theta\in\Theta, \theta^{\star}\neq\theta$. Then, condition \eqref{lem_ratiosm} is equivalent to
\begin{align}
\label{eq_5equiv}
\mathcal{C}(\theta^{\star})<0.    
\end{align}
We observe that the second term in \eqref{cost_function} is under malicious agents' control. Thus, one option for the adversaries is to minimize \eqref{cost_function} over $\widehat{L}_k(\cdot\vert\theta_1),\widehat{L}_k(\cdot\vert\theta_2)$ to increase the chances that \eqref{eq_5equiv} is satisfied. However, $\theta^{\star}$ is unknown as well. A viable alternative is to treat the true state $\theta^{\star}$ as a random variable, i.e., $\boldsymbol{\theta}^{\star}$. We assume that adversaries share a common prior over the states and since no evidence about the true state is available beforehand, we set $\mathbb{P}(\boldsymbol{\theta}^\star=\theta_1)=\mathbb{P}(\boldsymbol{\theta}^\star=\theta_2)=1/2$. 
Thus, taking expectation over the true state $\boldsymbol{\theta}^{\star}$ in \eqref{cost_function} leads to the following minimization problem for the malicious agents:
\begin{align}
    \label{opt_altao}
    &\min_{\widehat{L}_{k}(\cdot\vert\theta_1),\widehat{L}_{k}(\cdot\vert\theta_2)}\frac{1}{2}\Big(\mathcal{C}(\theta_1
    )+\mathcal{C}(\theta_2
    )\Big),\,\,\ k\in\mathcal{N}^m
    \\
    &\text{s.t.}\,\,\, \widehat{L}_{k}(\zeta|\theta)\geq\epsilon,\,\,\,\quad\quad\quad\forall\zeta\in\mathcal{Z}_{k},\theta\in\Theta,
    \nonumber\\
    &\,\,\,\,\,\,\,\, \sum_{\zeta\in\mathcal{Z}_{k}}\widehat{L}_{k}(\zeta|\theta)=1,\quad\,\,\,\forall \theta\in\Theta
    \nonumber
\end{align}
The first constraint in the optimization problem is due to Assumption \ref{finiteness} and the second one is due to the fact that $\widehat{L}_k(\zeta_k\vert\theta)$ should sum up to one over $\zeta_k$ for every $\theta\in\Theta$. It should be noted that the solution to the minimization problem above, denoted by  $\widehat{L}^{\star}_{k}(\cdot\vert\theta_1),\widehat{L}^{\star}_{k}(\cdot\vert\theta_2)$, minimizes the average of \eqref{cost_function} for $\theta^{\star}=\theta_1$ and $\theta^{\star}=\theta_2$. 
This means that \eqref{lem_ratiosm} is not necessarily satisfied for both $\theta^{\star}=\theta_1$ and $\theta^{\star}=\theta_2$ if adversaries utilize $\widehat{L}^{\star}_{k}(\cdot\vert\theta_1),\widehat{L}^{\star}_{k}(\cdot\vert\theta_2)$, but adversaries try to satisfy \eqref{lem_ratiosm} on average. 

The optimization problem decomposes across agents $k\in\mathcal{N}^m$  
and thus, \eqref{opt_altao} reduces to the following for each agent $k\in\mathcal{N}^m$:
\begin{align}
\label{opt_prao2pr}
    &
    \hspace{-1mm}\min_{\widehat{L}_{k}(\cdot\vert\theta_1)}\sum_{\zeta\in\mathcal{Z}_k}\hspace{-0.9mm}Z_{k}(\zeta)\hspace{-0.5mm}\log\widehat{L}_{k}(\zeta|\theta_1)\hspace{-1mm}-\hspace{-1.5mm}\max_{\widehat{L}_{k}(\cdot\vert\theta_2)}\sum_{\zeta\in\mathcal{Z}_k}\hspace{-0.9mm}Z_{k}(\zeta)\hspace{-0.5mm}\log\widehat{L}_{k}(\zeta|\theta_2)\\
    &
    \text{s.t.}\,\,\,\,\,\,\, \widehat{L}_{k}(\zeta|\theta_1)\geq\epsilon,\quad\quad\quad\quad
    \widehat{L}_{k}(\zeta|\theta_2)\geq\epsilon,\quad\forall\zeta\in\mathcal{Z}_{k}\nonumber\\
    &\quad\,\,\,\,\,\sum_{\zeta\in\mathcal{Z}_{k}}\widehat{L}_{k}(\zeta|\theta_1)=1,\quad\,\,\sum_{\zeta\in\mathcal{Z}_{k}}\widehat{L}_{k}(\zeta|\theta_2)=1\nonumber
\end{align}
where we introduced:
\begin{align}
\label{eq_confidence}
Z_{k}(\zeta)\triangleq L_{k}(\zeta|\theta_1)-L_{k}(\zeta|\theta_2),\, \quad\zeta\in\mathcal{Z}_{k}.
\end{align}
Note that each coefficient $Z_{k}(\zeta)$ expresses a measure of {\em confidence} that an observation $\zeta$ resulted from state $\theta_1$ instead of $\theta_2$. If $Z_k(\zeta)$ is positive, then $\zeta$ is more likely to have been generated by state $\theta_1$ instead of state $\theta_2$, while if $Z_k(\zeta)$ is negative, then $\zeta$ is more likely to have been generated by state $\theta_2$ instead of $\theta_1$. Let us define the set
\begin{align}
\mathcal{D}^1_{k}=\{\zeta\in\mathcal{Z}_k:Z_k(\zeta)
\geq0,\quad k\in\mathcal{N}^m\}
\end{align}
which is comprised of those observations $\zeta\in\mathcal{Z}_{k}$ that are more (or equally) likely that they have been generated by state $\theta_1$ instead of state $\theta_2$. Respectively, the set $\mathcal{D}^2_{k}=\mathcal{Z}_{k}\setminus\mathcal{D}^1_{k},k\in\mathcal{N}^m$ is comprised of the observations that are more likely to have been generated by state $\theta_2$ instead of $\theta_1$. Before presenting the solution of the optimization problem, we establish a useful result for the sets $\mathcal{D}^1_{k}$, $\mathcal{D}^2_{k}$.

\begin{Lem} {\bf (Non-empty Partition)}. 
\label{uniform_rem}
The sets $\mathcal{D}^1_{k}$ and $\mathcal{D}^2_{k}$ are both non-empty for an adversary $k\in\mathcal{N}^m$ with informative PMFs.
\end{Lem}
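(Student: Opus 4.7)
The plan is to argue both parts by a short contradiction argument based on the fact that the $Z_k(\zeta)$ values are differences of two PMFs and therefore sum to zero over $\mathcal{Z}_k$.

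First, I would show $\mathcal{D}^2_k \neq \emptyset$. Suppose for contradiction that $\mathcal{D}^2_k$ is empty. By definition this means $Z_k(\zeta) \geq 0$ for every $\zeta \in \mathcal{Z}_k$. On the other hand,
\begin{equation*}
\sum_{\zeta \in \mathcal{Z}_k} Z_k(\zeta) \;=\; \sum_{\zeta \in \mathcal{Z}_k} L_k(\zeta|\theta_1) \;-\; \sum_{\zeta \in \mathcal{Z}_k} L_k(\zeta|\theta_2) \;=\; 1 - 1 \;=\; 0,
\end{equation*}
so all nonnegative summands must vanish identically, giving $L_k(\zeta|\theta_1) = L_k(\zeta|\theta_2)$ for every $\zeta \in \mathcal{Z}_k$. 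But this contradicts the hypothesis that agent $k$ has informative PMFs, which requires at least one realization $\zeta$ with $L_k(\zeta|\theta_1) \neq L_k(\zeta|\theta_2)$. Hence $\mathcal{D}^2_k \neq \emptyset$.

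Next, I would show $\mathcal{D}^1_k \neq \emptyset$ by a symmetric argument that does not even require informativeness. If $\mathcal{D}^1_k$ were empty, then $Z_k(\zeta) < 0$ for every $\zeta \in \mathcal{Z}_k$, and since $|\mathcal{Z}_k| \geq 2$ this would yield $\sum_{\zeta} Z_k(\zeta) < 0$, contradicting the identity $\sum_{\zeta} Z_k(\zeta) = 0$ derived above. Therefore $\mathcal{D}^1_k \neq \emptyset$ as well.

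There is no real obstacle here: the entire argument rests on the elementary observation that the sum of the signed coefficients $Z_k(\zeta)$ equals zero, together with the informativeness assumption to rule out the degenerate case where every $Z_k(\zeta)$ is zero. The only mild care needed is to note that $\mathcal{D}^1_k$ is defined with a non-strict inequality while $\mathcal{D}^2_k$ is defined with a strict one, which is why the informativeness hypothesis is invoked only for $\mathcal{D}^2_k$.
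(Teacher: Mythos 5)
Your proof is correct and rests on the same underlying fact as the paper's: both likelihood functions are normalized PMFs, so the differences $Z_k(\zeta)$ sum to zero and cannot all have the same strict sign, with informativeness ruling out the all-zero case. The only (minor) difference is organizational — the paper starts from an informative realization $\zeta'$ and case-splits on the sign of $Z_k(\zeta')$, while you treat the two sets separately and correctly observe that $\mathcal{D}^1_k\neq\emptyset$ already follows from normalization alone because of its non-strict inequality.
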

\begin{proof}
{\em See} Appendix \ref{a3}.
\end{proof}
It can be seen from \eqref{eq_confidence} that $Z_k(\zeta)=0$ for all $\zeta\in\mathcal{Z}_k$ for an adversary $k\in\mathcal{N}^m$ with uninformative PMFs and as a result the objective function of the optimization problem \eqref{opt_prao2pr} is equal to $0$ for any choice of $\widehat{L}_k(\cdot\vert\theta_1)$, $\widehat{L}_k(\cdot\vert\theta_2)$. Thus, in the remainder of the paper we assume that all adversaries have informative PMFs. The solution to \eqref{opt_prao2pr} is given by the following result. 
\begin{Thm}
\label{opt_attackth}
{\bf (Distorted PMFs with unknown divergences)}. The attack strategy optimizing \eqref{opt_prao2pr} for an adversary $k\in\mathcal{N}^m$ is given by
\begin{align}
    \label{opt_attack}
    &\widehat{L}_{k}(\zeta|\theta_j)=\begin{cases} \epsilon,\quad\,\text{if } \zeta\in\mathcal{D}^j_{k},\vspace{1mm}\\
                            \displaystyle
                    \frac{Z_{k}(\zeta)(1-|\mathcal{D}^j_{k}|\epsilon)}{\sum\limits_{\zeta\notin\mathcal{D}^j_{k}}Z_{k}(\zeta)},\quad\,\text{if } \zeta\notin\mathcal{D}^j_{k}
                            \end{cases}
\end{align}
where $j\in\{1,2\}$.%
\end{Thm}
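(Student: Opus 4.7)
The plan is to exploit the separable structure of \eqref{opt_prao2pr}. The joint objective splits cleanly into a term that depends only on $\widehat{L}_k(\cdot\vert\theta_1)$ and another that depends only on $\widehat{L}_k(\cdot\vert\theta_2)$, while the feasible set is a product of simplex-like sets. Hence the problem decouples into two independent subproblems: (i) minimize $F_1(L)\triangleq\sum_{\zeta\in\mathcal{Z}_k}Z_k(\zeta)\log L(\zeta)$ over $L=\widehat{L}_k(\cdot\vert\theta_1)$ and (ii) maximize $F_2(L)\triangleq\sum_{\zeta\in\mathcal{Z}_k}Z_k(\zeta)\log L(\zeta)$ over $L=\widehat{L}_k(\cdot\vert\theta_2)$, each subject to $L(\zeta)\geq\epsilon$ and $\sum_\zeta L(\zeta)=1$. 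I would focus on (i); (ii) follows from a mirror-image argument.

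The first step is a perturbation argument to identify the active constraints. I claim any optimizer of (i) satisfies $L(\zeta)=\epsilon$ for every $\zeta\in\mathcal{D}^1_k$. Suppose, for contradiction, that $L(\zeta_0)>\epsilon$ for some $\zeta_0\in\mathcal{D}^1_k$. By Lemma \ref{uniform_rem}, informativeness guarantees $\mathcal{D}^2_k\neq\emptyset$, so I pick any $\zeta_1\in\mathcal{D}^2_k$ and transfer a small mass $\delta>0$ from $L(\zeta_0)$ to $L(\zeta_1)$, keeping all constraints satisfied for small enough $\delta$. Since $Z_k(\zeta_0)\geq 0$ and $\log$ is increasing, the contribution of $\zeta_0$ to $F_1$ cannot increase; since $Z_k(\zeta_1)<0$ and $\log$ is increasing, the contribution of $\zeta_1$ strictly decreases. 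Hence $F_1$ strictly decreases, contradicting optimality, so the lower bound is tight on all of $\mathcal{D}^1_k$, accounting for total mass $|\mathcal{D}^1_k|\epsilon$.

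With these boundary values pinned, the residual problem is to minimize $\sum_{\zeta\in\mathcal{D}^2_k}Z_k(\zeta)\log L(\zeta)$ subject to $\sum_{\zeta\in\mathcal{D}^2_k}L(\zeta)=1-|\mathcal{D}^1_k|\epsilon$ and $L(\zeta)\geq\epsilon$. Because $Z_k(\zeta)<0$ throughout $\mathcal{D}^2_k$, each summand is strictly convex, so this is a strictly convex program on an affine set and admits a unique minimizer determined by Lagrangian stationarity. Writing $\mathcal{L}=\sum_{\zeta\in\mathcal{D}^2_k}Z_k(\zeta)\log L(\zeta)-\lambda\bigl(\sum_{\zeta\in\mathcal{D}^2_k}L(\zeta)-(1-|\mathcal{D}^1_k|\epsilon)\bigr)$ and setting the derivative to zero gives $L(\zeta)=Z_k(\zeta)/\lambda$; eliminating $\lambda$ via the sum constraint yields $L(\zeta)=\frac{Z_k(\zeta)(1-|\mathcal{D}^1_k|\epsilon)}{\sum_{\zeta'\in\mathcal{D}^2_k}Z_k(\zeta')}$, which is \eqref{opt_attack} for $j=1$. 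For $\epsilon$ small enough the resulting masses exceed $\epsilon$, so the lower bound is indeed inactive on $\mathcal{D}^2_k$ and the Lagrangian minimizer is feasible.

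Subproblem (ii) follows by the mirror argument: a similar mass-transfer perturbation, now moving mass from $\mathcal{D}^2_k$ to $\mathcal{D}^1_k$, shows the maximizer satisfies $L(\zeta)=\epsilon$ on $\mathcal{D}^2_k$; the residual concave maximization over $\mathcal{D}^1_k$ is solved by the same Lagrangian calculus and gives the formula for $j=2$. The main obstacle I anticipate is the mixed curvature of the unconstrained objective---concave in the coordinates $\zeta\in\mathcal{D}^1_k$ and convex in those $\zeta\in\mathcal{D}^2_k$---which prevents a blanket KKT-based claim of global optimality. The perturbation argument sidesteps this by first reducing the problem to a strictly convex one on the complement of the pinned set, where Lagrangian optimality conditions are both necessary and sufficient.
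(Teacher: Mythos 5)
Your proposal is correct and follows essentially the same route as the paper's proof: decompose the objective into separate min/max subproblems for $\widehat{L}_k(\cdot\vert\theta_1)$ and $\widehat{L}_k(\cdot\vert\theta_2)$, pin the mass on $\mathcal{D}^j_k$ to $\epsilon$ via a mass-transfer contradiction, and solve the residual strictly convex program on the complement by Lagrangian stationarity. Your added observation that the interior stationary point is feasible only for sufficiently small $\epsilon$ is a point the paper glosses over, but it does not change the argument.
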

\begin{proof}
{\em See} Appendix \ref{a32}.
\end{proof}
A graphical representation of the solution \eqref{opt_attack} 
is presented in Fig. \ref{equal_priors}. The intuition behind the attack strategy is the following. We focus on the construction for $\widehat{L}_{k}(\cdot\vert\theta_1)$ and the rationale is the same for $\widehat{L}_{k}(\cdot\vert\theta_2)$. The constructed PMF $\widehat{L}_{k}(\cdot\vert\theta_1)$ is such that the least possible probability mass (i.e., $\epsilon$) is assigned to every observation $\zeta$ that is more likely to have been generated from state $\theta_1$ (i.e., for all $\zeta\in\mathcal{D}^1_{k}$). 
For the remaining observations that are more likely to be generated from $\theta_2$ (i.e., $\zeta\in\mathcal{D}^2_{k}$) 
the probability mass placed on every $\zeta\in\mathcal{D}^2_{k}$ is in proportion to the difference in probability that $\zeta$ is generated from $\theta_2$ instead of $\theta_1$. 
The more likely it is for $\zeta$ to be generated from $\theta_2$, the more probability mass is placed on $\widehat{L}_k(\zeta|\theta_1)$. 
\begin{Rem}
\label{intuition_str}
Intuitively, the above strategy indicates that 
the fake PMFs should be constructed by following the rationale to 
``\textit{inflate confidence towards the least likely state}". \qedsymb
\end{Rem}

\begin{figure}[!h]
\centering
\includegraphics[width=0.5\textwidth]{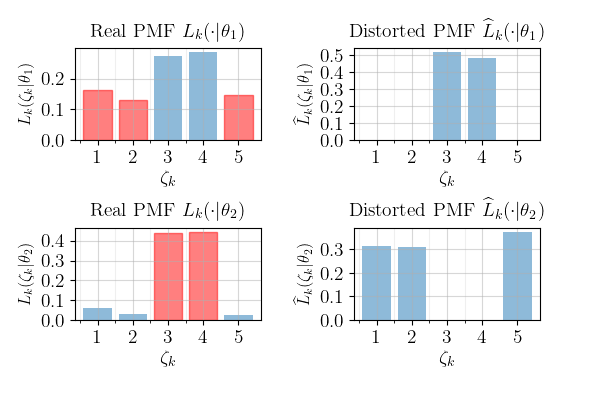}
\caption{An example of actual and distorted PMFs given by Theorem \ref{opt_attackth} with $|\mathcal{Z}_{k}|=5$. Red color depicts the higher value of $L_{k}(\zeta_{k}\vert\theta)$ for every observation $\zeta_{k}$ w.r.t. states (i.e., $L_{k}(\zeta_{k}\vert\theta)$ in red are such that $L_{k}(\zeta_{k}\vert\theta)>L_{k}(\zeta_{k}\vert\theta')$, $\theta\neq\theta'$). We set $\epsilon=10^{-3}$.}
\label{equal_priors}
\end{figure}%
\section{Analysis of the approximate solution}
An interesting question that arises is when the approximate solution presented in Theorem \ref{opt_attackth} misleads the network for both $\theta^{\star}=\theta_1$ and $\theta^{\star}=\theta_2$. First, we examine the case where there is only one adversary in the network (i.e., $|\mathcal{N}^m|=1$). By replacing the solution presented in Theorem \ref{opt_attackth} into \eqref{lem_ratiosm30} (which is equivalent to \eqref{lem_ratiosm} for $|\mathcal{N}^m|=1$) we have
\begin{align}
\label{app_c1}
    &\log\frac{1-|\mathcal{D}^2_{k}|\epsilon}{\epsilon}\hspace{-1mm}>\hspace{-1mm}\frac{S_1-u_{k}(c_{k,1}+b_{k,1})}{u_{k}\xi_{k,1}}-\frac{\sigma_{k,1}}{\xi_{k,1}}\log\frac{\epsilon}{1-|\mathcal{D}^1_{k}|\epsilon}\\
    \label{app_c2}
    &\log\frac{1-|\mathcal{D}^2_{k}|\epsilon}{\epsilon}\hspace{-1mm}<\hspace{-1mm}-\frac{S_2+u_{k}(c_{k,2}+b_{k,2})}{u_{k}\xi_{k,2}}-\frac{\sigma_{k,2}}{\xi_{k,2}}\log\frac{\epsilon}{1-|\mathcal{D}^1_{k}|\epsilon}
\end{align}
where
\begin{align}
\label{approx_c1}
&c_{k,j}=\sum_{\zeta\in\mathcal{D}^1_{k}}L_{k}(\zeta\vert\theta_j)\log\frac{Z_{k}(\zeta)}{\sum_{\zeta'\in\mathcal{D}^1_{k}}Z_{k}(\zeta')}\\
\label{approx_c2}
&b_{k,j}=\sum_{\zeta\in\mathcal{D}^2_{k}}L_{k}(\zeta\vert\theta_j)\log\frac{Z_{k}(\zeta)}{\sum_{\zeta'\in\mathcal{D}^2_{k}}Z_{k}(\zeta')}\\
\label{approx_c3}
&\xi_{k,j}=\sum_{\zeta\in\mathcal{D}^1_{k}}L_{k}(\zeta\vert\theta_j)\\
\label{approx_c4}
&\sigma_{k,j}=\sum_{\zeta\in\mathcal{D}^2_{k}}L_{k}(\zeta\vert\theta_j)
\end{align}
$j\in\{1,2\}$.

The strategy presented in Theorem \ref{opt_attackth} does not mislead the network for both $\theta^{\star}=\theta_1$ and $\theta^{\star}=\theta_2$, in general. 
It can be easily seen by counterexample. Let $\mathcal{N}^m=\{k\}$ with $\mathcal{Z}_{k}=\{\zeta^1,\zeta^2\}$. Also, let $L_k(\zeta^1\vert\theta_1)<L_k(\zeta^1\vert\theta_2)$, which implies that  $L_k(\zeta^2\vert\theta_1)>L_k(\zeta^2\vert\theta_2)$, since $|\mathcal{Z}_k|=2$. From Lemma \ref{uniform_rem} we have that $|\mathcal{D}^1_{k}|=|\mathcal{D}^2_{k}|=1$. Then, \eqref{app_c1}, \eqref{app_c2} yield
\begin{align}
\label{counter1}
    &\log\frac{1-\epsilon}{\epsilon}>\frac{S_1}{u_{k}L_k(\zeta^2\vert\theta_1)}-\frac{L_k(\zeta^1\vert\theta_1)}{L_k(\zeta^2\vert\theta_1)}\log\frac{\epsilon}{1-\epsilon}\\
    \label{counter2}
    &\log\frac{1-\epsilon}{\epsilon}<-\frac{S_2}{u_{k}L_k(\zeta^2\vert\theta_2)}-\frac{L_k(\zeta^1\vert\theta_2)}{L_k(\zeta^2\vert\theta_2)}\log\frac{\epsilon}{1-\epsilon}
\end{align}
By treating $\epsilon$ as a free variable, we observe that the above system of inequalities is of the same form as the system we get under the construction with one free variable in Lemma \ref{one_variable} with $p_{k,1}=p_{k,2}=\epsilon$. Then, it follows from Lemma \ref{one_variable} 
that the system of inequalities \eqref{counter1}, \eqref{counter2} may not have a solution for any $\epsilon$.

However, under certain conditions the network is misled for sufficiently small $\epsilon$ for any $\theta^{\star}\in\Theta$, meaning that there exists $\epsilon^{\star}$ such that  \eqref{lem_ratiosm30} is satisfied for every $j\in\{1,2\}$ and every $0<\epsilon<\epsilon^{\star}$ if the adversary follows the attack strategy given by Theorem \ref{opt_attackth}. We identify such cases in the sequel. First, we introduce a useful definition.
\begin{Def}
\label{separation_definition}
 {\bf(Separable observations)}. Given a partition $\mathcal{Z}^1,\mathcal{Z}^2$ of the set of observations $\mathcal{Z}_{k}$ of an agent $k\in\mathcal{N}$, agent's $k$ observations are called {\em  separable} if the following is true:
\begin{align}
&\sum_{\zeta\in\mathcal{Z}^j}L_{k}(\zeta\vert\theta_1)>\sum_{\zeta\in\mathcal{Z}^{j'}}L_{k}(\zeta\vert\theta_1)\\
&\sum_{\zeta\in\mathcal{Z}^j}L_{k}(\zeta\vert\theta_2)<\sum_{\zeta\in\mathcal{Z}^{j'}}L_{k}(\zeta\vert\theta_2)
\end{align}
for some $j,'j\in\{1,2\}$ such that $j\neq j'$, which is equivalent to
\begin{align}
    &\sum_{\zeta\in\mathcal{Z}^j}L_{k}(\zeta\vert\theta_1)>0.5\\
&\sum_{\zeta\in\mathcal{Z}^j}L_{k}(\zeta\vert\theta_2)<0.5
\end{align}
for some $j\in\{1,2\}$. 
Otherwise, agent $k$'s observations are called {\em non-separable}.\qedsymb
\end{Def}
A simple example of separable and non-separable observations of an agent $k$ for the case of binary observation space (i.e., $|\mathcal{Z}_k|=2$) is given in Fig. \ref{sep_obs_example_fig}. As illustrated in the upper row, if the observations are separable, then  one observation is more likely to be generated by one state, while the other observation is more likely to be generated by the other state (the observations that are most likely for a given state are depicted in red color). In contrary, for the case of non-separable observations (lower row), we observe that one observation is more likely to be generated by both states ($\zeta_2$ in the example). The rationale extends in a straightforward way to the case of multiple observations where the single observations are replaced by $\sigma_{k,j}$ and $\xi_{k,j}$, $j=1,2$ given by \eqref{approx_c3} and \eqref{approx_c4}, respectively.
\begin{figure}[!h]
\centering
\includegraphics[width=0.4\textwidth]{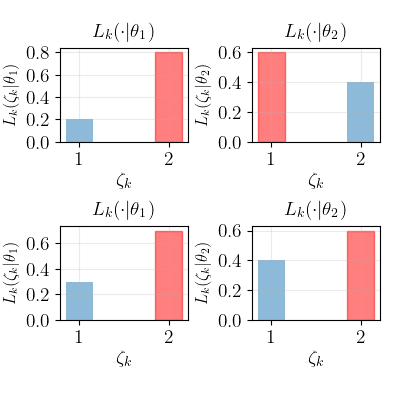}  
\vspace{-5mm}
\caption{An example of separable (upper row) and non-separable observations (lower row) for the case of binary observation space (i.e., $|\mathcal{Z}_k|=2$).}
\label{sep_obs_example_fig}
\end{figure}
\begin{Thm}
\label{efficiciency_Th3} {\bf(Global deception under strategy with unknown divergences. Single adversary case)}. 
Let $\mathcal{N}^m=\{k\}$. Then, under the attack strategy presented in Theorem \ref{opt_attackth} the network is misled for both $\theta^{\star}=\theta_1$ and $\theta^{\star}=\theta_2$ for sufficiently small $\epsilon$ if agent $k$'s observations are separable under the partition $\mathcal{D}^1_{k},\mathcal{D}^2_{k}$, which is equivalent to the following conditions.
\begin{align}
\label{condition_Th31}
    &\sigma_{k,1}<\xi_{k,1}\\
    \label{condition_Th32}
    &\sigma_{k,2}>\xi_{k,2}.
\end{align}
\end{Thm}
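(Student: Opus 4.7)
My plan is to work directly with \eqref{app_c1}--\eqref{app_c2}, which are \eqref{lem_ratiosm30} specialized via the closed-form attack of Theorem~\ref{opt_attackth}, and to study them asymptotically as $\epsilon\to 0^+$. All quantities $S_1, S_2, c_{k,j}, b_{k,j}, \xi_{k,j}, \sigma_{k,j}, u_k, |\mathcal{D}^j_k|$ appearing in their right-hand sides are $\epsilon$-independent constants determined by the true likelihoods and the partition, so the only $\epsilon$-dependence enters through the two logarithms $\log\frac{1-|\mathcal{D}^2_k|\epsilon}{\epsilon}$ and $\log\frac{\epsilon}{1-|\mathcal{D}^1_k|\epsilon}$. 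Setting $\tau := -\log\epsilon$, I would expand these as $\tau + o(1)$ and $-\tau + o(1)$ respectively, so that \eqref{app_c1} reduces to $(1-\sigma_{k,1}/\xi_{k,1})\tau > C_1 + o(1)$ and \eqref{app_c2} to $(1-\sigma_{k,2}/\xi_{k,2})\tau < C_2 + o(1)$, where $C_1, C_2$ are finite constants absorbing the bounded parts.

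Under hypothesis \eqref{condition_Th31} the coefficient $1 - \sigma_{k,1}/\xi_{k,1}$ is strictly positive, so the LHS of the first asymptotic inequality diverges to $+\infty$ and the inequality is eventually satisfied for all sufficiently small $\epsilon$. Symmetrically, under \eqref{condition_Th32} the coefficient $1 - \sigma_{k,2}/\xi_{k,2}$ is strictly negative, so the LHS of the second asymptotic inequality diverges to $-\infty$ and the reverse strict inequality holds eventually. Choosing $\epsilon^\star$ smaller than both crossover thresholds gives a common regime in which both \eqref{app_c1} and \eqref{app_c2} hold, i.e.\ \eqref{lem_ratiosm} holds simultaneously for $\theta^\star = \theta_1$ and for $\theta^\star = \theta_2$, so by Theorem~\ref{inconsistent_learning} the network is misled regardless of the true hypothesis.

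Finally, I would verify the stated equivalence between \eqref{condition_Th31}--\eqref{condition_Th32} and separability under the partition $(\mathcal{D}^1_k, \mathcal{D}^2_k)$. Since $\xi_{k,j} + \sigma_{k,j} = 1$ by \eqref{approx_c3}--\eqref{approx_c4}, the two conditions jointly amount to $\xi_{k,1} > 1/2$ and $\xi_{k,2} < 1/2$, which matches Definition~\ref{separation_definition} with $j=1$. Separability with the alternative choice $j=2$ is precluded by the specific construction of the partition: because $\mathcal{D}^1_k = \{\zeta : L_k(\zeta|\theta_1) \geq L_k(\zeta|\theta_2)\}$, summing over $\mathcal{D}^2_k$ gives $\sigma_{k,2} \geq \sigma_{k,1}$, ruling out $\sigma_{k,1} > 1/2$ together with $\sigma_{k,2} < 1/2$. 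Lemma~\ref{uniform_rem} ensures that both $\mathcal{D}^1_k$ and $\mathcal{D}^2_k$ are non-empty under informative PMFs, so $\xi_{k,j}, \sigma_{k,j} \in (0,1)$ and the ratios are well defined. The main obstacle I anticipate is the bookkeeping needed to confirm that $C_1, C_2$ are genuinely finite: this reduces to checking that $c_{k,j}, b_{k,j}$ in \eqref{approx_c1}--\eqref{approx_c2} only involve logarithms of strictly positive ratios on each part of the partition, with any summands carrying zero weight absorbed by the standard $0 \cdot (-\infty) = 0$ convention; once this is dispatched, the dominance of $\tau$ in the leading-order comparison closes the proof.
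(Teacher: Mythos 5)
Your proposal is correct and follows essentially the same route as the paper's proof in Appendix~\ref{a33}: both isolate the $\epsilon$-dependence into the two logarithms, observe that they behave as $-\log\epsilon$ and $+\log\epsilon$ up to $o(1)$ terms (the paper phrases this geometrically, as the curve $\bar{x}_2=\log(e^{-\bar{x}_1}+|\mathcal{D}^1_{k}|-|\mathcal{D}^2_{k}|)$ approaching the line $\bar{x}_2=-\bar{x}_1$ and eventually entering the feasible region), and conclude from the slope conditions $\sigma_{k,1}/\xi_{k,1}<1<\sigma_{k,2}/\xi_{k,2}$ that both inequalities hold for all sufficiently small $\epsilon$. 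Your explicit check that \eqref{condition_Th31}--\eqref{condition_Th32} coincide with Definition~\ref{separation_definition} for the partition $\mathcal{D}^1_{k},\mathcal{D}^2_{k}$ (via $\xi_{k,j}+\sigma_{k,j}=1$ and $\sigma_{k,1}\leq\sigma_{k,2}$) is a welcome addition the paper leaves implicit; the one caveat is that your $0\cdot(-\infty)=0$ remark does not actually dispose of the degenerate case $Z_{k}(\zeta)=0$ for some $\zeta\in\mathcal{D}^1_{k}$ (there the weight $L_{k}(\zeta\vert\theta_j)$ is positive while the logarithm diverges), but that degeneracy already afflicts the construction of Theorem~\ref{opt_attackth} itself and is implicitly excluded by the paper as well.
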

\begin{proof}
{\em See} Appendix \ref{a33}.
\end{proof}
\noindent The result easily extends to the multiple adversaries case.
\begin{Cor}{\bf(Global deception under strategy with unknown divergences. Multiple adversaries case)}. 
Let all adversaries follow the attack strategy presented in Theorem \ref{opt_attackth}. Then, the network is misled for both $\theta^{\star}=\theta_1$ and $\theta^{\star}=\theta_2$ for sufficiently small $\epsilon$ if for all adversaries $k\in\mathcal{N}^m$ agent $k$'s observations are separable under the partition $\mathcal{D}^1_{k},\mathcal{D}^2_{k}$.
\end{Cor}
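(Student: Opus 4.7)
The plan is to reduce the multiple-adversary case to an agent-by-agent application of Theorem~\ref{efficiciency_Th3}, and then appeal to the sufficient condition recorded in Remark~\ref{suf_condition}. Recall from \eqref{inequality_re} that deception for the true state $\theta^\star = \theta_j$ amounts to the inequality $S_j < \sum_{k\in\mathcal{N}^m} R_{k,j}$. Remark~\ref{suf_condition} already states that it is enough to verify the stronger per-adversary bound $S_j < R_{k,j}$ for every $k\in\mathcal{N}^m$, since this forces each $R_{k,j}$ to exceed the nonnegative quantity $S_j$ and therefore the partial sum.

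The next step is to apply Theorem~\ref{efficiciency_Th3} to each adversary individually. The quantities $R_{k,j}$, $\sigma_{k,j}$, $\xi_{k,j}$, $c_{k,j}$, $b_{k,j}$, and $u_k$ depend only on adversary $k$'s own centrality, its true observation model $L_k$, and the distorted PMFs $\widehat{L}_k$ produced by the strategy of Theorem~\ref{opt_attackth}; they do not depend on what the remaining adversaries are doing. Consequently the equivalent per-adversary inequalities \eqref{app_c1}--\eqref{app_c2} and the asymptotic analysis performed in the proof of Theorem~\ref{efficiciency_Th3} apply verbatim to each $k$ separately. Under the separability hypothesis $\sigma_{k,1}<\xi_{k,1}$ and $\sigma_{k,2}>\xi_{k,2}$, there exists a threshold $\epsilon_k^\star>0$ below which $S_j < R_{k,j}$ holds simultaneously for $j=1,2$.

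Finally, I would assemble a single global threshold by taking $\epsilon^\star \triangleq \min_{k\in\mathcal{N}^m} \epsilon_k^\star$. Since $\mathcal{N}^m$ is finite and every $\epsilon_k^\star$ is strictly positive, $\epsilon^\star>0$. For any $\epsilon<\epsilon^\star$, the per-adversary inequality \eqref{lem_ratiosm30} holds uniformly in $k\in\mathcal{N}^m$ and in $j\in\{1,2\}$. Invoking Remark~\ref{suf_condition} for the value of $j$ matching $\theta^\star$ yields \eqref{lem_ratiosm}, and Theorem~\ref{inconsistent_learning} then guarantees almost sure convergence of every normal agent's belief to the wrong state, whether $\theta^\star = \theta_1$ or $\theta^\star = \theta_2$.

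The main obstacle is essentially bookkeeping: verifying that the partitions $\mathcal{D}^1_k,\mathcal{D}^2_k$ entering the separability hypothesis are exactly those used by the attack strategy of Theorem~\ref{opt_attackth}, and that the separability conditions are genuinely $\epsilon$-independent properties of each adversary's true observation model. No delicate uniform estimates are needed beyond the finite minimum of thresholds, because the logarithmic dominance arguments of Theorem~\ref{efficiciency_Th3} localize to a single adversary and require nothing about the others.
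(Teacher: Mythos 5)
Your proposal is correct and follows essentially the same route as the paper: apply Theorem \ref{efficiciency_Th3} adversary-by-adversary to obtain \eqref{lem_ratiosm30} for $j=1,2$ for each $k\in\mathcal{N}^m$ when $\epsilon$ is sufficiently small, and then invoke Remark \ref{suf_condition} to pass to \eqref{inequality_re} and hence \eqref{lem_ratiosm}. Your explicit construction of the global threshold as the minimum of the finitely many per-adversary thresholds is a detail the paper leaves implicit in the phrase ``for sufficiently small $\epsilon$,'' but it is the same argument.
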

\begin{proof}
The result follows from Theorem \ref{efficiciency_Th3}. Since for sufficiently small $\epsilon$, conditions \eqref{app_c1}, \eqref{app_c2} are satisfied for every $k\in\mathcal{N}^m$, or equivalently \eqref{lem_ratiosm30} is satisfied for $j=1,2$ for every $k\in\mathcal{N}^m$, then \eqref{inequality_re}, or equivalently \eqref{lem_ratiosm}, is satisfied as well (\textit{see} Remark \ref{suf_condition}) for sufficiently small $\epsilon$, which implies that the network is misled for both $\theta^{\star}=\theta_1$ and $\theta^{\star}=\theta_2$.
\end{proof}
Conditions \eqref{condition_Th31}, \eqref{condition_Th32} express that if the structure of the observation model of an adversary is such that the observations are partitioned into $\mathcal{D}^1_{k},\mathcal{D}^2_{k}$ in such a way that the probability of appearance of an observation that belongs in $\mathcal{D}^1_{k}$ is greater than the probability of appearance of an observation that belongs in $\mathcal{D}^2_{k}$ given $\theta^{\star}=\theta_1$ and smaller given $\theta^{\star}=\theta_2$, then the network is misled for sufficiently small $\epsilon$. The intuition behind this result is the following. We observe that the strategy  given by Theorem \ref{opt_attackth} dictates to  
\textit{inflate} the confidence that the state that generated the particular observation is the least likely one (according to the adversary's true likelihood functions) ({\em see} Remark \ref{intuition_str}). Thus, if the true likelihood functions of the adversary are such that the confidence provided by the generated observations satisfy the separation principle described in Definition \ref{separation_definition}, then the confidence placed in the most likely events can be sufficiently small and the strategy will mislead the network for both $\theta^{\star}=\theta_1$ and $\theta^{\star}=\theta_2$. To make this observation clear we present the following example.
\begin{Exa}
\label{example_ref}
Let $\mathcal{N}^m=\{k\}$. Adversary $k$'s observation model is given by Table \ref{table_example}.
\begin{table}[!h]
\renewcommand{\arraystretch}{1.3}
\caption{Adversary $k$'s observation matrix.}
\label{table_example}
\centering
\begin{tabular}{|c||c|c|}
\hline
&$\zeta^1$&$\zeta^2$\\
\hline
$\theta^{\star}=\theta_1$ & $L_{k}(\zeta^1\vert\theta_1)$ & $L_{k}(\zeta^2\vert\theta_1)$\\
\hline
$\theta^{\star}=\theta_2$ & $L_{k}(\zeta^1\vert\theta_2)$ & $L_{k}(\zeta^2\vert\theta_2)$\\
\hline
\end{tabular}
\end{table}
Also, let $\mathcal{D}^1_{k}=\{\zeta^1\},\mathcal{D}^2_{k}=\{\zeta^2\}$. Then, \eqref{approx_c1}-\eqref{approx_c4} yield $\sigma_{k,1}=L_{k}(\zeta^1\vert\theta_1),\sigma_{k,2}=L_{k}(\zeta^1\vert\theta_2),\xi_{k,1}=L_{k}(\zeta^2\vert\theta_1),\xi_{k,2}=L_{k}(\zeta^2\vert\theta_2)$. Eqs. \eqref{condition_Th31}, \eqref{condition_Th32}  in this example are equivalent to
\begin{align}
\label{exa1}
&L_{k}(\zeta^1\vert\theta_1)<L_{k}(\zeta^2\vert\theta_1)   \\
\label{exa2}
&L_{k}(\zeta^1\vert\theta_2)>L_{k}(\zeta^2\vert\theta_2).
\end{align}
This means that if the true state is $\theta_1$, then observation $\zeta^1$ is less likely than $\zeta^2$, while if the true state is $\theta_2$, then observation $\zeta^1$ is more likely than $\zeta^2$. The fake likelihoods given by Theorem \ref{opt_attackth} are shown in Table \ref{table_fake}.
\begin{table}[!h]
\renewcommand{\arraystretch}{1.3}
\caption{Fake likelihood functions.}
\label{table_fake}
\centering
\begin{tabular}{|c||c|c|}
\hline
&$\zeta^1$&$\zeta^2$\\
\hline
$\theta^{\star}=\theta_1$ & $\widehat{L}_{k}(\zeta^1\vert\theta_1)=1-\epsilon$ & $\widehat{L}_{k}(\zeta^2\vert\theta_1)=\epsilon$\\
\hline
$\theta^{\star}=\theta_2$ & $\widehat{L}_{k}(\zeta^1\vert\theta_2)=\epsilon$ & $\widehat{L}_{k}(\zeta^2\vert\theta_2)=1-\epsilon$\\
\hline
\end{tabular}
\end{table}
By treating $\epsilon$ as a free variable and by 
replacing $\log\frac{\epsilon}{1-|\mathcal{D}^1_{k}|\epsilon}$ and $\log\frac{1-|\mathcal{D}^2_{k}|\epsilon}{\epsilon}$ with $\widetilde{x}_1$ and $\widetilde{x}_2$, respectively, in \eqref{app_c1}, \eqref{app_c2} we have
\begin{align}
\label{exa_c1}
    &\widetilde{x}_2>\frac{S_1}{u_{k}L_{k}(\zeta^2\vert\theta_1)}-\frac{L_{k}(\zeta^1\vert\theta_1)}{L_{k}(\zeta^2\vert\theta_1)}\widetilde{x}_1\\
    \label{exa_c2}
    &\widetilde{x}_2<-\frac{S_2}{u_{k}L_{k}(\zeta^2\vert\theta_2)}-\frac{L_{k}(\zeta^1\vert\theta_2)}{L_{k}(\zeta^2\vert\theta_2)}\widetilde{x}_1.
\end{align}
Let $\widetilde{x}_1,\widetilde{x}_2$ take arbitrary values in $\mathbb{R}$. Further, let $\mathcal{R}'$ be the set of values of $\widetilde{x}_1,\widetilde{x}_2\in\mathbb{R}$ that satisfy the inequalities above. 
However, $\widetilde{x}_1,\widetilde{x}_2$ cannot take arbitrary values and we have $\widetilde{x}_2=\log\frac{1-\epsilon}{\epsilon}$ and $\widetilde{x}_1=\log\frac{\epsilon}{1-\epsilon}$, which implies that
\begin{align}
    \widetilde{x}_2=-\widetilde{x}_1.
\end{align}
We observe that due to \eqref{exa1}, \eqref{exa2}, the line $\widetilde{x}_2=-\widetilde{x}_1$ is always contained in the region $\mathcal{R}'$ for sufficiently small $\widetilde{x}_1$, meaning for sufficiently small $\epsilon$. This means that \eqref{lem_ratiosm30} is satisfied for sufficiently small $\epsilon$, which implies that the network is deceived for both $\theta^{\star}=\theta_1$ and $\theta^{\star}=\theta_2$.\qedsymb
\end{Exa}
\begin{Rem}
From the above example, we can draw a connection between the two strategies presented in this paper (Theorem \ref{region} and Theorem \ref{opt_attackth}). First, let us note that the construction in Theorem \ref{region} utilizes two realizations of $\mathcal{Z}_k$, while the construction in Theorem \ref{opt_attackth} utilizes every $\zeta_k\in\mathcal{Z}_k$. That is why the comparison is possible in the example presented above, where we have $|\mathcal{Z}_k|=2$. So, the strategy given by Theorem \ref{opt_attackth} in Example \ref{example_ref} corresponds to the values $p_{k,1}=\epsilon,p_{k,2}=\epsilon$ of the construction presented in Theorem \ref{region}, which yields $x_1=x^-$, $x_2=x^+$ in the transformed linear domain (for convenience also {\em see} Fig. \ref{schematic}). Intuitively, while Theorem \ref{region} characterizes the set of all values of $p_{k,1},p_{k,2}$ that deceive the network for both states under the considered construction, the result of Theorem \ref{opt_attackth} (applied in the binary observation space example) yields the extreme value $p_{k,1}=p_{k,2}=\epsilon$ (or $p_{k,1}=p_{k,2}=1-\epsilon$). Depending on the value of $\epsilon$, as well as on whether the agent's observations are separable, this extreme point might belong in the set of values that mislead the network or not.
\qedsymb
\end{Rem}

In the following Section, we present simulation results that demonstrate the impact of adversaries on the learning process of the network under the attacks studied in this paper.
\section{Simulations}
We assume $15$ agents, with $11$ normal and $4$ malicious agents. Each agent assigns uniform combination weights to its neighbors. The network (illustrated in Fig. \ref{network}) is strongly connected and randomly generated. For simplicity, we assume that all agents have the same observation model, meaning $L_{k}(\theta)=L(\theta)$, for every $k\in\mathcal{N}^m$ and for every $\theta\in\Theta$.

\vspace{-5mm}
\begin{figure}[!h]
\centering
\includegraphics[width=0.4\textwidth]{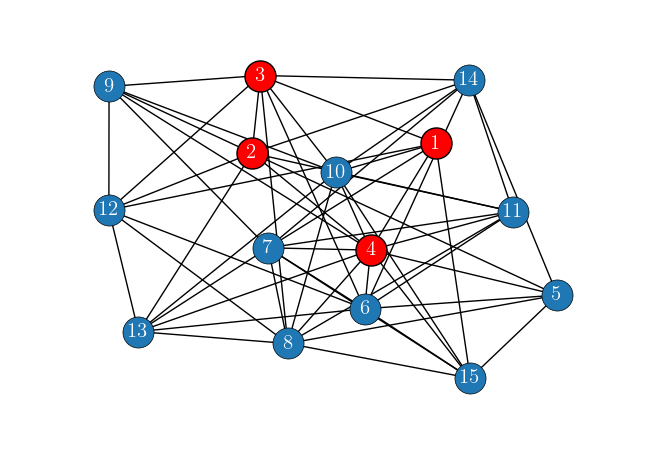}  
\vspace{-5mm}
\caption{Network topology. Adversaries are illustrated in red color, normal agents are illustrated in blue color. All agents have self-loops (not depicted in the figure), meaning $a_{kk}>0$ for all $k\in\mathcal{N}$.}
\label{network}
\end{figure}
\subsection{The impact of observation models and network topology}
First, we want to highlight the impact of the \textit{informativeness} of agents' observation models, which is expressed via agents' KL divergences, and network topology, which is expressed via agents' centrality, on the learning performance under adversarial strategies. For ease of exposition, we assume that all agents observe the state through a binary symmetric channel (BSC), meaning $\mathcal{Z}_k=\{\zeta^1,\zeta^2\}$ for all $k\in\mathcal{N}$, with observation probabilities $L(\zeta^1|\theta_1)=L(\zeta^2|\theta_2)=p$ and $L(\zeta^2|\theta_1)=L(\zeta^1|\theta_2)=1-p$. Finally, we assume that the true hypothesis is $\theta^{\star}=\theta_1$. In this subsection, we illustrate only the impact of the attack strategy with unknown divergences given by Theorem \ref{opt_attackth}. The impact of the strategy with known divergences given by Theorem \ref{region} is demonstrated in the next subsection.

In Fig. \ref{network_ld} the BSC is parametrized with $p=0.8$, while in Fig. \ref{network_d} agents' observation models are more discriminating between the two states (and thus more informative) with $p=0.9$. To show these dependencies we restrict the admissible distortion level of the fake likelihoods by setting $\epsilon=5\times10^{-3}.$ 
 Apart from the dependence on the observation models, we demonstrate the impact of network topology by considering the random topology presented in Fig. \ref{network} and the star topology. In the star topology, the central agent is malicious, resulting in higher overall centrality of the adversaries compared to the random network topology case. 

In both Figs. \ref{network_ld}, \ref{network_d} the agents' belief evolution is shown. We utilize a granular light green to dark orange color map to show each agent's belief $\boldsymbol{\mu}_{k,i}(\theta^{\star})$. Agents in light green denote agents whose beliefs are close to the wrong state, dark orange denotes agents whose beliefs are close to the true state and light orange denotes agents whose beliefs are close to uniform distribution (i.e., $\boldsymbol{\mu}_{k,i}=[0.5,0.5]^{\mathsf{T}}$). As we observe in Fig. \ref{network_ld}, the network converges to the wrong hypothesis at steady-state ({\em see} last row) under the attack strategy given by Theorem \ref{opt_attackth} in both cases of random network (depicted in Fig. \ref{network}) and star network. 
\begin{figure}[!h]
\centering\hspace{-5mm}
\includegraphics[width=0.5\textwidth]{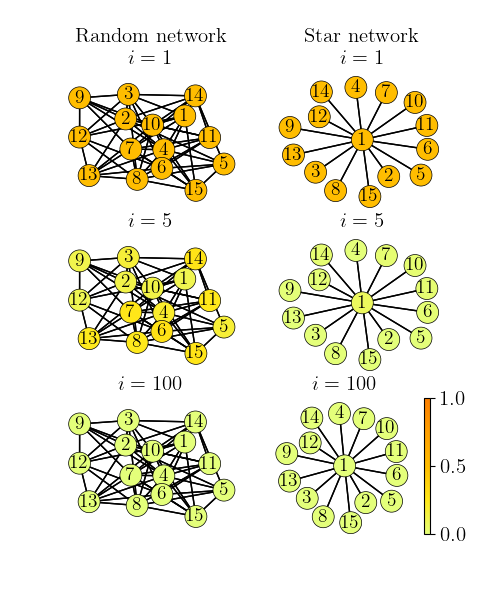}%
\vspace{-3mm}
\caption{Agents' belief evolution (agents' observations are given by a BSC with $p=0.8$) in time instants $i=0$, $i=4$ and $i=100$. Dark Orange: Beliefs close to the true state. Light green: Beliefs close to the wrong state. Light Orange: Beliefs close to uniform distribution. Left sub-figures: Random topology ({\em see} Fig. \ref{network}), Right sub-figures: Star topology.  
Adversaries are agents $1,2,3,4$ ({\em see} Fig. \ref{network}) and they use the attack strategy with unknown divergences given by Theorem \ref{opt_attackth}.}
\label{network_ld}
\end{figure}
\begin{figure}[!h]
\centering\hspace{-5mm}
\includegraphics[width=0.5\textwidth]{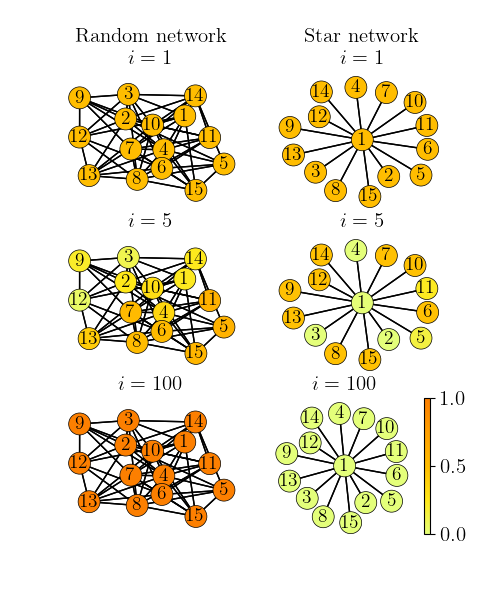}%
\vspace{-3mm}
\caption{Agents' belief evolution with highly discriminative models (agents' observations are given by a BSC with $p=0.9$) at time instants $i=0$, $i=4$ and $i=100$. Dark Orange: Beliefs close to the true state. Light green: Beliefs close to the wrong state. Light Orange: Beliefs close to uniform distribution. Left sub-figures: Random topology ({\em see} Fig. \ref{network}), Right sub-figures: Star topology. Adversaries are agents $1,2,3,4$ ({\em see} Fig. \ref{network}) and they use the attack strategy with unknown divergences given by Theorem \ref{opt_attackth}.}
\label{network_d}
\end{figure}

The same rationale is followed in the experiments conducted for more discriminating models ($p=0.9$). As we observe in Fig. \ref{network_d}, the impact of malicious behavior using the strategy presented in Theorem \ref{opt_attackth} is smaller in this setup, since normal agents are more capable to discriminate between the two hypotheses. More specifically, in the left sub-figure in the last row of Fig. \ref{network_d}, we see that the network converges to the true state for the random network topology. On the other hand, for the star topology, where the central agent is malicious, the network is misled under the same attack strategy, as presented in the right sub-figure in the last row. This is because the overall centrality of the malicious agents is bigger in star topology compared to the random network topology shown in Fig. \ref{network}.

The impact of the observation models and agents' centrality is explored in Fig \ref{obs_cen}. We plot the average of agents' beliefs for the true state (i.e., $\bar{\boldsymbol{\mu}}_i(\theta^{\star})\triangleq\frac{\sum_{k\in\mathcal{N}}\boldsymbol{\mu}_{k,i}(\theta^{\star})}{|\mathcal{N}|}$) at steady-state when adversaries use the strategy with unknown divergences (given by Theorem \ref{opt_attackth}) for different values of the BSC probability $p$ in the left sub-figure and for different values of overall adversaries' centrality (i.e., $\sum_{k\in\mathcal{N}^m}u_k$) in the right sub-figure. 
We consider the network topology given by Fig. \ref{network}. We observe that there is a phase transition phenomenon. More specifically, as we see in the left sub-figure, if agents' observation models are sufficiently discriminating between the two states (i.e., the BSC probability $p$ is close to $1$), then network beliefs converge to the true state, while as the observation models become less discriminating (the BSC probability $p$ is close to $0.5$) adversaries drive the network to the wrong state. Likewise, in the right sub-figure we see that if the overall adversaries' centrality is small, then the network correctly identifies the true hypothesis, whereas if adversaries' centrality is sufficiently high, then the network is misled. Finally, we note that the critical values of BSC probability $p$ and overall adversaries' centrality where the phase transition occurs match the values predicted by Theory (by solving \eqref{lem_ratiosm} w.r.t. $p$ and $\sum_{k\in\mathcal{N}^m}u_{k}$, respectively).
\begin{figure}[!h]
\centering
\includegraphics[width=1\linewidth]{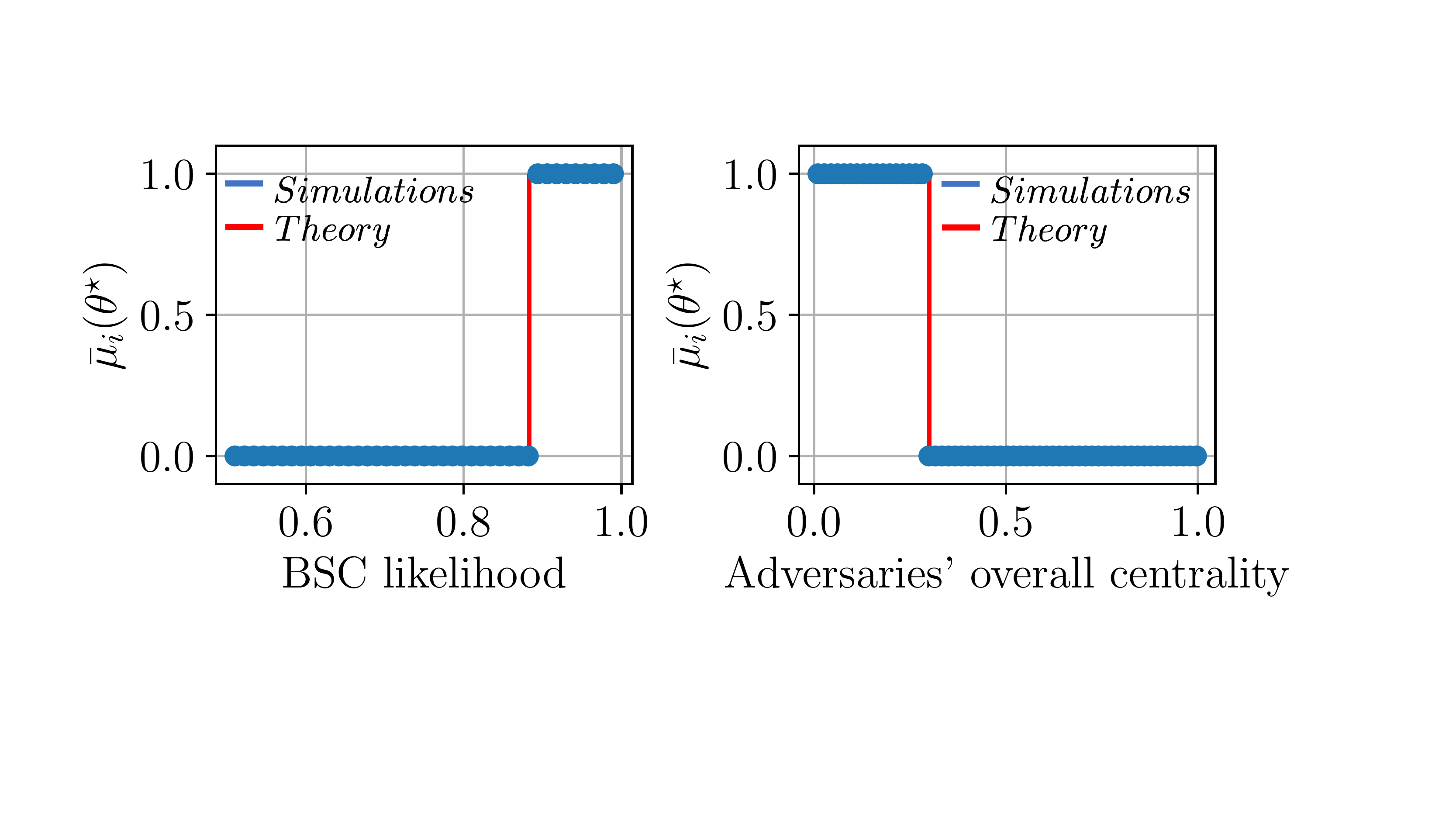}%
\vspace{-12mm}
\caption{Left: Agents' average steady-state beliefs for different BSC observation probabilities $p\in(0.5,1)$. The vertical red line shows the critical value of $p$ as predicted by Theory. Right: Agents' steady-state beliefs for different overall adversaries' centrality. The vertical red line shows the critical value of adversaries' centrality as predicted by Theory (by solving \eqref{lem_ratiosm} w.r.t. $p$ and $\sum_{k\in\mathcal{N}^m}u_{k}$, respectively).}
\label{obs_cen}
\end{figure}
\subsection{The role of separable observations}
In this subsection, we show the evolution of agents' beliefs when adversaries' observation models are {\em separable} ({\em see} Definition \ref{separation_definition}) and when they are not under the presented attack strategies. We note that for the BSC case, observations are always separable, since \eqref{condition_Th31}, \eqref{condition_Th32} are satisfied. In the left sub-figure of Fig. \ref{fig1}, the agents observe the state through a binary symmetric channel (BSC) with $p=0.9$. On the contrary, in the right sub-figure, the agents' observation models are given by $L(\zeta^1|\theta_1)=0.8$, $L(\zeta^2|\theta_1)=0.2$, $L(\zeta^1|\theta_1)=0.55$ and $L(\zeta^2|\theta_2)=0.45$ and thus observations are non-separable. We set $\epsilon=10^{-5}$ to illustrate the meaning of Theorem \ref{efficiciency_Th3}. We also note that, since the agents have the same observation models, all adversaries can be thought as one adversary and $u_{k}$ appearing in the expressions of Theorem \ref{region} is replaced with adversaries' overall centrality $\sum_{k\in\mathcal{N}^m}u_{k}$. We do this in order to avoid $\epsilon$ getting very small values. Also, we note that the value of $\epsilon=10^{-5}$ satisfies \eqref{e_condition} (in the computation of $x_1'$ and $x_2'$, $\sum_{k\in\mathcal{N}^m}u_{k}$ is used instead of $u_{k}$) for the specific setup. We observe that the network beliefs are driven to the wrong hypothesis for both cases when the true state is $\theta^{\star}=\theta_1$ and $\theta^{\star}=\theta_2$ in the left sub-figure for the attack strategies presented in Theorems \ref{region} and \ref{opt_attackth}. We also want to highlight that for the same setup the network was not misled under the attack strategy with unknown divergences in Fig. \ref{network_d} (Left sub-figures) as the value of $\epsilon=5\times10^{-3}$ in that setup was not sufficiently small.

In contrast to the case of separable observations, as we observe in the right sub-figure where \eqref{condition_Th31}, \eqref{condition_Th32} are not satisfied (i.e., observations are non-separable), the network is misled only for one state under the attack strategy with unknown divergences (Theorem \ref{opt_attackth}), while the attack strategy with known divergences (Theorem \ref{region}) drives the network to the wrong hypothesis for both $\theta^{\star}\in\Theta$, as expected. Finally, for comparison reasons we present here the random attack strategy where the distortion functions $\widehat{L}_k(\cdot\vert\theta_1),\widehat{L}_k(\cdot\vert\theta_2)$ are chosen randomly by the adversaries $k\in\mathcal{N}^m$. As we can see the impact of the random attack strategy is insufficient to mislead the network in any case, which highlights the need to appropriately design the fake likelihood functions so that the network is deceived.
\begin{figure}[!h]
\centering
\includegraphics[width=0.5\textwidth]{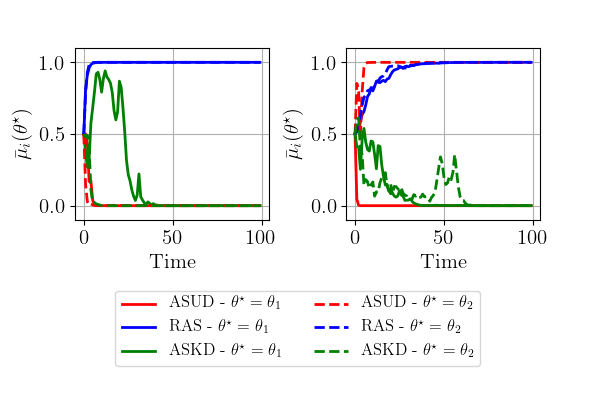}%
\vspace{-3mm}
\caption{Agents' average belief evolution. Left: Separable observations, Right: Non-separable observations. ASUD: 
Attack Strategy with Unknown Divergences, RAS: Random Attack strategy, ASKD: Attack Strategy with Known Divergences.}
\label{fig1}
\end{figure}
\section{Conclusions}
In this paper, the impact of deceptive strategies on social learning was investigated. We characterized the evolution of agents' beliefs under inferential attacks and the adversaries' attack strategies were investigated. We showed that agnostic adversaries, which do not have knowledge about the true state, can always construct fake likelihood functions that provably drive the network to the wrong hypothesis, given that they have some knowledge about the normal sub-network properties. Then, we studied the case where the adversaries do not have any knowledge about the network properties. We formulated an optimization problem to derive the adversarial strategy for that scenario and provided performance guarantees for it. Finally, we illustrated the impact of adversarial strategies on the learning performance under different setups and highlighted the interplay among topology, informativeness of the signals and attack strategies that govern the learning performance of the network. Our results are expected to shed light on the study of more elaborate attack schemes as well as on the development of light-weight detection mechanisms based on agents' characteristics (i.e., network centrality and observation models) and provide useful insight to situations where networks compete with each other in a strategic fashion.

\appendices
\section{Proof of Theorem \ref{inconsistent_learning}}
\label{a1}
\noindent
We define the log-likelihood ratio for each agent $k\in\mathcal{N}$:
\begin{align}
\label{llr}
&\boldsymbol{\mathcal{L}}_{k,i}(\theta)\triangleq\log\frac{L_k'(\boldsymbol{\zeta}_{k,i}|\theta)}{L_k'(\boldsymbol{\zeta}_{k,i}|\theta^{\star})},\quad\theta\neq\theta^{\star},k\in\mathcal{N}^n
\end{align}
where $\theta^{\star}$ is the true state, $L_{\ell}'(\boldsymbol{\zeta}_{\ell,i}|\theta')=L_{\ell}(\boldsymbol{\zeta}_{\ell,i}|\theta')$ if $\ell\in\mathcal{N}^n$ and $L_{\ell}'(\boldsymbol{\zeta}_{\ell,i}|\theta')=\widehat{L}_{\ell}(\boldsymbol{\zeta}_{\ell,i}|\theta')$ if $\ell\in\mathcal{N}^m$ for all $\theta'\in\Theta$. 
We also define the log-belief ratio for each agent $k\in\mathcal{N}$ as
\begin{align}
    \boldsymbol{\lambda}_{k,i}(\theta)\triangleq\log\frac{\boldsymbol{\mu}_{k,i}(\theta)}{\boldsymbol{\mu}_{k,i}(\theta^{\star})}, \quad\theta\neq\theta^{\star}.
\end{align}
Then, by utilizing \eqref{adapt}, \eqref{combine},  \eqref{adapt_malicious} and \eqref{llr}, the above equation yields
\begin{align}
\label{log_l2}
    &\boldsymbol{\lambda}_{k,i}(\theta)
    =\sum_{\ell\in\mathcal{N}_k}a_{\ell k}\boldsymbol{\mathcal{L}}_{\ell,i}(\theta)+
    \sum_{\ell\in\mathcal{N}_k}a_{\ell k}\log\frac{\boldsymbol{\mu}_{\ell,i-1}(\theta)}{\boldsymbol{\mu}_{\ell,i-1}(\theta^{\star})}.
\end{align}
Without loss of generality, we index first the malicious agents, followed by the normal ones. Eq. \eqref{log_l2} is written in matrix-vector notation as
\begin{align}
    \label{log_12mv}
        \boldsymbol{\lambda}_i(\theta)=A^{\mathsf{T}}\boldsymbol{\mathcal{L}}_i(\theta)+A^{\mathsf{T}}\boldsymbol{\lambda}_{i-1}(\theta)
\end{align}
where
{\begin{small}{
\begin{align}
&\boldsymbol{\mathcal{L}}_i(\theta)=\Bigg[\log\frac{\widehat{L}_{1,i}(\boldsymbol{\zeta}_{1,i}|\theta)}{\widehat{L}_{1,i}(\boldsymbol{\zeta}_{1,i}|\theta^{\star})},\ldots
,\log\frac{\widehat{L}_{|\mathcal{N}^m|,i}(\boldsymbol{\zeta}_{|\mathcal{N}^m|,i}|\theta)}{\widehat{L}_{|\mathcal{N}^m|,i}(\boldsymbol{\zeta}_{|\mathcal{N}^m|,i}|\theta^{\star})},\nonumber\\
&\log\frac{L_{|\mathcal{N}^m|+1,i}(\boldsymbol{\zeta}_{|\mathcal{N}^m|+1,i}|\theta)}{L_{|\mathcal{N}^m|+1,i}(\boldsymbol{\zeta}_{|\mathcal{N}^m|+1,i}|\theta^{\star})},\ldots,\log\frac{L_{|\mathcal{N}|,i}(\boldsymbol{\zeta}_{|\mathcal{N}|,i}|\theta)}{L_{|\mathcal{N}|,i}(\boldsymbol{\zeta}_{|\mathcal{N}|,i}|\theta^{\star})}\Bigg]^{\mathsf{T}}
\end{align}
}\end{small}}
and $\boldsymbol{\lambda}_i(\theta)=[\boldsymbol{\lambda}_{1,i}(\theta),\ldots,\boldsymbol{\lambda}_{|\mathcal{N}|,i}(\theta)]^{\mathsf{T}}$.
Iterating \eqref{log_12mv} yields
\begin{align}
    \boldsymbol{\lambda}_i(\theta)=
    \sum^i_{t=1}(A^{\mathsf{T}})^{t}\boldsymbol{\mathcal{L}}_{i-t+1}(\theta)+(A^{\mathsf{T}})^{i}\lambda_0(\theta)
\end{align}
where $\lambda_0$ is assumed to be a deterministic initial vector. Dividing by $i$ and taking the limit as $i\rightarrow\infty$ yields
\begin{align}
    \label{logs_rec}
    &\lim_{i\rightarrow\infty}\frac{1}{i}\boldsymbol{\lambda}_i(\theta)=
    \lim_{i\rightarrow\infty}\frac{1}{i}\sum^i_{t=1}(A^{\mathsf{T}})^{t}\boldsymbol{\mathcal{L}}_{i-t+1}(\theta)\nonumber\\
    &+\lim_{i\rightarrow\infty}\frac{1}{i}(A^{\mathsf{T}})^{i}\lambda_0(\theta).
\end{align}
Since $\mu_{k,0}(\theta)>0$ 
(Assumption \ref{positive_beliefs}), the second term in RHS of \eqref{logs_rec} goes to $0$ as $i\to\infty$. The first term of \eqref{logs_rec} can be rewritten as
\begin{align}
\label{rec_lim}
    &\lim_{i\rightarrow\infty}\frac{1}{i}\sum^i_{t=1}(A^{\mathsf{T}})^{t}\boldsymbol{\mathcal{L}}_{i-t+1}(\theta)\nonumber\\
    &=\lim_{i\rightarrow\infty}\frac{1}{i}\sum^i_{t=1}(A^{t}-u\mathds{1}^{\mathsf{T}})^{\mathsf{T}}\boldsymbol{\mathcal{L}}_{i-t+1}(\theta)\nonumber\\
    &+\lim_{i\rightarrow\infty}\frac{1}{i}\sum^i_{t=1}\mathds{1}u^{\mathsf{T}}\boldsymbol{\mathcal{L}}_{i-t+1}(\theta)
\end{align}
We note that $\boldsymbol{\mathcal{L}}_{\ell,t}(\theta)$ has finite expectation for every $\ell\in\mathcal{N}^n$ from Assumption \ref{obs_independence} and $\boldsymbol{\mathcal{L}}_{\ell,t}(\theta)$ is a bounded random variable for every $\ell\in\mathcal{N}^m$ from Assumption \ref{finiteness} and as a result has finite expectation. Thus, $\boldsymbol{\mathcal{L}}_t(\theta)$ are i.i.d. random vectors with finite expectation and from the strong law of large numbers (SLLN) the second term in RHS of \eqref{rec_lim} yields
\begin{align}
    \hspace{-1mm}\frac{1}{i}\sum^i_{t=1}\mathds{1}u^{\mathsf{T}}\hspace{-0.9mm}\boldsymbol{\mathcal{L}}_{i-t+1}(\theta)\hspace{-1mm}=\hspace{-1mm}
    \frac{1}{i}\sum^{i}_{t=1}\hspace{-0.9mm}\mathds{1}u^{\mathsf{T}}\boldsymbol{\mathcal{L}}_{t}(\theta)\hspace{-0.9mm}
    \overset{\text{a.s.}}\rightarrow\hspace{-0.9mm}\mathds{1}u^{\mathsf{T}}\mathbb{E}\{\boldsymbol{\mathcal{L}}_{t}(\theta)\}.
\end{align}
Now we will show that the first term in the RHS of \eqref{rec_lim} goes to $0$ (a similar argument is found in Lemma 8 from \cite{Virginia_2020b}). 
First, from Assumption~\ref{strongly_connected}, we know that 
\begin{equation}
\lim_{t\to\infty}A^t=u\mathds{1}^{\sf T}
\end{equation}
which in turn implies that, for $\varepsilon>0$, there exists a time index $t_0$ such that for all $t>t_0$:
\begin{equation}
    \left|[A^t]_{\ell k}-u_\ell \right|<\varepsilon\label{eq:convA}
\end{equation}
where $[A^t]_{\ell k}$ denotes the element $\ell k$ of matrix $A^t$. We can write:
\begin{align}
    &\frac{1}{i}\sum^i_{t=1}(A^{t}-u\mathds{1}^{\mathsf{T}})^{\mathsf{T}}\boldsymbol{\mathcal{L}}_{i-t+1}(\theta)\nonumber\\
    &=\frac{1}{i}\sum^i_{t=t_0+1}(A^{t}-u\mathds{1}^{\mathsf{T}})^{\mathsf{T}}\boldsymbol{\mathcal{L}}_{i-t+1}(\theta)\nonumber\\&+\frac{1}{i}\sum^{t_0}_{t=1}(A^{t}-u\mathds{1}^{\mathsf{T}})^{\mathsf{T}}\boldsymbol{\mathcal{L}}_{i-t+1}(\theta).\label{eq:conveq}
\end{align}
In view of \eqref{eq:convA}, we can write the absolute value of each component $k$ of the first term on the RHS of \eqref{eq:conveq} as
\begin{align}
\label{first_term}
    &\frac{1}{i}\left|\sum^i_{t=t_0+1}\sum_{\ell\in \mathcal{N}}([A^{t}]_{\ell k}-u_\ell)\boldsymbol{\mathcal{L}}_{\ell, i-t+1}(\theta)\right|\nonumber\\
    &\leq\hspace{-0.9mm}\varepsilon\hspace{-0.9mm}\sum_{\ell\in \mathcal{N}}\hspace{-0.9mm}\frac{1}{i}\hspace{-0.9mm}\sum^i_{t=t_0+1}|\boldsymbol{\mathcal{L}}_{\ell,i-t+1}(\theta)|\hspace{-0.9mm}=\hspace{-0.9mm}\varepsilon \sum_{\ell\in \mathcal{N}}\frac{1}{i}\sum^{i-t_0}_{t=1}|\boldsymbol{\mathcal{L}}_{\ell, t}(\theta)|.
\end{align}
As stated earlier, from Assumptions \ref{obs_independence} and \ref{finiteness}, $\boldsymbol{\mathcal{L}}_{\ell,t}(\theta)$ has finite expectation for all $\ell\in\mathcal{N}$ and as a result, 
$|\boldsymbol{\mathcal{L}}_{\ell,t}(\theta)|$ has finite expectation \cite{Billingsley}. In view of the i.i.d. property of variable $|\boldsymbol{\mathcal{L}}_{\ell,t}(\theta)|$, from the SLLN we have that
\begin{align}
&\frac{1}{i}\sum_{t=1}^{i-t_0}|\boldsymbol{\mathcal{L}}_{\ell,t}(\theta)|\nonumber\\
&\hspace{-1mm}=\frac{i-t_0}{i}\frac{1}{i-t_0}\sum_{t=1}^{i-t_0}|\boldsymbol{\mathcal{L}}_{\ell,t}(\theta)|\stackrel{\text{a.s.}}{\longrightarrow}\mathbb{E}\{|\boldsymbol{\mathcal{L}}_{\ell,t}(\theta)|\}<+\infty.\label{eq:integrconv}
\end{align}
From \eqref{first_term} and \eqref{eq:integrconv} we have
\begin{align}
\label{lim_first}
    &\limsup_{i\to\infty}\frac{1}{i}\left|\sum^i_{t=t_0+1}\sum_{\ell\in \mathcal{N}}([A^{t}]_{\ell k}-u_\ell)\boldsymbol{\mathcal{L}}_{\ell, i-t+1}(\theta)\right|\nonumber\\
    &\overset{\text{a.s.}}\leq \varepsilon \sum_{\ell\in \mathcal{N}}\mathbb{E}\{|\boldsymbol{\mathcal{L}}_{\ell, t}(\theta)|\}
\end{align}
where $\sum_{\ell\in \mathcal{N}}\mathbb{E}\{|\boldsymbol{\mathcal{L}}_{\ell, t}(\theta)|\}$ is independent of $\epsilon$. 
Taking the limit as $\epsilon\to0$, we conclude that the limit superior vanishes and therefore, the first term on the RHS of \eqref{eq:conveq} vanishes. 
Next, taking the absolute value of the second term on the RHS of \eqref{eq:conveq} yields for each component $k$:
\begin{align}
    &\frac{1}{i}\left| \sum^{t_0}_{t=1}\sum_{\ell\in \mathcal{N}}([A^{t}]_{\ell k}-u_\ell)\boldsymbol{\mathcal{L}}_{\ell,i-t+1}(\theta)\right|\nonumber\\
    &\leq2\sum_{\ell\in \mathcal{N}}\frac{1}{i} \sum^{t_0}_{t=1}\left|\boldsymbol{\mathcal{L}}_{\ell,i-t+1}(\theta)\right|\label{eq:integrconv1}
\end{align}
where the last step follows from the fact that $A$ is left stochastic and the elements of the Perron eigenvector $u$ are strictly smaller than $1$. We can decompose the expression contained in the RHS of \eqref{eq:integrconv1} as:
\begin{align}
    &\frac{1}{i}\sum^{t_0}_{t=1}\left|\boldsymbol{\mathcal{L}}_{\ell,i-t+1}(\theta)\right|=\frac{1}{i}\sum^i_{t=1}\left|\boldsymbol{\mathcal{L}}_{\ell,i-t+1}(\theta)\right|\label{eq:sllncont}\nonumber\\
    &-\frac{1}{i}\sum^i_{t=t_0+1}\left|\boldsymbol{\mathcal{L}}_{\ell,i-t+1}(\theta)\right|
    =\frac{1}{i}\sum^{i}_{t=1}\left|\boldsymbol{\mathcal{L}}_{\ell,t}(\theta)\right|-\frac{1}{i}\sum^{i-t_0}_{t=1}\left|\boldsymbol{\mathcal{L}}_{\ell,t}(\theta)\right|.
\end{align}
From the SLLN both terms on the RHS of \eqref{eq:sllncont} go almost surely to $\mathbb{E}\{|\boldsymbol{\mathcal{L}}_{\ell, t}(\theta)|\}$, which implies by the continuous mapping theorem (\cite{Shao_2003} - Theorem 1.10) that
\begin{equation}
\label{lim2}
     \frac{1}{i}\sum^{t_0}_{t=1}\left|\boldsymbol{\mathcal{L}}_{\ell,i-t+1}(\theta)\right|\stackrel{\text{a.s.}}{\longrightarrow}0.
\end{equation}
Hence, \eqref{logs_rec} yields
\begin{align}
\label{lim_logasymmetric}
    &\lim_{i\rightarrow\infty}\frac{1}{i}\boldsymbol{\lambda}_i(\theta)\stackrel{a.s.}{\longrightarrow}\mathds{1}u^{\mathsf{T}}\mathbb{E}\{\boldsymbol{\mathcal{L}}_t\}=
    \sum^N_{\ell=1}u_{\ell}\mathbb{E}\{\boldsymbol{\mathcal{L}}_{\ell,t}(\theta)\}\mathds{1}\nonumber\\
    &=\Bigg(\sum_{\ell\in\mathcal{N}^n}u_{\ell}\mathbb{E}\Bigg\{\log\frac{L_{\ell}(\boldsymbol{\zeta}_{\ell}|\theta)}{L_{\ell}(\boldsymbol{\zeta}_{\ell}|\theta^{\star})}\Bigg\}\nonumber\\
    &+\sum_{\ell\in\mathcal{N}^m}u_{\ell}\mathbb{E}\Bigg\{\log\frac{\widehat{L}_{\ell}(\boldsymbol{\zeta}_{\ell}|\theta)}{\widehat{L}_{\ell}(\boldsymbol{\zeta}_{\ell}|\theta^{\star})}\Bigg\}\Bigg)\mathds{1}.
\end{align}
Observing \eqref{lim_logasymmetric} we conclude the following.
{\begin{small}{
\begin{align}
\label{req1_asm}
    &\sum_{\ell\in\mathcal{N}^n}u_{\ell}\mathbb{E}\Big\{\log\frac{{L_{\ell}(\boldsymbol{\zeta}_{\ell}|\theta^{\star})}}{{L_{\ell}(\boldsymbol{\zeta}_{\ell}|\theta)}}\Big\}>
    \sum_{\ell\in\mathcal{N}^m}u_{\ell}\mathbb{E}\Big\{\log\frac{\widehat{L}_{\ell}(\boldsymbol{\zeta}_{\ell}|\theta)}{\widehat{L}_{\ell}(\boldsymbol{\zeta}_{\ell}|\theta^{\star})}\Big\}\nonumber\\
    &\Rightarrow\boldsymbol{\lambda}_{\ell,i}(\theta)\stackrel{a.s.}{\longrightarrow}-\infty\Rightarrow\boldsymbol{\mu}_{\ell,i}(\theta)\stackrel{a.s.}{\longrightarrow} 0\Rightarrow\boldsymbol{\mu}_{\ell,i}(\theta^{\star})\stackrel{a.s.}{\longrightarrow} 1
    ,\forall \ell\in\mathcal{N}
    \end{align}
     }\end{small}}%
    and
    {\begin{small}{
    \begin{align}
    \label{req2_asm}
    &\sum_{\ell\in\mathcal{N}^n}u_{\ell}\mathbb{E}\Bigg\{\log\frac{{L_{\ell}(\boldsymbol{\zeta}_{\ell}|\theta^{\star})}}{{L_{\ell}(\boldsymbol{\zeta}_{\ell}|\theta)}}\Bigg\}<
    \sum_{\ell\in\mathcal{N}^m}u_{\ell}\mathbb{E}\Bigg\{\log\frac{\widehat{L}_{\ell}(\boldsymbol{\zeta}_{\ell}|\theta)}{\widehat{L}_{\ell}(\boldsymbol{\zeta}_{\ell}|\theta^{\star})}\Bigg\}\nonumber\\
    &\Rightarrow\boldsymbol{\lambda}_i(\theta)\stackrel{a.s.}{\longrightarrow}+\infty\Rightarrow\boldsymbol{\mu}_{\ell,i}(\theta^{\star})\stackrel{a.s.}{\longrightarrow}0, \forall \ell\in\mathcal{N}.
\end{align}
}\end{small}}
This concludes the proof.\qedsymb
\section{Proof of Lemma \ref{Lem_uninformative}}
\label{a2}
Let us assume that there is only one adversary in the network denoted by $k$ (i.e., $\mathcal{N}^m=\{k\}$). Then condition \eqref{lem_ratiosm} is equivalent to \eqref{lem_ratiosm30}, which yields:
\begin{align}
&S_1<u_{k}\sum_{\zeta_{k}\in\mathcal{Z}_{k}}L_{k}(\zeta_{k}|\theta_1)\log\frac{\widehat{L}_{k}(\zeta_{k}|\theta_2)}{\widehat{L}_{k}(\zeta_{k}|\theta_1)}=R_{k,1}\\
&S_2<u_{k}\sum_{\zeta_{k}\in\mathcal{Z}_{k}}L_{k}(\zeta_{k}|\theta_2)\log\frac{\widehat{L}_{k}(\zeta_{k}|\theta_1)}{\widehat{L}_{k}(\zeta_{k}|\theta_2)}=R_{k,2}.
\end{align}
We show the result by contradiction. Let us assume that the corrupted PMFs $\widehat{L}_{k}(\cdot\vert\theta_1),\widehat{L}_{k}(\cdot\vert\theta_2)$ satisfy the inequalities above. Then, it follows that both $R_{k,1}>0$ and $R_{k,2}>0$ since $S_1$ and $S_2$ are non-negative as they are positive weighted sums of KL divergences. 

However, agent $k$'s observations are uninformative, and as a result we have:
\begin{align}
\label{contradiction_eq}
    &R_{k,2}=u_{k}\sum_{\zeta_{k}\in\mathcal{Z}_{k}}L_{k}(\zeta_{k}|\theta_2)\log\frac{\widehat{L}_{k}(\zeta_{k}|\theta_1)}{\widehat{L}_{k}(\zeta_{k}|\theta_2)}\nonumber\\
    &\overset{(a)}=-u_{k}\sum_{\zeta_{k}\in\mathcal{Z}_{k}}L_{k}(\zeta_{k}|\theta_1)\log\frac{\widehat{L}_{k}(\zeta_{k}|\theta_2)}{\widehat{L}_{k}(\zeta_{k}|\theta_1)}=-R_{k,1}
\end{align}
where $(a)$ is true due to the fact that if agent $k$ has uninformative PMFs, then  $L_k(\zeta_k\vert\theta_1)=L_k(\zeta_k\vert\theta_2)$ for all $\zeta_k\in\mathcal{Z}_k$. Eq.  \eqref{contradiction_eq} leads to a contradiction, since it implies that $R_{k,1}$ and $R_{k,2}$ cannot be both positive.

The result extends in a straightforward way to the case where all adversaries have uninformative PMFs, by using the above reasoning and by utilizing \eqref{inequality_re} instead of \eqref{lem_ratiosm30}.\qedsymb
\section{Proof of Theorem \ref{region}}
\label{a23}
We prove first an auxiliary lemma, which establishes that for an adversary $k$ with informative PMFs there always exist realizations $\zeta^1,\zeta^2\in\mathcal{Z}_k$ such that \eqref{det_positive} holds.
\begin{Lem}
\label{non_zerodet}{\bf Existence of appropriate realizations of $\boldsymbol{\zeta}\in\mathcal{Z}_k$}. 
For an agent $k\in\mathcal{N}$ with informative PMFs, there always exist $\zeta^1\neq\zeta^2$, $\zeta^1,\zeta^2\in\mathcal{Z}_k$ such that $L_{k}(\zeta^1\vert\theta_1)L_{k}(\zeta^2\vert\theta_2)\neq L_{k}(\zeta^1\vert\theta_2)L_{k}(\zeta^2\vert\theta_1)$.
\end{Lem}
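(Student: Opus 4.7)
The plan is to argue by contrapositive: assume that \emph{for every} pair $\zeta^1\neq\zeta^2$ in $\mathcal{Z}_k$ the equality $L_k(\zeta^1|\theta_1)L_k(\zeta^2|\theta_2)=L_k(\zeta^1|\theta_2)L_k(\zeta^2|\theta_1)$ holds, and then deduce that the PMFs $L_k(\cdot|\theta_1)$ and $L_k(\cdot|\theta_2)$ must coincide, contradicting informativeness.

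The key observation is that the assumed equality is exactly the statement that the $2\times 2$ determinant formed by the rows indexed by $\zeta^1,\zeta^2$ of the $|\mathcal{Z}_k|\times 2$ matrix
\begin{equation*}
M \triangleq \bigl[\,L_k(\cdot|\theta_1)\ \ L_k(\cdot|\theta_2)\,\bigr]
\end{equation*}
vanishes. If this holds for \emph{all} pairs, then every $2\times 2$ minor of $M$ is zero, so $\mathrm{rank}(M)\leq 1$, i.e. the two columns of $M$ are linearly dependent.

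Since each column is a PMF on $\mathcal{Z}_k$, neither column is the zero vector, so linear dependence forces the existence of a scalar $c$ with $L_k(\zeta|\theta_1)=c\,L_k(\zeta|\theta_2)$ for every $\zeta\in\mathcal{Z}_k$. Summing over $\zeta$ and using that both likelihoods are PMFs yields $1=c\cdot 1$, hence $c=1$, so $L_k(\zeta|\theta_1)=L_k(\zeta|\theta_2)$ for all $\zeta\in\mathcal{Z}_k$. This contradicts the assumption that agent $k$'s PMFs are informative, and the lemma follows.

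The argument is essentially a one-line linear-algebra fact, so there is no real obstacle; the only point worth articulating carefully is the reduction "all $2\times 2$ minors vanish $\Rightarrow$ rank $\leq 1$ $\Rightarrow$ proportional columns," together with the observation that the proportionality constant is pinned down to $1$ by the normalization of PMFs (so that we may bypass any concern about zero entries in individual columns).
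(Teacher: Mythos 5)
Your proof is correct, but it takes a genuinely different route from the paper's. The paper argues constructively: informativeness gives one realization $\zeta^1$ with, say, $L_k(\zeta^1|\theta_1)>L_k(\zeta^1|\theta_2)$, and the normalization $\sum_\zeta L_k(\zeta|\theta_1)=\sum_\zeta L_k(\zeta|\theta_2)=1$ then forces the existence of another realization $\zeta^2$ where the inequality is reversed; multiplying the two strict inequalities (all quantities being nonnegative) gives $L_k(\zeta^1|\theta_1)L_k(\zeta^2|\theta_2)>L_k(\zeta^1|\theta_2)L_k(\zeta^2|\theta_1)$. You instead argue by contrapositive through linear algebra: the vanishing of all $2\times 2$ minors of the $|\mathcal{Z}_k|\times 2$ likelihood matrix forces rank at most one, hence proportional columns, and the PMF normalization pins the proportionality constant to $1$, contradicting informativeness. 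Both arguments are sound (your reduction from linear dependence of two nonzero columns to a single scalar $c$ is fine, and the normalization step correctly handles any zero entries). The trade-off is that your argument is shorter and arguably more transparent, but non-constructive: it certifies that a good pair $(\zeta^1,\zeta^2)$ exists without exhibiting one, whereas the paper's version additionally tells the adversary how to find such a pair and relates the choice to the sign of $d_k$, which is the case distinction actually used in the construction of Theorem~2. For the purposes of the lemma as stated, either proof suffices.
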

\begin{proof}
For an agent $k\in\mathcal{N}$ with informative PMFs there exists at least one observation $\zeta^1\in\mathcal{Z}_{k}$ such that:
\begin{align}
    \label{cs1}
    L_{k}(\zeta^1\vert\theta_1)>L_{k}(\zeta^1\vert\theta_2)
\end{align}
or
\begin{align}
\label{cs2}
    L_{k}(\zeta^1\vert\theta_1)<L_{k}(\zeta^1\vert\theta_2).
\end{align}
Otherwise the PMFs would be uninformative. We focus on the first case (i.e., \eqref{cs1}) and the second case (i.e., \eqref{cs2}) follows accordingly. Let us assume that there exists no $\zeta\in\mathcal{Z}_{k}$ such that $L_{k}(\zeta\vert\theta_1)<L_{k}(\zeta\vert\theta_2)$. Then,
\begin{align}
    \sum_{\zeta}L_{k}(\zeta\vert\theta_1)=1>\sum_{\zeta}L_{k}(\zeta\vert\theta_2)
\end{align}
which cannot hold, since $L_{k}(\cdot\vert\theta)$, $\theta\in\Theta$ is a PMF. Thus, there always exists at least one observation $\zeta^2\in\mathcal{Z}_{k}$ such that
\begin{align}
    L_{k}(\zeta^2\vert\theta_1)<L_{k}(\zeta^2\vert\theta_2).
\end{align}
Then, it follows that
\begin{align}
    L_{k}(\zeta^1\vert\theta_1)L_{k}(\zeta^2\vert\theta_2)> L_{k}(\zeta^1\vert\theta_2)L_{k}(\zeta^2\vert\theta_1).
\end{align}
In the second case (i.e., \eqref{cs2}) we would arrive at the same inequality as above with opposite direction (i.e., $<$). 
Thus, we conclude that for an agent $k$ with informative PMFs there always exist $\zeta^1,\zeta^2\in\mathcal{Z}_{k}$ such that
\begin{align}
    L_{k}(\zeta^1\vert\theta_1)L_{k}(\zeta^2\vert\theta_2)\neq L_{k}(\zeta^1\vert\theta_2)L_{k}(\zeta^2\vert\theta_1).
\end{align}
\end{proof}
Now, we proceed with the proof of Theorem \ref{region}. For $|\mathcal{N}^m|=1$ \eqref{lem_ratiosm} is equivalent to \eqref{lem_ratiosm30}. Then, for the construction \eqref{construction10}, \eqref{lem_ratiosm30} yields
\begin{align}
\label{ineq1b}
    &\log\frac{\alpha_{k}-p_{k,1}}{p_{k,2}}>\frac{S_1}{u
    _kL_{k}(\zeta^2_{k}\vert\theta_1)}-\frac{L_{k}(\zeta^1_{k}\vert\theta_1)}{L_{k}(\zeta^2_{k}\vert\theta_1)}\log\frac{p_{k,1}}{\alpha_{k}-p_{k,2}}\\
    \label{ineq2b}
    &\log\frac{\alpha_{k}-p_{k,1}}{p_{k,2}}<-\frac{S_2}{u_{k}L_{k}(\zeta^2_{k}\vert\theta_2)}-\frac{L_{k}(\zeta^1_{k}\vert\theta_2)}{L_{k}(\zeta^2_{k}\vert\theta_2)}\log\frac{p_{k,1}}{\alpha_{k}-p_{k,2}}.
\end{align}
where $\alpha_k=1-(|\mathcal{Z}_k|-2)\epsilon$. Note that the rest of the terms in RHS of \eqref{lem_ratiosm30} (i.e., $L_k(\zeta_k|\theta^{\star})\log\frac{\widehat{L}_k(\zeta_k\vert\theta^{\star})}{\widehat{L}_k(\zeta_k\vert\theta)}, \zeta_k\neq\zeta^1_k,\zeta^2_k$) vanish due to choice $\widehat{L}_k(\zeta_k\vert\theta^{\star})=\widehat{L}_k(\zeta_k\vert\theta)=\epsilon$, for every $\theta^{\star},\theta\in\Theta,\theta^{\star}\neq\theta$ and for every $\zeta_k\neq\zeta^1_k,\zeta^2_k$. We observe that the above system of inequalities is non-linear w.r.t. $p_{k,1},p_{k,2}$. However, the system is linear w.r.t. to the log-likelihood ratios $\log\frac{p_{k,1}}{\alpha_{k}-p_{k,2}},\log\frac{\alpha_{k}-p_{k,1}}{p_{k,2}}$. Motivated by this observation, we introduce:
\begin{align}
\label{x_1}
&x_1\triangleq\log\frac{p_{k,1}}{\alpha_k-p_{k,2}}\\
\label{x_2}
&x_2\triangleq\log\frac{\alpha_k-p_{k,1}}{p_{k,2}}.
\end{align}
Next, instead of solving the system \eqref{ineq1b}, \eqref{ineq2b} directly w.r.t. $p_{k,1},p_{k,2}$, we  
choose to solve the following system of inequalities w.r.t. $x_1,x_2$.
\begin{align}
\label{sys11}
    &x_2>\frac{S_1}{u
    _kL_{k}(\zeta^2_{k}\vert\theta_1)}-\frac{L_{k}(\zeta^1_{k}\vert\theta_1)}{L_{k}(\zeta^2_{k}\vert\theta_1)}x_1\\
    \label{sys22}
    &x_2<-\frac{S_2}{u_{k}L_{k}(\zeta^2_{k}\vert\theta_2)}-\frac{L_{k}(\zeta^1_{k}\vert\theta_2)}{L_{k}(\zeta^2_{k}\vert\theta_2)}x_1
\end{align}
with $x_1,x_2\in\mathbb{R}$ (we do not restrict $x_1,x_2$ to satisfy \eqref{x_1}, \eqref{x_2}, respectively). After solving the above system, we will show that for every $x_1,x_2\in\mathbb{R}$ satisfying \eqref{sys11} and \eqref{sys22}, there exist appropriate probability values $p_{k,1},p_{k,2}$ that satisfy \eqref{x_1} and \eqref{x_2}. Let us define the region
\begin{align}
\widetilde{\mathcal{R}}_k\triangleq\{ (x_1,x_2)\in\mathbb{R}: \eqref{sys11}, \eqref{sys22} \text{ hold}\}.
\end{align}
$\widetilde{\mathcal{R}}_k$ is defined by the following linear relations:
\begin{align}
\label{line1}
    &x_2=\frac{S_1-u_kL_k(\zeta^1_k\vert\theta_1)x_1}{u
    _kL_k(\zeta^2_k\vert\theta_1)}\triangleq r_1(x_1)\\
\label{line2}
    &x_2=-\frac{S_2+u_kL_k(\zeta^1_k\vert\theta_2)x_1}{u_kL_k(\zeta^2_k\vert\theta_2)}\triangleq r_2(x_1)
\end{align}
The intersection point is
\begin{align}
    &(x_1',x_2')=\Big(\frac{n_2}{u_kd_k},-\frac{n_1}{u_kd_k}\Big)
\end{align}
where $n_j=L_k(\zeta^j_k\vert\theta_2)S_1+L_k(\zeta^j_k\vert\theta_1)S_2$, $j=1,2$ and $d_k=L_k(\zeta^2_k\vert\theta_2)L_k(\zeta^1_k\vert\theta_1)-L_k(\zeta^2_k\vert\theta_1)L_k(\zeta^1_k\vert\theta_2)$. Note that $x_1',x_2'<\infty$ since $d_k\neq0$ for an agent with informative PMFs from Lemma \ref{non_zerodet}, $u_k\neq0$ since it is an entry of the Perron eigenvector and $S_1,S_2<\infty$ from Assumption \ref{obs_independence}. Both slopes are negative and $r_1(x_1),r_2(x_1)$ intersect the $x_2$ axis at points
\begin{align}
&r_1(x_1=0)=\frac{S_1}{u_kL_k(\widehat{\zeta}_k\vert\theta_1)}\geq0\\
&r_2(x_1=0)=-\frac{S_2}{u_kL_k(\widehat{\zeta}_k\vert\theta_2)}\leq0
\end{align}
respectively. 
Thus, the region $\widetilde{\mathcal{R}}_k$ is given by:
\begin{align}
    \widetilde{\mathcal{R}}_k\hspace{-1mm}=\hspace{-1mm}\begin{cases}\{(x_1,x_2):x_1<x_1',r_1(x_1)<
    x_2<r_2(x_1),\text{ if } d_k<0,\\
    \{(x_1,x_2):x_1>x_1',r_1(x_1)<
    x_2<r_2(x_1),\text{ if } d_k>0.
    \end{cases}
\end{align}
Next, we show that for any $x_1,x_2\in\widetilde{\mathcal{R}}_k$ the following is true
\begin{align}
\label{e1_pr}
    &0<p_{k,1}<1,\\
    \label{e2_pr}
    &0<p_{k,2}<1.
\end{align}
Solving \eqref{x_1}, \eqref{x_2} w.r.t. $p_{k,1},p_{k,2}$ we get
\begin{align}
\label{e1}
    &p_{k,1}=\frac{e^{x_1}\alpha_k(e^{x_2}-1)}{e^{x_2}-e^{x_1}}\\
\label{e2}
    &p_{k,2}=\frac{\alpha_k(1-e^{x_1})}{e^{x_2}-e^{x_1}}.
\end{align}
Let $d_k<0$
. Then, we have for every $x_1,x_2\in\widetilde{\mathcal{R}}_k$
\begin{align}
    &x_1<x_1'<0\iff e^{x_1}<1\\
    &x_2>x_2'>0\iff e^{x_2}>1.
\end{align}
The second inequality holds because $x_2'>0$ and both slopes of $r_1(x_1)$ and $r_2(x_1)$ are negative. Then, utilizing \eqref{e1}, \eqref{e2} we can verify that $p_{k,1},p_{k,2}>0$. Moreover, we have
\begin{align}
    &p_{k,1}=\frac{e^{x_1}\alpha_k(e^{x_2}-1)}{e^{x_2}-e^{x_1}}<\frac{e^{x_1}(e^{x_2}-1)}{e^{x_2}-e^{x_1}}<1\\
    &p_{k,2}=\frac{\alpha_k(1-e^{x_1})}{e^{x_2}-e^{x_1}}<\frac{1-e^{x_1}}{e^{x_2}-e^{x_1}}<1
\end{align}
Thus, \eqref{e1_pr}, \eqref{e2_pr} hold.

On the other hand, if $d_k>0$ we have for every $x_1,x_2\in\widetilde{\mathcal{R}}_k$
\begin{align}
&x_1>x_1'>0\iff e^{x_1}>1\\
&x_2<x_2'<0\iff e^{x_2}<1.
\end{align}
Then, working in a similar fashion as previously, from \eqref{e1}, \eqref{e2} we can verify that \eqref{e1_pr}, \eqref{e2_pr} hold.

Thus, for every $x_1,x_2\in\widetilde{\mathcal{R}}_k$, there exist $p_{k,1},p_{k,2}$ satisfying \eqref{x_1}, \eqref{x_2} such that $0<p_{k,1},p_{k,2}<1$
. However, the admissible values of $p_{k,1},p_{k,2}$ are restricted due to Assumption \ref{finiteness} and the fact that $\widehat{L}_k(\zeta_k\vert\theta)$ must sum up to one over $\zeta_k$ for all $\theta\in\Theta$, which imply that
\begin{align}
\epsilon\leq p_{k,1},p_{k,2}\leq1-(|\mathcal{Z}_k|-1)\epsilon.
\end{align}
Then, for $x_1,x_2$ given by \eqref{x_1}, \eqref{x_2} we have that $x_1,x_2\in[x^-,x^+]$, where
\begin{align}
&x^-\triangleq\log\frac{\epsilon}{1-(|\mathcal{Z}_k|-1)\epsilon}\\ &x^+\triangleq\log\frac{1-(|\mathcal{Z}_k|-1)\epsilon}{\epsilon}=-x^-.
\end{align}
Thus, the set of admissible values of $x_1,x_2$ given by \eqref{x_1}, \eqref{x_2} that satisfy \eqref{sys11} and \eqref{sys22} are given by
\begin{align}
    \mathcal{R}_k\triangleq\widetilde{\mathcal{R}}_k\cap\{(x_1,x_2):x_1,x_2\in[x^-,x^+]\}.
\end{align}
We call $\mathcal{R}_k$ {\em distortion region} for agent $k$. A geometrical representation of the distortion region is shown in Figs. \ref{diag1}, \ref{diag2}.
\begin{figure}[!h]
\centering
\includegraphics[width=0.5\textwidth]{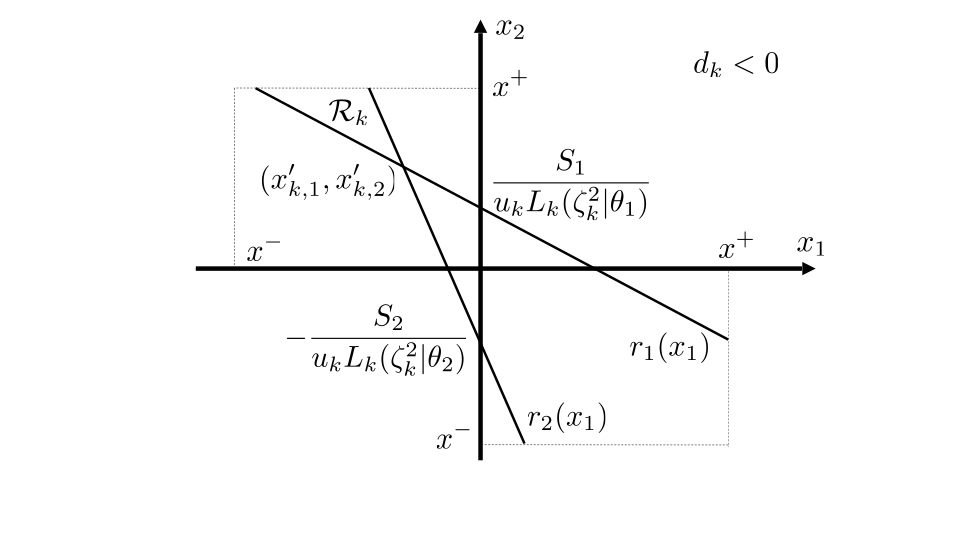}
\caption{Geometrical representation of the distortion region $\mathcal{R}_{k}$ for $d_{k}<0$.}
\label{diag1}
\end{figure}
\begin{figure}[!h]
\centering
\includegraphics[width=0.5\textwidth]{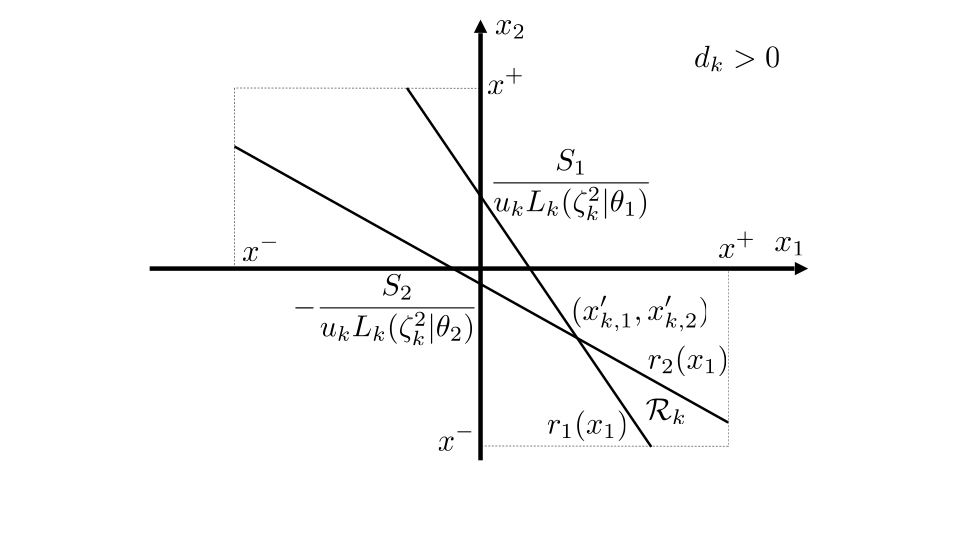}
\caption{Geometrical representation of the distortion region $\mathcal{R}_{k}$ for $d_{k}>0$.}
\label{diag2}
\end{figure}

Now, we will show that there always exists $\epsilon>0$ such that $\mathcal{R}_k$ is non-empty. In order for $\mathcal{R}_k$ to be non-empty, the point $(x_1',x_2')$ must be inside the space of admissible values for $x_1,x_2$. Thus, we have
\begin{align}
\label{c1}
&x_1'<x^+\quad\text{and}\quad x_2'>x^-,\quad\text{if}\quad d_k>0
\\
\label{c2}
&x_1'>x^-\quad\text{and}\quad x_2'<x^+,\quad\text{if}\quad d_k<0.
\end{align}
Both \eqref{c1} and \eqref{c2} are satisfied if the following holds.
\begin{align}
\label{eq_t}
    \epsilon<\min\left\{\frac{1}{e^{|x_1'|}+|\mathcal{Z}_k|-1},\frac{1}{e^{|x_2'|}+|\mathcal{Z}_k|-1}\right\}.
\end{align}
$S_1,S_2$ are finite from Assumption \ref{obs_independence} and thus, $|x_1'|,|x_2'|$ are also finite. Hence, the RHS of \eqref{eq_t} is strictly positive and as a result, we conclude that there always exists $\epsilon>0$ such that $\mathcal{R}_k$ is non-empty.

Finally, we will provide a way to construct $p_{k,1},p_{k,2}$ 
that satisfy \eqref{ineq1b}, \eqref{ineq2b}. In doing so, we will provide a way to select appropriate $x_1,x_2\in\mathcal{R}_k$ and then, we can retrieve some appropriate $p_{k,1},p_{k,2}$, which we know by the previous analysis that always exist. 
Adversary $k$ selects a $x_1$ such that $x^+>x_1>x_1'$ if $d_k>0$ and $x^-<x_1<x_1'$ if $d_k<0$. Then, it can select any $x_2\in \mathcal{R}_k$. The values $(x_1,x_2)$ that lie in $\mathcal{R}_k$ can be represented in a linear relation as
\begin{align}
\label{x2p}
    x_2=\beta (x_1-x_1')+x_2'
\end{align}
with
\begin{align}
    \min_j\Big\{-\frac{L_k(\zeta^1\vert\theta_j)}{L_k(\zeta^2\vert\theta_j)}\Big\}<\beta<\max_j\Big\{-\frac{L_k(\zeta^1\vert\theta_j)}{L_k(\zeta^2\vert\theta_j)}\Big\}.
\end{align}
Finally, after selecting an appropriate $\beta$, we utilize \eqref{e1}, \eqref{e2} to obtain the values of $p_{k,1},p_{k,2}$ that determine the fake likelihood functions with 
$x_1$ satisfying $x^+>x_1>x_1'$ if $d_k>0$ and $x^-<x_1<x_1'$ if $d_k<0$ and $x_2$ is given by \eqref{x2p}. 
\qedsymb
\section{Proof of Corollary \ref{multiagent_cor}}
\label{a24}
We observe from Lemma \ref{Lem_uninformative} that there is no choice of $\widehat{L}_k(\cdot\vert\theta_1),\widehat{L}_k(\cdot\vert\theta_2)$ that simultaneously satisfies \eqref{lem_ratiosm30} for $j=1$ and $j=2$ for an adversary $k$ with non-informative PMFs. Thus, if we set $\widehat{L}_k(\theta_1)=\widehat{L}_k(\theta_2)=L_k(\theta_1)=L_k(\theta_2)$ for all adversaries $k\in\mathcal{N}^m$ with uninformative PMFs, then for these agents we have $R_{k,j}=0$ for all $j\in\{1,2\}$. 

From Theorem \ref{region}, we have that \eqref{lem_ratiosm30} is satisfied for $j=1,2$ for every adversary $k\in\mathcal{N}^m$ with informative PMFs if it follows the construction given by Theorem \ref{region} for $\epsilon$ satisfying \eqref{e_condition}. Let $\mathcal{N}^{m,+}$ denote the set of adversaries with informative PMFs. Then, for the above choices of fake likelihood functions, by utilizing \eqref{lem_ratiosm30} and summing up over all adversaries $k\in\mathcal{N}^{m,+}$ we have
\begin{align}
\label{in1}
    &\sum_{k\in\mathcal{N}^{m,+}}R_{k,1}>|\mathcal{N}^{m,+}|S_1\overset{(a)}\geq S_1\\
    \label{in2}
    &\sum_{k\in\mathcal{N}^{m,+}}R_{k,2}>|\mathcal{N}^{m,+}|S_2\overset{(a)}\geq S_2
\end{align}
for $\epsilon$ given by \eqref{e_multiagent}. In the above inequalities $(a)$ is true due to the fact that $S_1,S_2\geq0$ as they are nonnegative weighted sums of KL divergences. We can include in the summations in the LHS of the inequalities \eqref{in1}, \eqref{in2} the adversaries with uninformative PMFs as well, because for the choice $\widehat{L}_k(\theta_1)=\widehat{L}_k(\theta_2)=L_k(\theta_1)=L_k(\theta_2)$, we have $R_{k,j}=0$ for $j=1,2$. As a result, \eqref{in1}, \eqref{in2} hold as well if we change the summation to be over $\mathcal{N}^m$ instead of $\mathcal{N}^{m,+}$. This implies that  \eqref{inequality_re}, or equivalently \eqref{lem_ratiosm}, is satisfied for $j=1,2$, which implies that the network is mislead for both $\theta^{\star}=\theta_1$ and $\theta^{\star}=\theta_2$.\qedsymb
\section{Proof of Lemma \ref{one_variable}}
\label{a25}
In this result we provide specific conditions under which the construction \eqref{construction10} with one free variable (i.e., $p_{k,1}=p_{k,2}=p$) is insufficient to mislead the network for both states, thus showing that such a construction cannot guarantee that the network will be deceived for both states in the general case. 

Following the proof of Theorem \ref{region}, the region $\mathcal{R}_{k}$ is characterized by the lines \eqref{line1}, \eqref{line2} with intersection point at $(x_1',x_2')$ with $x_1'<0,x_2'>0$ if $d_{k}<0$ and $(x_1',x_2')$ with $x_1'>0,x_2'<0$ if $d_{k}>0$ (for more details {\em see} proof of Theorem \ref{region}). By restricting the choice to one free variable (i.e., $p_{k,1}=p_{k,2}=p$), we restrict the values of $x_1,x_2$ to be given by the line
\begin{align}
    x_2=-x_1
\end{align}
for $x_1,x_2$ satisfying \eqref{x_1}, \eqref{x_2}. 

We also observe that the intersection points of lines $r_1(x_1),r_2(x_1)$ (defined in \eqref{line1}, \eqref{line2}), which define $\mathcal{R}_k$, with the $x_2$ axis satisfy $r_1(0)>0,r_2(0)<0$. Thus, $x_2=-x_1$ does not intersect $\mathcal{R}_{k}$ in the following cases:
\begin{align}
    &-\frac{L_{k}(\zeta^1\vert\theta_1)}{L_{k}(\zeta^2\vert\theta_1)}<-1,\quad \text{if} \quad d_{k}<0\\
    &-\frac{L_{k}(\zeta^1\vert\theta_2)}{L_{k}(\zeta^2\vert\theta_2)}<-1, \quad\text{if} \quad d_{k}>0.
\end{align}
This implies that in this case there are no choices of $x_1,x_2\in\mathcal{R}_k$ such that $x_2=-x_1$, which means that 
there are no choices of $p_{k,1},p_{k,2}$ such that $p_{k,1}=p_{k,2}=p$ mislead the network for both $\theta^{\star}=\theta_1$ and $\theta^{\star}=\theta_2$.\qedsymb
\section{Proof of Lemma \ref{uniform_rem}}
\label{a3}
Since adversary $k\in\mathcal{N}^m$ has informative PMFs, then there exists at least a $\zeta'$ such that:
\begin{align}
\label{eqd1}
    L_{k}(\zeta'\vert\theta_1)>L_{k}(\zeta'\vert\theta_2)
\end{align}
or
\begin{align}
\label{eqd2}
    L_{k}(\zeta'\vert\theta_1)<L_{k}(\zeta'\vert\theta_2).
\end{align}
We focus on the first case (i.e., \eqref{eqd1}) and the second case (i.e., \eqref{eqd2}) follows accordingly. Since \eqref{eqd1} holds, $\zeta'\in\mathcal{D}^1_k$. Let us assume that there exists no $\zeta\in\mathcal{Z}_{k}$ such that $\zeta\in\mathcal{D}^2_k$, meaning that there exists no $\zeta\in\mathcal{Z}_{k}$ such that $L_{k}(\zeta\vert\theta_1)<L_{k}(\zeta\vert\theta_2)$. Then,
\begin{align}
    \sum_{\zeta}L_{k}(\zeta\vert\theta_1)=1>\sum_{\zeta}L_{k}(\zeta\vert\theta_2)
\end{align}
which cannot hold, since $L(\cdot\vert\theta)$, $\theta\in\Theta$ is a PMF. Thus, there exists at least one observation $\zeta\in\mathcal{Z}_k$ such that $L_{k}(\zeta\vert\theta_1)<L_{k}(\zeta\vert\theta_2)$, which means that there exists at least one observation $\zeta\in\mathcal{Z}_k$ that belongs to $\mathcal{D}^2_{k}$. Following the same reasoning for the other case (\eqref{eqd2} holds), we arrive at the same conclusion. Thus, both sets $\mathcal{D}^1_k$ and $\mathcal{D}^2_{k}$ are always non-empty for an agent with informative PMFs.\qedsymb
\section{Proof of Theorem \ref{opt_attackth}}
\label{a32}
\noindent 
The objective function \eqref{opt_altao} yields
\begin{align}
    \label{exp_costao_pr}
    &\frac{1}{2}\sum_{\theta\in\Theta}\mathcal{C}(\theta)=\frac{1}{2}\Big[\sum_{k\in\mathcal{N}^n}u_kD_{KL}\Big(L_k(\theta_1))||L_k(\theta_2)\Big)\nonumber\\
    &+\sum_{k\in\mathcal{N}^m}u_{k}\sum_{\zeta} L_{k}(\zeta|\theta_1)\log\frac{\widehat{L}_k(\zeta|\theta_1)}{\widehat{L}_k(\zeta|\theta_2)}
    \nonumber\\
    &+\sum_{k\in\mathcal{N}^n}u_kD_{KL}\Big(L_k(\theta_2)||L_k(\theta_1)\Big)\nonumber\\
    &+\sum_{k\in\mathcal{N}^m}u_{k}\sum_{\zeta} L_{k}(\zeta|\theta_2)\log\frac{\widehat{L}_k(\zeta|\theta_2)}{\widehat{L}_k(\zeta|\theta_1)}\Big]\nonumber\\
    &=\frac{1}{2}\Bigg(\sum_{k\in\mathcal{N}^n}u_kD_{KL}\Big(L_k(\theta_1)||L_k(\theta_2)\Big)\nonumber\\
    &+\sum_{k\in\mathcal{N}^n}u_kD_{KL}\Big(L_k(\theta_2)||L_k(\theta_1)\Big)\nonumber\\
    &+\sum_{k\in\mathcal{N}^m}u_{k}\sum_{\zeta}\Big(( L_{k}(\zeta|\theta_1)-L_{k}(\zeta|\theta_2))\log\widehat{L}_k(\zeta|\theta_1)\nonumber\\
    &-(L_{k}(\zeta|\theta_1)-L_{k}(\zeta|\theta_2))\log\widehat{L}_k(\zeta|\theta_2)\Big)\Bigg).
\end{align}
First, let us define
\begin{align}
Z_{k}(\zeta)\triangleq L_{k}(\zeta|\theta_1)-L_{k}(\zeta|\theta_2),\quad\zeta\in\mathcal{Z}_{k}.
\end{align}
We observe that minimizing 
\eqref{exp_costao_pr} reduces to the following minimization problem since these are the only terms depending on $\widehat{L}_{k}(\cdot\vert\theta_1)$ and $\widehat{L}_{k}(\cdot\vert\theta_2)$:
{\begin{small}{
\begin{align}
\label{exp_costb}
    &\min_{\widehat{L}_{k}(\cdot\theta_1),\widehat{L}_{k}(\cdot\vert\theta_2)}\sum_{k\in\mathcal{N}^m}u_{k}\sum_{\zeta\in\mathcal{Z}_k}\Big[Z_{k}(\zeta)\log\widehat{L}_k(\zeta|\theta_1)\nonumber\\
    &-Z_{k}(\zeta)\log\widehat{L}_k(\zeta|\theta_2)\Big]\\
    &\text{s.t.} \quad\epsilon\leq\widehat{L}_{k}(\zeta|\theta_1),\quad\quad\quad\quad\forall\zeta\in\mathcal{Z}_{k},\,\forall k\in\mathcal{N}^m\\
    &\quad\quad\,\epsilon\leq\widehat{L}_{k}(\zeta|\theta_2),\quad\quad\quad\quad\forall\zeta\in\mathcal{Z}_{k},\,\forall k\in\mathcal{N}^m\\
    &\quad\quad\,\sum_{\zeta\in\mathcal{Z}_{k}}\widehat{L}_{k}(\zeta|\theta_1)=1,\,\,\,\quad\forall k\in\mathcal{N}^m\\
    &\quad\quad\,\sum_{\zeta\in\mathcal{Z}_{k}}\widehat{L}_{k}(\zeta|\theta_2)=1,\quad\,\,\,\forall k\in\mathcal{N}^m.
\end{align}
}\end{small}}\noindent
The minimization problem is {\em separable} across agents $k\in\mathcal{N}^m$. 
Hence, we can compute the optimal $\widehat{L}_{k}(\cdot)$ for each malicious agent $k\in\mathcal{N}^m$ independently. Thus, we have
{\begin{small}{
\begin{align}
\label{min_pr}
    &\min_{\widehat{L}_{k}(\cdot\vert\theta_1),\widehat{L}_{k}(\cdot\vert\theta_2)}\sum_{\zeta\in\mathcal{Z}_k}\Big[Z_{k}(\zeta)\log\widehat{L}_k(\zeta|\theta_1)-Z_{k}(\zeta)\log\widehat{L}_k(\zeta|\theta_2)\Big]
    \\
    \label{inequality1}
    &\text{s.t.}\quad \epsilon\leq\widehat{L}_{k}(\zeta|\theta_1),\quad\forall \zeta\in\mathcal{Z}_{k}\\
    \label{inequality2}
    &\quad\quad\,\epsilon\leq\widehat{L}_{k}(\zeta|\theta_2),\quad\forall \zeta\in\mathcal{Z}_{k}\\
    \label{equality_con1}
    &\quad\quad\,\sum_{\zeta\in\mathcal{Z}_{k}}\widehat{L}_{k}(\zeta|\theta_1)=1\\
    \label{equality_con2}
    &\quad\quad\,\sum_{\zeta\in\mathcal{Z}_{k}}\widehat{L}_{k}(\zeta|\theta_2)=1.
\end{align}
}\end{small}}\noindent
We further observe from \eqref{min_pr}-\eqref{equality_con2} that the minimization is separable with respect to $\widehat{L}_k(\cdot\vert\theta_1)$ and $\widehat{L}_k(\cdot\vert\theta_2)$. 
As a result, we can decompose the optimization problem \eqref{min_pr} as follows:
\begin{align}
\label{opt_prao2}
    &\min_{\widehat{L}_{k}(\theta_1)}\sum_{\zeta\in\mathcal{Z}_k}Z_{k}(\zeta)\log\widehat{L}_{k}(\zeta|\theta_1)\nonumber\\
    &-\max_{\widehat{L}_{k}(\theta_2)}\sum_{\zeta\in\mathcal{Z}_k}Z_{k}(\zeta)\log\widehat{L}_{k}(\zeta|\theta_2)
\end{align}
subject to constraints \eqref{inequality1}-\eqref{equality_con2}.

We focus on the minimization problem and the same rationale applies to the maximization problem.  
Let us define
\begin{align}
&J_{k}(\widehat{L}_{k}(\cdot\vert\theta_1))=\sum_{\zeta\in\mathcal{Z}_{k}}Z_{k}(\zeta)\log\widehat{L}_{k}(\zeta|\theta_1)\\
&J^+_{k}(\widehat{L}_{k}(\cdot\vert\theta_1))=\sum_{\zeta\in\mathcal{D}^1_k}Z_{k}(\zeta)\log\widehat{L}_{k}(\zeta|\theta_1)\\
&J^-_{k}(\widehat{L}_{k}(\cdot\vert\theta_1))=\sum_{\zeta\in\mathcal{D}^2_k}Z_{k}(\zeta)\log\widehat{L}_{k}(\zeta|\theta_1)
\end{align}
where 
$\mathcal{D}^1_{k}=\{\zeta\in\mathcal{Z}_k:Z_{k}(\zeta)\geq0\}$, and $\mathcal{D}^2_{k}=\mathcal{Z}_{k}\setminus\mathcal{D}^1_{k}$. 
Then, the minimization problem in \eqref{opt_prao2} can be rewritten as
\begin{align}
\label{min_praltb}
    &\min_{\widehat{L}_{k}(\cdot\vert\theta_1)} J^+_{k}(\widehat{L}_{k}(\cdot\vert\theta_1))+J^-_{k}(\widehat{L}_{k}(\cdot\vert\theta_1))
    \\
    \label{inequality1b}
    &\text{s.t.}\quad\epsilon\leq\widehat{L}_{k}(\zeta|\theta_1),\quad\forall \zeta\in\mathcal{Z}_{k}\\
    \label{equality_con1b}
    &\quad\quad\,\sum_{\zeta\in\mathcal{Z}_{k}}\widehat{L}_{k}(\zeta|\theta_1)=1.
\end{align}
Let us introduce $\sigma\triangleq\sum_{\zeta\in\mathcal{D}^1_k}\widehat{L}_{k}(\zeta|\theta_1)$. The above minimization problem can be equivalently rewritten as
\begin{align}
\label{min_pralte}
    &\min_{\widehat{L}_{k}(\cdot\vert\theta_1),\sigma} J^+_{k}(\widehat{L}_{k}(\cdot\vert\theta_1),\sigma)+ J^-_{k}(\widehat{L}_{k}(\cdot\vert\theta_1),\sigma)
    \\
    \label{inequality1c}
    &\text{s.t.}\quad\epsilon\leq\widehat{L}_{k}(\zeta|\theta_1),\quad\forall \zeta\in\mathcal{Z}_{k}\\
    \label{equality_con1c}
    &\quad\quad\,\sum_{\zeta\in\mathcal{D}^1_{k}}\widehat{L}_{k}(\zeta|\theta_1)=\sigma\\
    &\quad\quad\,\sum_{\zeta\in\mathcal{D}^2_{k}}\widehat{L}_{k}(\zeta|\theta_1)=1-\sigma.
\end{align}
We will first show by contradiction that the optimal value for $\sigma$ is $\sigma^{\star}=|\mathcal{D}^1_{k}|\epsilon$ (note that this is the minimum admissible value for $\sigma$ due to constraints \eqref{inequality1c}). First, note that $\sigma^{\star}=|\mathcal{D}^1_{k}|\epsilon\iff\widehat{L}^{\star}_{k}(\zeta\vert\theta_1)=\epsilon$ for all $\zeta\in\mathcal{D}^1_{k}$.

Let us suppose that $\sigma^{\star}$ is not optimal. This means that there exists a $\sigma'>\sigma^{\star}$ such that $\sigma'$ is optimal and an optimal $\widehat{L}_{k}'(\cdot)$ such that $J_{k}(\widehat{L}_{k}'(\cdot),\sigma')$ 
attains its minimum value. Then,
\begin{align}
    \sigma'>\sigma^{\star}\implies \widehat{L}_{k}'(\tilde{\zeta}\vert\theta_1)>
    \epsilon, \text{for some } \tilde{\zeta}\in\mathcal{D}^1_{k}.
\end{align}
Let us denote $\widehat{L}_{k}'(\tilde{\zeta}\vert\theta_1)=\epsilon+\tilde{\epsilon}$ for some $\tilde{\epsilon}>0$. In the following, we will construct a counterexample to show that there always exist  $\widehat{L}_{k}''(\cdot),\sigma''$ 
such that $J_{k}(\widehat{L}_{k}''(\cdot),\sigma')<J_{k}(\widehat{L}_{k}'(\cdot),\sigma')$ with $\sigma''<\sigma'$. Thus, $\sigma'$ cannot be optimal and as a result, the optimal value of $\sigma$ should be $\sigma^{\star}=|\mathcal{D}^1_{k}|\epsilon$. We construct $\widehat{L}_{k}''(\cdot\vert\theta_1)$ as follows:
\begin{align}
\label{construction}
    \widehat{L}_{k}''(\zeta\vert\theta_1)=\begin{cases}\widehat{L}_{k}'(\tilde{\zeta}\vert\theta_1)-\tilde{\epsilon},&\text{if } \zeta=\tilde{\zeta},\\ 
    \widehat{L}_{k}'(\widehat{\zeta}\vert\theta_1)+\tilde{\epsilon},&\text{for some } \widehat{\zeta}\in\mathcal{D}^2_{k},\\
    \widehat{L}_{k}'(\zeta\vert\theta_1), &\forall \zeta\in\mathcal{Z}_k\setminus\{\tilde{\zeta},\widehat{\zeta}\}.
    \end{cases}
\end{align}
Note that the above construction satisfies the constraints of the optimization problem. Also, note that $\sigma''<\sigma'$. Then, since we assumed that $J_{k}(\widehat{L}_{k}'(\cdot\vert\theta_1),\sigma')$ is optimal, we have
\begin{align}
\label{contradiction1}
    &J_{k}(\widehat{L}_{k}'(\cdot\vert\theta_1),\sigma')\leq J_{k}(\widehat{L}''_{k}(\cdot\vert\theta_1),\sigma'')\nonumber\\
    &\iff
    J^+_{k}(\widehat{L}_{k}'(\cdot\vert\theta_1),\sigma')+J^-_{k}(\widehat{L}_{k}'(\cdot\vert\theta_1),\sigma')
    \nonumber\\
    &\leq J^+_{k}(\widehat{L}''_{k}(\cdot\vert\theta_1),\sigma'')+J^-_{k}(\widehat{L}''_{k}(\cdot\vert\theta_1),\sigma'')\nonumber\\
    &\overset{(a)}\iff
    Z_{k}(\tilde{\zeta})\log\widehat{L}_{k}'(\tilde{\zeta}\vert\theta_1)+Z_{k}(\widehat{\zeta})\log\widehat{L}_k'(\widehat{\zeta}\vert\theta_1)\nonumber\\
    &\leq Z_{k}(\tilde{\zeta})\log(\widehat{L}_{k}'(\tilde{\zeta}\vert\theta_1)-\tilde{\epsilon})+Z_{k}(\widehat{\zeta})\log(\widehat{L}_{k}'(\widehat{\zeta}\vert\theta_1)+\tilde{\epsilon})
\end{align}
where $(a)$ is true because $Z_{k}(\zeta)\log\widehat{L}_k'(\zeta\vert\theta_1)=Z_{k}(\zeta)\log\widehat{L}_{k}''(\zeta\vert\theta_1)$ for all $\zeta\neq\tilde{\zeta},\widehat{\zeta}$ by construction of $\widehat{L}_{k}''(\cdot\vert\theta_1)$ ({\em see} \eqref{construction}). We observe that \eqref{contradiction1} is a contradiction, because $\log(\cdot)$ is monotonically increasing, $Z_{k}(\tilde{\zeta})$ is non-negative and $Z_{k}(\widehat{\zeta})$ is negative. Thus, the optimal value of $\sigma$ is $\sigma^{\star}=|\mathcal{D}^1_{k}|\epsilon$. This implies that the optimal value of $\widehat{L}_{k}(\zeta\vert\theta_1)$ for every $\zeta\in\mathcal{D}^1_{k}$ is
\begin{align}
\label{solution_min1}
    \widehat{L}^{\star}_k(\zeta|\theta_1)=\epsilon,\quad\forall \zeta\in\mathcal{D}^1_{k}.
\end{align}
Then, we proceed to find $\widehat{L}^{\star}(\zeta|\theta_1)$, for every $\zeta\in\mathcal{D}^2_{k}$.
\begin{align}
\label{min_praltd}
    &\min_{\widehat{L}_{k}(\cdot\vert\theta_1)}J^-_{k}(\widehat{L}_{k}(\cdot\vert\theta_1),\sigma^{\star})
    \\
    \label{inequality1d}
    &\text{s.t.}\quad\epsilon\leq\widehat{L}_{k}(\zeta\vert\theta_1),\quad\forall \zeta\in\mathcal{D}^2_k\\
    \label{equality_con1d}
    &\quad\quad\,\sum_{\zeta\in\mathcal{D}^2_k}\widehat{L}_{k}(\zeta|\theta_1)=1-|\mathcal{D}^1_{k}|\epsilon.
\end{align}
$J^-_{k}(\widehat{L}_{k}(\cdot\vert\theta_1),\sigma^{\star})$ is a convex function (as it is a non-negative sum of convex functions, due to the fact that $Z_k(\zeta)<0$ for all $\zeta\in\mathcal{D}^2_k$) and since the constraints are affine, Slater's condition is satisfied and hence, the KKT conditions are necessary and sufficient conditions for the solution of the optimization problem \cite{Boyd}. We introduce the Langrange multipliers $\lambda_{\zeta}$, for every $\zeta\in\mathcal{D}^2_k$ for the inequality constraints \eqref{inequality1d} and $v$ for the equality constraint \eqref{equality_con1d}, respectively. The KKT conditions are given by
\begin{align}
\label{eq1}
    &\frac{\partial\mathcal{L}}{\partial\widehat{L}_{k}(\zeta|\theta_1)}=\frac{L_{k}(\zeta|\theta_1)-L_{k}(\zeta|\theta_2)}{\widehat{L}_{k}(\zeta|\theta_1)}-\lambda_{\zeta}+v=0\\
    \label{eq3}
    &\lambda_{\zeta}(\widehat{L}_{k}(\zeta|\theta_1)-\epsilon)=0\\
    \label{eq4}
    &\sum_{\zeta}\widehat{L}_{k}(\zeta|\theta_1)=1-|\mathcal{D}^1_{k}|\epsilon\\
    &\lambda_{\zeta}\geq0
\end{align}
Let $\lambda_{\zeta}=0$ for all $\zeta\in\mathcal{D}^2_{k}$. Then,  
\eqref{eq1} yields
\begin{align}
    \widehat{L}_{k}(\zeta|\theta_1)=-\frac{L_{k}(\zeta|\theta_1)-L_{k}(\zeta|\theta_2)}{v}
\end{align}
Substituting to \eqref{eq4} we get
\begin{align}
    v=-\frac{\sum_{\zeta\in\mathcal{D}^2_{k}}L_{k}(\zeta|\theta_1)-L_{k}(\zeta|\theta_2)}{1-|\mathcal{D}^1_{k}|\epsilon}
\end{align}
Thus,
\begin{align}
\label{solution_min}
    \widehat{L}^{\star}_{k}(\zeta|\theta_1)=\frac{(L_{k}(\zeta|\theta_1)-L_{k}(\zeta|\theta_2))(1-|\mathcal{D}^1_{k}|\epsilon)}{\sum_{\zeta\in\mathcal{D}^2_{k}}L_{k}(\zeta|\theta_1)-L_{k}(\zeta|\theta_2)},\,\forall\zeta\in\mathcal{D}^2_{k}.
\end{align}
This concludes the proof for the optimal optimal values of $\widehat{L}_{k}(\zeta|\theta_1)$, $\zeta\in\mathcal{Z}_k$. Working in the same way for the maximization problem in \eqref{opt_prao2}, we obtain the optimal values for $\widehat{L}_{k}(\zeta|\theta_2)$, $\zeta\in\mathcal{Z}_k$, as well.\qedsymb
\section{Proof of Theorem \ref{efficiciency_Th3}}
\label{a33}
We introduce the following variables:
\begin{align}
\label{barx1}
    &\bar{x}_1\triangleq\log\frac{\epsilon}{1-|\mathcal{D}^1_{k}|\epsilon}\\
    \label{barx2}
    &\bar{x}_2\triangleq\log\frac{1-|\mathcal{D}^2_{k}|\epsilon}{\epsilon}.
\end{align}
By replacing $\log\frac{\epsilon}{1-|\mathcal{D}^1_{k}|\epsilon}$ and $\log\frac{1-|\mathcal{D}^2_{k}|\epsilon}{\epsilon}$ with $\bar{x}_1$ and $\bar{x}_2$, respectively, in \eqref{app_c1} and \eqref{app_c2}, we get the following system of inequalities:
\begin{align}
\label{cs11}
    &\bar{x}_2\hspace{-1mm}>\hspace{-1mm}\frac{S_1-u_{k}(c_{k,1}+b_{k,1})}{u_{k}\xi_{k,1}}-\frac{\sigma_{k,1}}{\xi_{k,1}}\bar{x}_1\\
    \label{cs22}
    &\bar{x}_2\hspace{-1mm}<\hspace{-1mm}-\frac{S_2+u_{k}(c_{k,2}+b_{k,2})}{u_{k}\xi_{k,2}}-\frac{\sigma_{k,2}}{\xi_{k,2}}\bar{x}_1
\end{align}
Now, if we allow $\bar{x}_1,\bar{x}_2$ to take arbitrary values in $\mathbb{R}$ instead of satisfying \eqref{barx1}, \eqref{barx2}, then there always exist $\bar{x}_1,\bar{x}_2$ that satisfy \eqref{cs11}, \eqref{cs22} for $\frac{\sigma_{k,1}}{\xi_{k,1}}\neq\frac{\sigma_{k,2}}{\xi_{k,2}}$ (which is always true for an adversary with informative PMFs). Let this space of values of $\bar{x}_1,\bar{x}_2\in\mathbb{R}$ that satisfy \eqref{cs11} and \eqref{cs22} be denoted as $\bar{\mathcal{R}}_k$. The lines that define this region $\bar{\mathcal{R}}_k$ are given by
\begin{align}
\label{x21line}
    &\bar{x}_2=\frac{S_1-u_{k}(c_{k,1}+b_{k,1})}{u_{k}\xi_{k,1}}-\frac{\sigma_{k,1}}{\xi_{k,1}}\bar{x}_1\\
    \label{x22line}
    &\bar{x}_2=-\frac{S_2+u_{k}(c_{k,2}+b_{k,2})}{u_{k}\xi_{k,2}}-\frac{\sigma_{k,2}}{\xi_{k,2}}\bar{x}_1
\end{align}
where $\bar{x}_1,\bar{x}_2\in\mathbb{R}$.

However, $\bar{x}_1,\bar{x}_2$ cannot take arbitrary values and we see from \eqref{barx1}, \eqref{barx2}, that their relation is given by
\begin{align}
\label{relation_x1x2}
    \bar{x}_2=\log(e^{-\bar{x}_1}+|\mathcal{D}^1_{k}|-|\mathcal{D}^2_{k}|).
\end{align}
Note that the argument of $\log(e^{-\bar{x}_1}+|\mathcal{D}^1_{k}|-|\mathcal{D}^2_{k}|)$ is always positive for $0<\epsilon<\frac{1}{|\mathcal{D}^j|}$, $j=1,2$, which is ensured due to \eqref{condition_e}.

Next, we answer the question of whether there are sufficiently small values of $\epsilon$ such that $\bar{x}_1,\bar{x}_2$ given by \eqref{barx1}, \eqref{barx2} satisfy \eqref{cs11} and \eqref{cs22}, or equivalently $\bar{x}_1,\bar{x}_2\in\bar{\mathcal{R}}_k$. If the answer is positive, then this implies that for sufficiently small values of $\epsilon$, the strategy presented in Theorem \ref{opt_attackth} misleads the network for both $\theta^{\star}=\theta_1$ and $\theta^{\star}=\theta_2$.

Under the strategy presented in Theorem \ref{opt_attackth} the values of $\bar{x}_1,\bar{x}_2$ are represented by \eqref{relation_x1x2}, which approaches the line $\bar{x}_2=-\bar{x}_1$, as $\bar{x}_1$ gets smaller. 
We observe that if the slopes of the lines \eqref{x21line}, \eqref{x22line} satisfy
\begin{align}
    &-\frac{\sigma_{k,1}}{\xi_{k,1}}>-1\iff\sigma_{k,1}<\xi_{k,1}\\
    &-\frac{\sigma_{k,2}}{\xi_{k,2}}<-1\iff\sigma_{k,2}>\xi_{k,2}
\end{align}
then, the region $\bar{\mathcal{R}}_k$ always contains $\bar{x}_1,\bar{x}_2$ that satisfy $\bar{x}_2=-\bar{x}_1$. Since \eqref{relation_x1x2} approaches the line $\bar{x}_2=-\bar{x}_1$ as $\bar{x}_1$ gets smaller, there always exist $\bar{x}_1,\bar{x}_2$ that are contained in $\bar{\mathcal{R}}_k$ for sufficiently small $\bar{x}_1$. Finally, $\bar{x}_1$ is monotonically increasing in $\epsilon$, which implies that there is always a sufficiently small $\epsilon^{\star}$ such that $\bar{x}_1$ and $\bar{x}_2$ given by \eqref{barx1} and \eqref{barx2}, respectively, satisfy \eqref{cs11} and \eqref{cs22} for every $\epsilon<\epsilon^{\star}$.\qedsymb



\ifCLASSOPTIONcaptionsoff
  \newpage
\fi

\end{document}